\documentclass[11 pt, a4paper]{article}
\usepackage[hmargin=1.25in,vmargin=1.25in]{geometry}
\usepackage{amsmath,amssymb,amsthm,mathrsfs,amsfonts,mathtools}
\usepackage{tabularx,siunitx,booktabs,threeparttable,multirow}
\usepackage[dvipsnames,svgnames, x11names]{xcolor}
\usepackage[titletoc]{appendix}                   
\usepackage{enumitem,setspace,titlesec,natbib}
\usepackage[hyperindex,breaklinks]{hyperref}      
\usepackage[open,openlevel=2,numbered]{bookmark}  
\usepackage{algorithm}
\usepackage{algorithmic}
\usepackage{bm}
\usepackage{tikz,pgfplots} 
\pgfplotsset{compat=1.11}
\usetikzlibrary{arrows,plotmarks}
\usepgfplotslibrary{groupplots}
\usetikzlibrary{intersections, backgrounds}
\usepgfplotslibrary{fillbetween}

\onehalfspacing
\setlength{\bibsep}{0.25pt}
\makeatletter
\renewcommand\paragraph{\@startsection{paragraph}{4}{\z@}%
            {-2.5ex\@plus -1ex \@minus -.25ex}
            {1.25ex \@plus .25ex}             
            {\normalfont\normalsize\bfseries}}
\makeatother
\parskip = 1.2ex plus 0.5 ex minus0.2 ex
\hypersetup{colorlinks, linkcolor = {NavyBlue}, citecolor = {NavyBlue}, urlcolor ={NavyBlue}}


\newtheorem{thm}{Theorem}[section]
\newtheorem{cor}{Corollary}[section]
\newtheorem{lem}{Lemma}[section]
\newtheorem{pro}{Proposition}[section]
\newtheorem{ass}{Assumption}[section]
\theoremstyle{definition}
\newtheorem{defn}{Definition}[section]

\newtheorem{rem}{Remark}[section]

\begin{document}
\pdfbookmark[1]{Title}{title}
\title{Debiased Bayesian Inference for High-dimensional Regression Models}
\author{Qihui Chen\thanks{School of Management and Economics and Shenzhen Finance Institute, The Chinese University of Hong Kong, Shenzhen (CUHK-Shenzhen); qihuichen@cuhk.edu.cn}\\CUHK-Shenzhen
\and
Zheng Fang\thanks{Department of Economics, Emory University; zheng.fang@emory.edu} \\Emory University
\and
Ruixuan Liu\thanks{CUHK Business School, Chinese University of Hong Kong; ruixuanliu@cuhk.edu.hk} \\CUHK}

\date{\today}
\maketitle
\begin{abstract}
	There has been significant progress in Bayesian inference based on sparsity-inducing (e.g., spike-and-slab and horseshoe-type) priors for high-dimensional regression models. The resulting posteriors, however,  in general do not  possess desirable frequentist properties, and the credible sets thus cannot serve as valid confidence sets even asymptotically. We introduce a novel debiasing approach that corrects the bias for the entire Bayesian posterior distribution. We establish a new Bernstein-von Mises theorem that guarantees the frequentist validity of the debiased posterior. We demonstrate the practical performance of our proposal through Monte Carlo simulations and two empirical applications in economics.
\end{abstract}

\newpage

\section{Introduction}
Applied researchers now routinely work with regression models that feature a large number of covariates. A primary inferential goal in econometrics is to estimate the ceteris paribus effect of a specific variable while controlling for other variables \citep{BelloniChernoHansen2013HD, BelloniChernozhukovCherverikovWei2018Many}. The prevailing practice interprets the coefficient on a regressor as a causal effect, conditional on the included controls. As the plausibility of conditional unconfoundedness is often argued using a large set of covariates, practitioners have increasingly embraced high-dimensional regression models. This setting has been extensively studied, predominantly using frequentist methods. 

Bayesian inference, on the other hand, has long been valued for its coherent framework for handling uncertainty in statistical analysis. As highlighted by \citet{Rubin1984Applied}, Bayesian methods provide direct answers to many empirical questions by quantifying uncertainty about unknown parameters conditional on the observed data.\footnote{In discussing the commonly-used rule-of-thumb confidence interval, \cite{Rubin1984Applied} states that ``the interval is-at least in my experience-nearly  always interpreted Bayesianly, that is, as providing a fixed observed interval in  which the unknown (parameter of interest) $\mu$ lies with 95\% probability.'' As advocated by \cite{Imbens2021Pvalue}, it may even be preferable to adopt Bayesian inference in cases where Bayesian and frequentist procedures lead to different conclusions.} This appeal has grown among applied researchers, who often seek probabilistic statements about particular parameters of interest given the specific dataset at hand. Bayesian posterior distributions conveniently encapsulate both sampling variation and parametric uncertainty, unifying estimation and inference in a way that aligns well with the needs of empirical work.


In addressing the challenges posed by high dimensionality, recent methodological advances have substantially expanded the Bayesian toolkit for regression models with many covariates. Notable progress includes the development of spike-and-slab priors \citep{MitchellBeauchamp_BayesianVariable_1988,GeorgeMcCulloch_VariableSelection_1993} and horseshoe priors \citep{CarvalhoPolsonScott2010Horseshoe}, as well as scalable approximate Bayesian inference techniques \citep{RaySzabo2022Variational}. These innovations have attracted a growing interest in economics, as illustrated by \citet{GiannoneLenzaPrimiceri2021Sparsity}. However, while much of this literature focuses on model selection and estimation, relatively less attention has been devoted to delivering valid inference for a particular parameter of interest in the presence of high-dimensional controls---a central need in many empirical applications. This paper aims to fill this gap.

Specifically, we develop a novel inferential procedure for high-dimensional linear regression that introduces a debiasing step for the Bayesian posterior distribution of the parameter of interest. Building on the concept of debiased point estimators, which is well established in the frequentist literature, we instead debias the entire posterior distribution, yielding a new debiased Bayesian approach. Our method is tailored to high-dimensional settings and constructs credible sets from the debiased posterior, while remaining faithful to Bayesian principles by conditioning on the observed data.

The core theoretical contribution of our work is the establishment of new Bernstein-von Mises (BvM) results for the debiased Bayesian procedure. Our framework allows the number of covariates $p$ to exceed the sample size $n$, with $p$ growing exponentially in $n$. These results formally justify the asymptotic normality of the debiased posterior and ensure that credible sets achieve correct frequentist coverage under repeated sampling, thereby helping to bridge the gap between Bayesian and frequentist inference. Importantly, our results show that, after debiasing, Bayesian credible sets can match the performance of debiased frequentist methods---such as double machine learning \citep{ChernozhukovChetverikovDemirerDufloHansenNeweyRobins2018Double}---in high-dimensional regimes. This effort also echoes an important point made by \citet{Rubin_BayesianBootstrap_1981} regarding the assessment of Bayesian procedures under repeated sampling. Our framework is notably general. It is agnostic to the choice of prior for the regression coefficients and permits flexible selection of pilot estimators for the precision matrix. The specific implementations we present in Section~\ref{Sec:42} illustrate how the method can be practically adapted to high-dimensional settings. This flexibility allows the procedure to be tailored to a wide range of empirical applications without sacrificing computational scalability.

Another key advantage of our approach is its compatibility with computationally efficient approximate Bayesian inference, such as variational Bayes \citep{RaySzabo2022Variational}. Traditional spike-and-slab priors, while theoretically appealing, often entail substantial computational cost due to the reliance on Markov chain Monte Carlo (MCMC) sampling. In contrast, variational Bayes delivers approximate posteriors within seconds, while still attaining optimal contraction rates in high-dimensional settings. However, without debiasing, credible sets constructed from such approximate posteriors are not guaranteed to achieve correct frequentist coverage. By incorporating our debiasing procedure, the variational Bayes posterior becomes a principled tool for valid uncertainty quantification.

Our approach parallels an important line of work in the frequentist literature, in particular the debiasing of LASSO-type estimators \citep{JavanmardMontanari2014CI,VandeGeer_OnAsymptotically_2014,ZhangZhang2014CI}. The LASSO estimator itself can be viewed as the posterior mode under a Laplace prior, yet it typically exhibits bias and therefore requires post-estimation correction. Whereas debiased frequentist methods adjust a point estimator and then rely on its asymptotic properties for inference, our Bayesian debiasing method extends this idea to the entire posterior distribution. Because the debiasing step is applied at the posterior level, we develop new technical arguments to establish its large-sample properties. Monte Carlo simulations show that our procedure delivers substantial improvements over standard (uncorrected) Bayesian methods and remains competitive with debiased frequentist procedures in terms of coverage and estimation accuracy. In particular, when regression coefficients are large, our method offers pronounced advantages over debiased frequentist approaches, providing practitioners with a powerful tool for empirical analysis in high-dimensional settings.



Our work contributes to the active literature on the frequentist validity of Bayesian inference in high-dimensional regression. Early contributions by \citet{Ghosal1999HDL} and \citet{Bontemps2011BvM} established asymptotic normality of the posterior in Gaussian linear regression when the number of covariates grows slowly with the sample size. However, these results do not accommodate sparsity and require the ambient dimension to remain smaller than the sample size (see \citet[p.~2563]{Bontemps2011BvM}). More recently, \citet{Castilloetal_BayesianSparse_2015} analyzed Bayesian posteriors under spike-and-slab priors, establishing mixed normality under strong signal conditions; see also \citet{WuNarisettyYang2023ConditionalBayes} for extensions. \citet{Yang2019HDL} introduced a debiasing approach based on reparameterization and targeted prior modification, assigning a data-dependent Gaussian prior to the parameter of interest. Building on this idea, \citet{Castilloetal2024VB} incorporated mean-field approximation strategies to distinguish between high-dimensional nuisance parameters and parameters of interest.
In contrast, our approach attains debiasing via a post-processing step applied to the posterior, preserving standard Bayesian modeling and computation and allowing practitioners to directly leverage existing Bayesian methods for high-dimensional regression.

Our paper also connects with broader developments in semiparametric Bayesian inference and correction methods for posterior or prior distributions \citep{RayVaart2020Semi,BreunigLiuYu2022DR,BreunigLiuYu2024DiD,BreunigLiuYu2025Bart,YiuFongHolmesRousseau2023Semiparametric}. These works depart from naive plug-in principles to achieve robust large-sample properties under weak conditions. We contribute to this literature by explicitly addressing high-dimensional regression with sparsity-inducing priors and by showing that debiasing can be performed on approximate posteriors. Theoretically, we also establish BvM results for a growing subset of coefficients, which is a novel contribution in this context. Recently, \citet{DiTragliaLiu2025BML} proposed a Bayesian analog of double machine learning for linear regression when the number of covariates is relatively small, in the sense that $p = o(n)$. Their approach relies on priors for reduced-form covariance matrices to remain consistent with the likelihood principle \citep{Walker_Parametrization_2023}. However, when $p$ is large relative to $n$, practical computation of posteriors for high-dimensional covariance matrices remains challenging.



The remainder of the paper is organized as follows. Section~\ref{Sec: Debiased Bayesian inferene} introduces our inferential procedure and discusses implementation details, while Section~\ref{Sec: Applications} presents two empirical applications. In Section~\ref{Sec: 4}, we develop our main theoretical results under high-level conditions, which are then verified under more primitive assumptions. Section~\ref{Sec: Monte Carlo} examines the finite-sample performance of our procedure through Monte Carlo simulations. Section~\ref{Sec: Conclusion} concludes. All proofs and supporting results are relegated to the appendices, which also contain additional simulation evidence.

\section{Debiased Bayesian Inference}\label{Sec: Debiased Bayesian inferene}
Consider the prototypical linear regression model:
\begin{align}\label{Eqn:Model}
	Y_i=X_i^\intercal\beta_0+\varepsilon_i,\quad \mathbb{E}_0[X_{i}\varepsilon_i]=0, \quad i = 1,\ldots, n,
\end{align}
where $\{Y_i,X_i\}_{i=1}^n$ are independent and identically distributed (\textit{i.i.d.}) and $\mathbb{E}_0[\cdot]$ is the population expectation.
The covariate $X_i$ is a $p$-dimensional vector, with $p$ potentially exceeding the sample size $n$.  The true regression coefficient vector $\beta_0$ is assumed to be sparse, in the sense that only a small subset of regressors have nonzero effects. We denote the population covariance matrix of $X_i$ by $\Omega_0:=\mathbb{E}_0[X_iX_i^\intercal]$, and its inverse, the precision matrix, by $\Theta_0:=\Omega_0^{-1}$. 

Since the seminal work of \citet{Tibshirani_Lasso_1996}, the LASSO has become the default method for estimating the high-dimensional linear regression model, owing to its ability to perform model selection and coefficient estimation simultaneously. It is well known that the LASSO estimator, as a penalized least-squares estimator, can be interpreted as a Bayesian point estimator---specifically, the posterior mode under independent Laplace priors on the regression coefficients (see Remark \ref{Rem:DebiasPoint}). However, when one considers the entire posterior distribution rather than only its mode, the Laplace prior alone fails to induce sparsity: the resulting posterior remains non-sparse and does not contract toward the true parameter at the optimal rate, unlike its mode. As shown in Theorem 7 of \citet{Castilloetal_BayesianSparse_2015}, the full Bayesian posterior corresponding to the Laplace prior lacks desirable asymptotic properties.\footnote{\citet{Castilloetal_BayesianSparse_2015} remark that “the LASSO is essentially non-Bayesian, in the sense that the corresponding full posterior distribution is a useless object” (p.1988). In practice, sampled posterior distributions of regression coefficients under the Laplace prior are indeed non-sparse \citep{Baietal_ReviewSpikeSlab_2021}.}

Our method is inspired by the frequentist literature on debiased inference \citep{ZhangZhang2014CI}, which applies a debiasing step to the LASSO estimator. Let $\hat{\beta}^{\text{Pilot}}_n$ denote any pilot estimator of $\beta_0$, and let $\hat{\Theta}_n$ be a pilot estimator of the precision matrix $\Theta_0$. The celebrated debiased inference procedure is based on the following construction:
\begin{align}
\hat{\beta}_n^{\text{Debias}}=\hat{\beta}_n^{\text{Pilot}}+ \hat{\Theta}_n\left[\frac{1}{n}\sum_{i=1}^{n}X_i(Y_i-X_i^\intercal\hat{\beta}_n^{\text{Pilot}})\right].
\end{align}
It is well established that the individual coordinates of the debiased estimator $\hat{\beta}_n^{\text{Debias}}$ are asymptotically normal, enabling valid statistical inference.

Our debiased Bayesian inference procedure departs from the frequentist approach in several key respects. The Bayesian modeling perspective treats the unknown regression coefficient as the random vector $\beta$. We begin by assigning a sparsity-inducing prior, such as a spike-and-slab prior or a horseshoe-type prior, and computing the (approximate) initial posterior distribution of $\beta$, denoted by $\Pi_{\beta}(\,\cdot\,|\{Y_i,X_i\}_{i=1}^n)$. Under sparsity, the resulting posterior achieves (near-)optimal contraction rates \citep{Castilloetal_BayesianSparse_2015,GaoVaartZhou2020General,SongLiang_NearlyOptimal_2023}. We then debias the entire posterior distribution using a correction term analogous to the frequentist construction. In this step, however, we replace the uniform weights $1/n$ with Bayesian bootstrap weights. Specifically, the Bayesian bootstrap weights are defined by normalizing independent standard exponential random variables (independent of $\{Y_i,X_i\}_{i=1}^n$ and the prior of $\beta$): 
\begin{align}
W_{ni}=\frac{\omega_i}{\sum_{j=1}^n \omega_j},\quad \{\omega_j\}_{j=1}^{n} \stackrel{i.i.d.}{\sim} \textup{Exp}(1)\quad, i=1,\ldots, n.
\end{align}
In summary, we study the debiased posterior of the form:
\begin{align}\label{Eqn:DebiasBayes}
\tilde{\beta}
= \beta + \hat{\Theta}_n \left[ \sum_{i=1}^n W_{ni} X_i( Y_i - X_i^\intercal \beta ) \right],
\end{align}
where $\beta\sim \Pi_{\beta}(\,\cdot\,|\{Y_i,X_i\}_{i=1}^n)$ and $\hat{\Theta}_n$ is as introduced above. The pseudo-code for generating draws from the debiased posterior is provided in Algorithm~\ref{Algorithm}. 
\begin{algorithm}[H]
	\caption{Debiased Bayesian Procedure}\label{Algorithm}
	\begin{algorithmic}
		\STATE \textbf{Input:} Data $\{Y_i,X_i\}_{i=1}^n$; number of posterior draws $B$.
        
		\STATE \textbf{Prior Specification:} Select a sparsity-inducing prior for $\beta$. 
        
        \STATE \textbf{Pilot Estimation:}  Choose a pilot estimator $\hat{\Theta}_n$ of the precision matrix $\Theta_0$.
        
		\STATE \textbf{Posterior Computation:}
		\FOR{$b=1,\ldots, B$}
		\STATE  (a) Draw $\beta^b$ from the initial posterior $\Pi_{\beta}(\,\cdot\,|\{Y_i,X_i\}_{i=1}^n)$.
		\STATE (b) Generate Bayesian bootstrap weights:
        \[W^b_{ni}=\frac{\omega^b_i}{\sum_{j=1}^n \omega^b_j},\quad \{\omega_j^b\}_{j=1}^{n} \stackrel{i.i.d.}{\sim} \textup{Exp}(1),  \quad i=1,\ldots,n.\]
		\STATE (c) Compute the debiased posterior draw:
		\[\tilde{\beta}^b=\beta^b+ \hat{\Theta}_n\left[\sum_{i=1}^{n}W_{ni}^bX_i(Y_i-X_i^\intercal\beta^b)\right].\]
		\ENDFOR
		
		\STATE \textbf{Output: $\{\tilde{\beta}^{b}:b=1,\ldots,B\}$}
	\end{algorithmic}
\end{algorithm}

Several remarks are in order. First, the weight replacement is essential. To see this, consider the case $p<n$, where one may take $\hat{\Theta}_n =(\sum_{i=1}^{n} X_{i}X_{i}^\intercal/n)^{-1}$. Without the weight replacement, the expression simplifies to $\tilde{\beta} = (\sum_{i=1}^{n}X_{i}X_{i}^{\intercal})^{-1}\sum_{i=1}^{n}X_{i}Y_{i}$, which exhibits no posterior uncertainty. Second, the debiasing step introduces no additional computational burden. Since the Bayesian bootstrap weights are independent of the posterior draws of $\beta$, steps (a) and (b) of Algorithm \ref{Algorithm} can be parallelized efficiently to save time. Moreover, the one-time computation of $\hat{\Theta}_n$ can be carried out simultaneously, which remains computationally efficient as known in the frequentist literature (see the additional discussion in Section \ref{Sec:43}). Third, step (c) of Algorithm~\ref{Algorithm} admits a natural interpretation as a
Bayesian version of the residual bootstrap. Specifically, consider the case $p<n$ with $\hat{\Theta}_n =(\sum_{i=1}^{n} X_{i}X_{i}^\intercal/n)^{-1}$. For a given posterior draw $\beta^b$, define bootstrap residuals $\varepsilon_i^{b\ast}$ as $\varepsilon_i^{b}=Y_i-X_i^\intercal \beta^b$ weighted by the Bayesian bootstrap
weights $W_{ni}^b$. Construct the pseudo responses $Y_{i}^{b\ast} = X_i^{\intercal}\beta^b + \varepsilon^{b\ast}$ and regress $Y_{i}^{b\ast}$ on $X_i$. The resulting estimator coincides with $\tilde{\beta}^{b}$. Thus, each debiased posterior draw $\tilde\beta^b$ can be viewed as the regression coefficient obtained from a ``Bayesian residual bootstrap,''
where residuals are reweighted via Bayesian bootstrap rather than
resampled as in the classical residual bootstrap. While other bootstrap weights can be employed, the Bayesian bootstrap weights provide a natural Bayesian interpretation (see Section \ref{Sec:43}).

Our debiased Bayesian procedure provides simultaneous point estimation and uncertainty quantification. For $j=1,\ldots,p$, let $\beta_{0,j}$ and $\tilde{\beta}^{b}_j$ be the $j$th coordinate of $\beta_{0}$ and $\tilde{\beta}^{b}$ respectively. For $0<\alpha<1$, a $100(1-\alpha)\%$ credible set for the regression coefficient $\beta_{0,j}$ is defined as:
\begin{align}\label{Eqn:CS}
	\mathcal{C}_{n,j}(\alpha,B) 
	= \big[\,\hat{c}_{n,j}(\alpha/2,B),~~\hat{c}_{n,j}(1-\alpha/2,B)\,\big],
\end{align}
where $\hat{c}_{n,j}(\alpha,B)$ denotes the $\alpha$th quantile of the posterior draws $\{\tilde{\beta}_j^b : b=1,\ldots,B\}$. The Bayesian point estimator (posterior mean) is obtained by averaging the simulation draws: $\bar{\tilde \beta}_j = \sum_{b=1}^B \tilde{\beta}_j^b/B$ for each $j=1\ldots, p$.

\section{Insights from Empirical Applications}\label{Sec: Applications}
 Our paper aims not only to present a comprehensive theoretical development, but also to demonstrate the empirical benefits through concrete empirical applications. Before presenting the theoretical properties of our debiased Bayesian inference method, we first demonstrate its practical utility through two empirical exmaples adapted from \citet{GiannoneLenzaPrimiceri2021Sparsity}. In both cases, our goal is not to determine the overall sparsity of the regressors, but rather to evaluate the significance of a particular regressor—an objective well suited to our inference procedure.

For each application, we begin by presenting standard Bayesian inference results for the coefficient of interest using both a spike-and-slab prior and a horseshoe-type prior. The prior specifications and hyperparameter settings follow those used in our Monte Carlo simulations. For the spike-and-slab prior, we obtain the posterior distribution via the variational Bayesian approximation described in \citet{RaySzabo2022Variational}, drawing 8,000 samples from the resulting approximate posterior. The posterior under the horseshoe prior is generated using the MCMC algorithm of \citet{KimLeeGupta2020BayesianSC}, with a total of 16,000 draws and the first 8,000 discarded as burn-in. In both applications, the spike-and-slab posterior collapses to a point mass at zero. Under the horseshoe prior, the posterior exhibits greater dispersion, though in the second application the density remains tightly concentrated near zero. In both cases, the 95\% credible intervals include zero, implying that the effects are insignificant at the $5\%$ level.


To implement the debiasing adjustment, we estimate the precision matrix using the nodewise LASSO regression method of \citet{VandeGeer_OnAsymptotically_2014}, adopting the default tuning parameters from the R package \texttt{hdi} \citep{DBMM2015hdi}. For the Bayesian bootstrap, we generate 8,000 additional sets of bootstrap weights in parallel with the posterior draws of the regression coefficients. After applying the debiasing step, the posterior densities become approximately Gaussian, providing empirical support for our theoretical results. Overall, the debiased inference reveals a significant effect in the first application but no significant effect in the second.

\subsection{Determinants of Economic Growth}
The influential work of \citet{Barro1991Growth} initiated a long-standing debate over what drives long-term economic growth across countries. Researchers have identified numerous potential predictors, many of which are included in the original data set compiled by \citet{Barro1991Growth}. In line with \citet{BelloniChernozhukovHansen2013InferHigh}, we use this data set to analyze average GDP growth between 1960 and 1985 for 90 countries. The data set contains 60 candidate predictors, covering pre-1960 measures of a wide range of socio-economic, institutional, and geographical factors. Using standard Bayesian inference methods, we find that all of the posterior results are statistically insignificant. However, applying our debiased Bayesian approach uncovers a significant negative association between the initial GDP level and subsequent growth. This suggests the presence of catch-up effect, everything else equal, which is in line with the neoclassical economic growth theory.


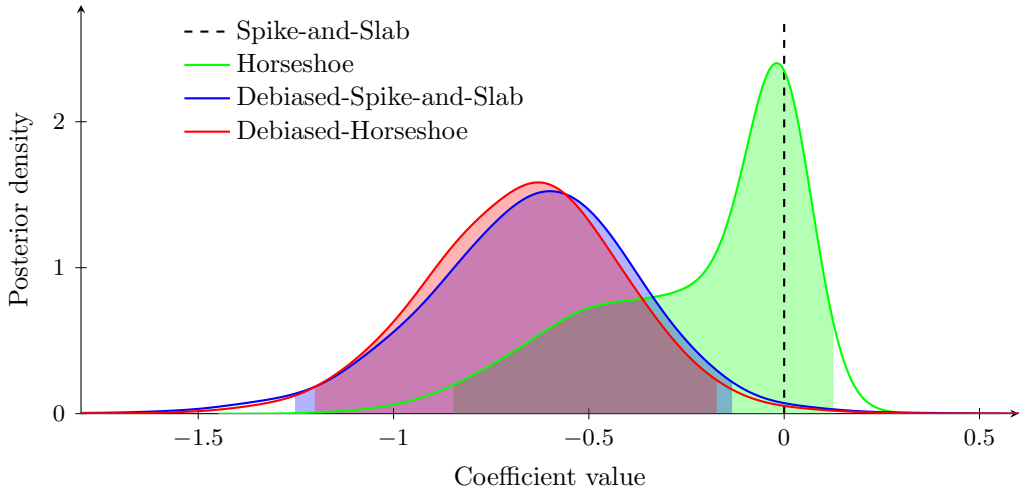
\begin{figure}[htbp]
\centering
\begin{tikzpicture}
\begin{axis}[
    width=0.95\textwidth,
    height=7cm,
    xlabel={Coefficient value},
    ylabel={Posterior density},
    xlabel style={font=\small},
    ylabel style={font=\small},
    ticklabel style={font=\footnotesize},
    axis lines=left,
    xmin=-1.8, xmax=0.6,
    xtick={-1.5, -1, -0.5, 0, 0.5,1},
    ymin=0, ymax=2.8,
    ytick={0,1,2},
    legend style={
        at={(0.1,0.99)},      
        anchor=north west,   
        draw=none,           
        fill=none,           
        font=\small,
        align=left
    },
    legend cell align=left,
    tick style={black},
    legend columns=1,
    every axis plot/.append style={thick},
]

\pgfplotstableread[col sep=comma]{DH1.csv}\DHone
\pgfplotstableread[col sep=comma]{DV1.csv}\DVone
\pgfplotstableread[col sep=comma]{HS1.csv}\HSone

\addplot[
    dashed,
    thick,
    black
] coordinates {(0,0) (0,2.7)};
\addlegendentry{Spike-and-Slab}

\addplot[
    color=green,
    thick
]
table[
    x=x,
    y=y,
] {\HSone};
\addlegendentry{Horseshoe}

\addplot[
    color=blue,
    thick
]
table[
    x=x,
    y=y,
] {\DVone};
\addlegendentry{Debiased-Spike-and-Slab}

\addplot[
    color=red,
    thick
]
table[
    x=x,
    y=y,
] {\DHone};
\addlegendentry{Debiased-Horseshoe}

\addplot [
    name path=E,
    draw=none,
    forget plot
]
table[
    x=x,
    y=y,
] {\HSone};

\addplot [
    name path=F,
    domain=-3:3,
    draw=none,
    forget plot
] {0};

\addplot [
    green,
    opacity=0.3
] fill between [of=E and F, soft clip={domain=-0.846580927:0.126281795}];

\addplot [
    name path=A,
    draw=none,
    forget plot
]
table[
    x=x,
    y=y,
] {\DVone};

\addplot [
    name path=B,
    domain=-3:3,
    draw=none,
    forget plot
] {0};

\addplot [
    blue,
    opacity=0.3
] fill between [of=A and B, soft clip={domain=-1.252297236:-0.133349493}];

\addplot [
    name path=C,
    draw=none,
    forget plot
]
table[
    x=x,
    y=y,
] {\DHone};

\addplot [
    name path=D,
    domain=-3:3,
    draw=none,
    forget plot
] {0};

\addplot [
    red,
    opacity=0.3
] fill between [of=C and D, soft clip={domain=-1.201603532:-0.1728713}];
\end{axis}
\end{tikzpicture}
\caption{Posterior distribution with 95\% credible set}
\end{figure}
	
\subsection{Decline in Crime Rates}
Using U.S. state-level data, \citet{DonohueLevitt2001Abortion} identified a strong relationship between the legalization of abortion following the 1973 Roe v.\ Wade decision and the subsequent decline in crime rates. In their analysis, the dependent variable is the change in log per capita murder rates between 1986 and 1997 across states. This outcome is regressed on a measure of the effective abortion rate—which is always included as a predictor alongside twelve year dummy variables—and a collection of controls capturing alternative determinants of crime, such as the number of police officers and prisoners per 1,000 residents, among others. \citet{BelloniChernozhukovHansen2014HighATE} extended the set of controls by incorporating these variables in various forms, including levels, differences, squared differences, cross-products, initial conditions, and interactions with linear and quadratic time trends, resulting in a database of 284 variables and 576 observations. When analyzing this expanded model, the coefficient on the abortion rate is consistently found to be insignificant across all methods. Note that the horseshoe induced posterior places almost all of its posterior mass at zero, which is close to the degenerate (at zero) posterior of the spike-and-slab. Therefore, these two debiased posteriors—whether based on the spike-and-slab or horseshoe priors—are largely indistinguishable in this application. In this example, standard Bayesian approaches place (nearly) all posterior mass at zero, while our debiased Bayesian procedure yields more nuanced posterior distributions that may better capture estimation uncertainty. 
 
 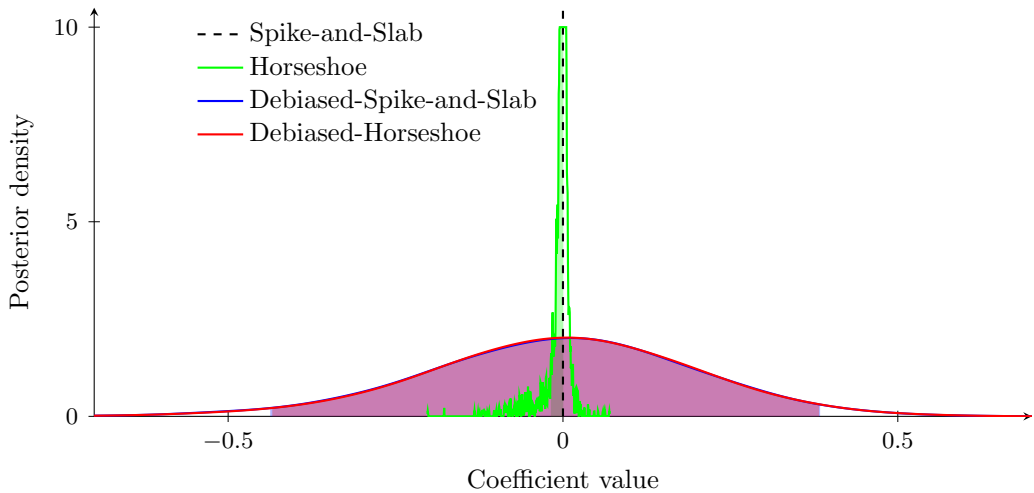
\begin{figure}[htbp]
\centering
\begin{tikzpicture}
\begin{axis}[
    width=0.95\textwidth,
    height=7cm,
    xlabel={Coefficient value},
    ylabel={Posterior density},
    xlabel style={font=\small},
    ylabel style={font=\small},
    ticklabel style={font=\footnotesize},
    axis lines=left,
    xmin=-0.7, xmax=0.7,
    xtick={-0.5, 0, 0.5},
    ymin=0, ymax=10.5,
    ytick={0,5,10},
    legend style={
        at={(0.1,0.99)},      
        anchor=north west,   
        draw=none,           
        fill=none,           
        font=\small,
        align=left
    },
    legend cell align=left,
    tick style={black},
    legend columns=1,
    every axis plot/.append style={thick},
]

\pgfplotstableread[col sep=comma]{DH2.csv}\DHone
\pgfplotstableread[col sep=comma]{DV2.csv}\DVone
\pgfplotstableread[col sep=comma]{HS2.csv}\HSone

\addplot[
    dashed,
    thick,
    black
] coordinates {(0,0) (0,10.5)};
\addlegendentry{Spike-and-Slab}

\addplot[
    color=green,
    thick
]
table[
    x=x,
    y=y,
] {\HSone};
\addlegendentry{Horseshoe}

\addplot[
    color=blue,
    thick
]
table[
    x=x,
    y=y,
] {\DVone};
\addlegendentry{Debiased-Spike-and-Slab}

\addplot[
    color=red,
    thick
]
table[
    x=x,
    y=y,
] {\DHone};
\addlegendentry{Debiased-Horseshoe}

\addplot [
    name path=E,
    draw=none,
    forget plot
]
table[
    x=x,
    y=y,
] {\HSone};

\addplot [
    name path=F,
    domain=-3:3,
    draw=none,
    forget plot
] {0};

\addplot [
    green,
    opacity=0.3
] fill between [of=E and F, soft clip={domain=-0.018099333:0.00054724}];

\addplot [
    name path=A,
    draw=none,
    forget plot
]
table[
    x=x,
    y=y,
] {\DVone};

\addplot [
    name path=B,
    domain=-3:3,
    draw=none,
    forget plot
] {0};

\addplot [
    blue,
    opacity=0.3
] fill between [of=A and B, soft clip={domain=-0.43715268:0.384011518}];

\addplot [
    name path=C,
    draw=none,
    forget plot
]
table[
    x=x,
    y=y,
] {\DHone};

\addplot [
    name path=D,
    domain=-3:3,
    draw=none,
    forget plot
] {0};

\addplot [
    red,
    opacity=0.3
] fill between [of=C and D, soft clip={domain=-0.434090556:0.382386431}];
\end{axis}
\end{tikzpicture}
\caption{Posterior distribution with 95\% credible set (blue and red overlap)}
\end{figure}

\section{Main Theoretical Results}\label{Sec: 4}
Various features of posterior distributions are used by empirical researchers for frequentist type inference. In particular, regions with high posterior probabilities, known as credible sets, are often interpreted as confidence sets. The frequentist validity of the debiased posterior distributions serves as a theoretical justification for the proposed Bayesian procedure. In this section, we present the main theoretical results, beginning with a generic BvM theorem formulated under a set of high-level conditions. The result applies broadly and is not restricted to specific prior choices.

For the reader’s convenience, we introduce notation that will be used throughout the paper. Let $I_p$ denote the $p\times p$ identity matrix, $e_j$ its $j$th column, and $E_J$ the matrix formed by any $J$ columns of $I_p$. For $1 \leq q < \infty$, we write $\|v\|_{q}$ for the standard $\ell_q$-norm of a vector $v$, that is, $\|v\|_{q}: = (\sum_j |v_j|^{q})^{1/q}$. We use $\|v\|_{0}$ to denote the number of nonzero entries of $v$, and $\|v\|_{\infty}$ for its sup-norm. For a matrix $A$, $\|A\|_{q}$ denotes the induced (operator) $\ell_q$-norm for $1\leq q\leq \infty$, and $\|A\|_{\max}$ the elementwise sup-norm. For two positive sequences $a_n$ and $b_n$, we write $a_n\lesssim b_n$ if $a_n\leq C b_n$ for some constant $C$ and all $n$, and $a_n\asymp b_n$ if $a_n\lesssim b_n$ and $b_n \lesssim a_n $. Let $Z^{(n)} := \{Y_i,X_i\}_{i=1}^n$ denote the data and $W^{(n)} := \{W_{ni}\}_{i=1}^n$ the Bayesian bootstrap weights. As previously, $\mathbb{E}_0[\cdot]$ indicates that the expectation is evalauted with respect to the distribution of $Z^{(n)}$ under $\beta=\beta_0$, while $\Pi_W(\,\cdot\,|Z^{(n)})$ denotes the conditional distribution of $W^{(n)}$ given $Z^{(n)}$. Since $W^{(n)}$ is independent of $Z^{(n)}$, this conditional distribution does not depend on $Z^{(n)}$. For a positive sequence $r_n$, the asymptotic symbols $o_{P_0}(r_n)$ and $O_{P_0}(r_n)$ are defined with respect to the same underlying probability measure $P_0$. The sub-Gaussian norm of a random variable $Z$, denoted by $\|Z\|_{\psi_2}$, is defined as $\|Z\|_{\psi_2}:=\inf\{t>0: \mathbb E_{0}[\exp(Z^2/t^2)]\leq 2\}$. For a random vector $Z$, its sub-Gaussian norm is defined as $\|Z\|_{\psi_2}:=\sup _{\|x\|_2=1}\| Z^{\intercal}x\|_{\psi_2}$.

\subsection{Bernstein-von Mises Theorem}
We begin by establishing the result under high-level assumptions that specify the posterior contraction rate of $\Pi_{\beta}(\,\cdot\,| Z^{(n)})$, the convergence rate of the precision matrix estimator $\hat{\Theta}_n$, a frequentist point estimator that centers the debiased posterior distribution, and suitable moment and regularity conditions.

\begin{ass}[Contraction rate of $\Pi_{\beta}(\,\cdot\,| Z^{(n)})$]\label{Assump:Beta}
There exist some constant $C>0$ and a positive sequence $\epsilon_n\to 0$ such that 
\[\mathbb{E}_0\big[\Pi_{\beta}\big(\|\beta-\beta_0\|_1\geq C\epsilon_n \mid Z^{(n)}\big)\big]\to 0.\]
\end{ass}

\begin{ass}[Convergence rate of $\hat{\Theta}_n$]\label{Assump:Precision}
Let $\hat{\Omega}_n:= \sum_{i=1}^{n}X_iX_i^{\intercal}/n$. There exist positive sequences $\gamma_n\to 0$ and $\delta_n\to 0$ such that 
\[\|\hat{\Theta}_n\hat{\Omega}_n - I_{p}\|_{\max}=O_{P_0}(\gamma_n) \text{ and } \|\hat{\Theta}_n-\Theta_0\|_{\infty}=O_{P_0}(\delta_n).\]
\end{ass}

\begin{ass}[Centering point estimator]\label{Assump:Frequentist}
There exists a point estimator $\hat\beta_n$ (depending only on $Z^{(n)}$) such that the following expansion holds
\[\hat\beta_n = \beta_0  + \Theta_0 \frac{1}{n} \sum_{i=1}^{n}X_i\varepsilon_i + \Delta_n, ~\text{ where } \|\Delta_n\|_{\infty} = o_{P_0}(n^{-1/2}).\]
\end{ass}

\begin{ass}[Moment and regularity conditions]\label{Assump:Moments}
(i) $\{Y_{i},X_{i}\}_{i=1}^{n}$ are i.i.d. and satisfy the model \eqref{Eqn:Model}; 
(ii) $X_{i}$'s are sub-Gaussian random vectors with $\|X_{i}\|_{\psi_2}\in(0,\infty)$;
(iii) $\varepsilon_{i}$'s are sub-Gaussian random variables with $\|\varepsilon_{i}\|_{\psi_2}\in(0,\infty)$; 
(iv) $\mathbb{E}_0[|e_j^{\intercal}\Theta_0 X_{i}\varepsilon_i|^3]<\infty$ for each $j=1,\ldots, p$. 
\end{ass}			

Assumption \ref{Assump:Beta} imposes a contraction rate on the initial posterior distribution of $\beta$. 
We later provide primitive conditions under which this assumption holds for both spike-and-slab priors and horseshoe-type priors \citep{Castilloetal_BayesianSparse_2015, SongLiang_NearlyOptimal_2023}. 
Importantly, we allow for either the exact posterior or an approximate posterior for $\beta$. 
The exact posterior arises directly from the Bayes’ rule, given a prior and the likelihood function. 
In low-dimensional settings, standard MCMC algorithms can deliver fairly accurate approximations to this exact posterior. In contrast, in high-dimensional problems, it may no longer be feasible. For example, the point-mass spike-and-slab prior is often considered theoretically ideal for sparse Bayesian problems. However, exploring the full posterior over the entire model space using point-mass spike-and-slab priors can be computationally prohibitive, because of the combinatorial complexity of updating the discrete indicators whether to include each variable or not. Recently, the variational Bayesian approximation has become increasingly popular where one relies on an approximate posterior that minimizes the Kullback–Leibler divergence to the true posterior within a restricted family, such as the mean-field variational family. We provide sufficient conditions to ensure that Assumption \ref{Assump:Beta} holds for mean-field variational approximations under spike-and-slab priors \citep{RaySzabo2022Variational}.

Assumption \ref{Assump:Precision} specifies the convergence rate of the precision matrix estimator. 
When the dimensionality $p$ is fixed, this condition is trivially satisfied by the plug-in estimator $\hat{\Theta}_n = \hat{\Omega}_n^{-1}$ with $\gamma_n = 0$ and $\delta_n = n^{-1/2}$ under mild moment assumptions. 
In high-dimensional settings ($p \to \infty$), we will present primitive conditions under which the assumption holds when $\hat{\Theta}_n$ is obtained from the nodewise LASSO regressions of \citet{VandeGeer_OnAsymptotically_2014} or the CLIME approach of \cite{Caietal_ConstrainedSparse_2011}. Assumption \ref{Assump:Frequentist} requires the existence of a frequentist point estimator that serves as the center of the debiased posterior. It is satisfied by the OLS estimator when $p$ is fixed and by the debiased LASSO estimator in high-dimensional models \citep{VandeGeer_OnAsymptotically_2014} under the primitive conditions provided in Section \ref{Sec:42}. Besides LASSO, other types of pilot estimators have also been studied in the literature; see Remark \ref{Rem:DebiasPoint}. Finally, Assumption~\ref{Assump:Moments} imposes sub-Gaussianity on both the regressors and the error terms. These regularity conditions are standard in the high-dimensional inference literature and ensure well-behaved concentration of sample quantities.

The debiased posterior distribution is jointly determined by the initial posterior distribution of $\beta$ and the distribution of the Bayesian bootstrap weights. Specifically, the debiased posterior distribution is determined by
\begin{align}
\Pi(\,\cdot\,|Z^{(n)}) := \Pi_{\beta}(\,\cdot\,|Z^{(n)}) \times \Pi_W(\,\cdot\,|Z^{(n)}),    
\end{align}
where $\Pi_{\beta}(\,\cdot\,|Z^{(n)})$ and $\Pi_W(\,\cdot\,|Z^{(n)})$ are, respectively, the initial posterior distribution of $\beta$ and the distribution of the Bayesian bootstrap weights conditional on the data. 
For each $j=1,\ldots,p$, let $\mathcal{L}_{\Pi}\big(e_{j}^{\intercal}\sqrt{n}(\tilde\beta-\hat{\beta}_{n})\mid Z^{(n)}\big)$ denote the posterior law of $e_{j}^{\intercal}\sqrt{n}(\tilde\beta-\hat{\beta}_{n})$ given $Z^{(n)}$. We study the weak convergence of these posterior laws, which we measure using the bounded Lipschitz distance $d_{BL}$. 
For probability measures $P$ and $Q$ on $\mathbf{R}^k$, the bounded Lipschitz distance is defined as
\begin{align}\label{Eqn:dBL}
d_{BL}(P,Q)
:= \sup\left\{
  \left| \int f\, d(P-Q) \right| 
  : \|f\|_{BL}\le 1
\right\},
\end{align}
where the bounded Lipschitz norm of a measurable function $f:\mathbf{R}^k\to\mathbf{R}$ is given by
\begin{align}
\|f\|_{BL}
:= \sup_{x\in\mathbf{R}^k}|f(x)|
   + \sup_{x\neq y}\frac{|f(x)-f(y)|}{\|x-y\|_\infty}.
\end{align}

\begin{thm}\label{Thm:BvM1}
Suppose Assumptions \ref{Assump:Beta}-\ref{Assump:Moments} hold. If $\sqrt{n}\epsilon_n\gamma_{n}\to0$ and $\sqrt{n}(\epsilon_n\|\Theta_0\|_\infty+ \delta_n)(\sqrt{\log p/n} + \log^2 n\log^{2} p/n)\to 0$, then for each $j=1,\ldots, p$,
\[d_{BL}\left(\mathcal{L}_{\Pi}\big(e_{j}^{\intercal}\sqrt{n}(\tilde\beta-\hat{\beta}_{n})\mid Z^{(n)}\big), N(0,\sigma^2_{0,j})\right)\overset{P_{0}}{\to} 0,\]
where $\sigma^2_{0,j}: = e_{j}^{\intercal}\Theta_0 \mathbb{E}_0[X_{i}X_{i}^{\intercal}\varepsilon_{i}^{2}]\Theta_0 e_{j}$. That is, the posterior law of each coordinate of $\sqrt{n}(\tilde\beta-\hat{\beta}_{n})$ converges weakly to a normal distribution in probability. 
\end{thm}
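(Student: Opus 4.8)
The plan is to linearize $e_j^\intercal\sqrt{n}(\tilde\beta-\hat\beta_n)$ around a single Gaussian-limit term and to show that everything else is asymptotically negligible. Substituting $Y_i-X_i^\intercal\beta=\varepsilon_i-X_i^\intercal(\beta-\beta_0)$ into \eqref{Eqn:DebiasBayes} and abbreviating $\hat\Omega_n^W:=\sum_{i=1}^n W_{ni}X_iX_i^\intercal$, I obtain $\tilde\beta-\beta_0=(I_p-\hat\Theta_n\hat\Omega_n^W)(\beta-\beta_0)+\hat\Theta_n\sum_{i=1}^n W_{ni}X_i\varepsilon_i$. Subtracting the expansion of $\hat\beta_n$ from Assumption~\ref{Assump:Frequentist} and projecting onto $e_j$ yields the decomposition
\[
e_j^\intercal\sqrt{n}(\tilde\beta-\hat\beta_n)=T_n+B_n+C_n+D_n-e_j^\intercal\sqrt{n}\Delta_n,
\]
with leading bootstrap fluctuation $T_n:=\sqrt{n}\,e_j^\intercal\hat\Theta_n\big(\sum_i W_{ni}X_i\varepsilon_i-\tfrac{1}{n}\sum_i X_i\varepsilon_i\big)$, posterior bias $B_n:=\sqrt{n}\,e_j^\intercal(I_p-\hat\Theta_n\hat\Omega_n)(\beta-\beta_0)$, a bootstrap--posterior cross term $C_n:=-\sqrt{n}\,e_j^\intercal\hat\Theta_n(\hat\Omega_n^W-\hat\Omega_n)(\beta-\beta_0)$, and the precision error $D_n:=\sqrt{n}\,e_j^\intercal(\hat\Theta_n-\Theta_0)\tfrac{1}{n}\sum_i X_i\varepsilon_i$. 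I will argue that $T_n$ drives the $N(0,\sigma^2_{0,j})$ limit and that $B_n,C_n,D_n$ and $e_j^\intercal\sqrt{n}\Delta_n$ all vanish under $\Pi$ in $P_0$-probability.

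Second, I bound the remainders. The term $e_j^\intercal\sqrt{n}\Delta_n$ is $o_{P_0}(1)$ directly from Assumption~\ref{Assump:Frequentist}. For $D_n$, a H\"older bound gives $|D_n|\le\sqrt{n}\,\|\hat\Theta_n-\Theta_0\|_\infty\,\|\tfrac{1}{n}\sum_i X_i\varepsilon_i\|_\infty=O_{P_0}(\sqrt{n}\,\delta_n)\cdot O_{P_0}(\sqrt{\log p/n})$, which vanishes via the $\sqrt{n}\,\delta_n\sqrt{\log p/n}$ piece of the rate condition, the $\ell_\infty$ bound following from sub-Gaussian maximal inequalities under Assumption~\ref{Assump:Moments}. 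For $B_n$, working on the posterior contraction event of Assumption~\ref{Assump:Beta}, $|B_n|\le\sqrt{n}\,\|\hat\Theta_n\hat\Omega_n-I_p\|_{\max}\,\|\beta-\beta_0\|_1=O_{P_0}(\sqrt{n}\,\gamma_n)\cdot O(\epsilon_n)\to0$ by $\sqrt{n}\,\epsilon_n\gamma_n\to0$. The cross term $C_n$ is the most delicate: conditioning on $(Z^{(n)},\beta)$ it is a centered Bayesian-bootstrap sum whose conditional variance is at most $\tfrac{1}{n}\sum_i(e_j^\intercal\hat\Theta_n X_i)^2(X_i^\intercal(\beta-\beta_0))^2\lesssim\|\beta-\beta_0\|_1^2\,\max_i\|X_i\|_\infty^2\,\tfrac{1}{n}\sum_i(e_j^\intercal\hat\Theta_n X_i)^2$, and sub-Gaussian maximal inequalities give $\max_i\|X_i\|_\infty^2\lesssim\log(np)$ up to polylog factors; integrating over the posterior and applying Markov's inequality shows $C_n$ vanishes, which is precisely where the $\log^2 n\log^2 p/n$ factor in the rate condition originates.

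Third, I establish the conditional Bayesian-bootstrap CLT for $T_n$ given $Z^{(n)}$. Writing $a_i:=e_j^\intercal\hat\Theta_n X_i\varepsilon_i$ (fixed given the data under $P_0$) and $W_{ni}-\tfrac{1}{n}=\tfrac{1}{n}(\omega_i/\bar\omega-1)$ with $\bar\omega:=\tfrac{1}{n}\sum_j\omega_j\to1$, I reduce $T_n$ to $\tfrac{1}{\sqrt{n}}\sum_i(\omega_i-1)(a_i-\bar a)$ plus negligible terms and apply a Lindeberg/Berry--Esseen CLT for sums of i.i.d. mean-one exponential multipliers, the Lyapunov ratio $\max_i|a_i|/(\sum_i a_i^2)^{1/2}$ being controlled by sub-Gaussian maximal inequalities together with Assumption~\ref{Assump:Moments}(iv). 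The conditional variance $\tfrac{1}{n}\sum_i(a_i-\bar a)^2=e_j^\intercal\hat\Theta_n\big(\tfrac{1}{n}\sum_i X_iX_i^\intercal\varepsilon_i^2\big)\hat\Theta_n e_j-\bar a^2$ converges to $\sigma^2_{0,j}$ in $P_0$-probability by a law of large numbers for $\tfrac{1}{n}\sum_i X_iX_i^\intercal\varepsilon_i^2$, the replacement $\hat\Theta_n\to\Theta_0$ controlled through $\|\hat\Theta_n-\Theta_0\|_\infty=O_{P_0}(\delta_n)$ and maximal inequalities, and $\bar a^2=o_{P_0}(1)$.

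Finally, I combine the pieces through the triangle inequality for $d_{BL}$. Setting $R_n:=B_n+C_n+D_n-e_j^\intercal\sqrt{n}\Delta_n$, the Lipschitz-and-bounded structure of test functions yields $d_{BL}\big(\mathcal{L}_{\Pi}(T_n+R_n\mid Z^{(n)}),\mathcal{L}_{\Pi}(T_n\mid Z^{(n)})\big)\le\mathbb{E}_{\Pi}[\min(|R_n|,2)\mid Z^{(n)}]$, which tends to $0$ in $P_0$-probability by the second paragraph, while the third paragraph gives $d_{BL}(\mathcal{L}_{\Pi}(T_n\mid Z^{(n)}),N(0,\sigma^2_{0,j}))\overset{P_0}{\to}0$. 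I expect the main obstacle to be the conditional CLT of the third step in the regime $\log p\asymp n^{c}$: verifying the Lyapunov condition and the variance convergence $\hat\Theta_n\to\Theta_0$ uniformly requires sub-Gaussian maximal inequalities over $p$ (exponential in $n$) coordinates and over the $n$ observations simultaneously, and it is this uniform control---shared with the bound on the cross term $C_n$---that dictates the polylogarithmic factors appearing in the rate condition.
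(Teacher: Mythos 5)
Your overall strategy---linearize $e_j^\intercal\sqrt n(\tilde\beta-\hat\beta_n)$, isolate a single centered exchangeable-bootstrap term, kill the remainders by H\"older-type bounds on the posterior contraction event together with bootstrap concentration, and pass to $d_{BL}$ via $\mathbb{E}_{\Pi}[\min(|R_n|,2)\mid Z^{(n)}]$---is the same as the paper's, and your decomposition is a correct regrouping of the paper's display \eqref{Eqn: Decomposition}: your $B_n$ and $C_n$ together are $e_j^\intercal\sqrt n R_{2n}(\beta-\beta_0)$, and your treatment of them and of $e_j^\intercal\sqrt n\Delta_n$ matches the paper's. The one substantive difference is where $\hat\Theta_n$ sits: the paper's leading term is $S_n=\Theta_0\sum_i(W_{ni}-\tfrac1n)X_i\varepsilon_i$ and the entire discrepancy $(\hat\Theta_n-\Theta_0)\sum_iW_{ni}X_i\varepsilon_i$ is dumped into the remainder $R_{1n}$, whereas you keep $\hat\Theta_n$ in $T_n$ and put only $(\hat\Theta_n-\Theta_0)\tfrac1n\sum_iX_i\varepsilon_i$ into $D_n$.

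That choice creates the one genuine gap. With $a_i=e_j^\intercal\hat\Theta_nX_i\varepsilon_i$, your conditional CLT requires the plug-in sandwich variance $e_j^\intercal\hat\Theta_n\bigl(\tfrac1n\sum_iX_iX_i^\intercal\varepsilon_i^2\bigr)\hat\Theta_ne_j$ to converge to $\sigma_{0,j}^2$. The natural bounds for this step are of the form $\|\Theta_0 e_j\|_1^2\,\bigl\|\tfrac1n\sum_iX_iX_i^\intercal\varepsilon_i^2-\mathbb{E}_0[X_iX_i^\intercal\varepsilon_i^2]\bigr\|_{\max}\lesssim\|\Theta_0\|_\infty^2\bigl(\sqrt{\log p/n}+\cdots\bigr)$ and $\delta_n\|\Theta_0\|_\infty$, and neither is implied by the theorem's hypotheses: the rate conditions constrain only the products $\epsilon_n\|\Theta_0\|_\infty$ and $\delta_n$, leaving $\|\Theta_0\|_\infty$ itself free to diverge, and Assumption \ref{Assump:Moments}(iv) supplies a third moment only for the scalar $e_j^\intercal\Theta_0X_i\varepsilon_i$, not for the matrix quantities your variance argument needs. (Under the primitive conditions of Section \ref{Sec:42}, where $\|\Theta_0\|_\infty$ is bounded, your route would close; but the theorem is stated at the high level.) The paper avoids this entirely: it first swaps $\hat\Theta_n$ for $\Theta_0$ in the leading term---the cost is $\sqrt n\,\|\hat\Theta_n-\Theta_0\|_\infty\bigl\|\sum_i(W_{ni}-\tfrac1n)X_i\varepsilon_i\bigr\|_\infty=O_{P}\bigl(\sqrt n\,\delta_n(\sqrt{\log p/n}+\log^2n\log^2p/n)\bigr)\to0$, the same bound you already invoke for $D_n$---and then applies an exchangeable-bootstrap strong approximation (Lemma \ref{Lem:BootApprox}, via Theorem \ref{Thm:strong approx}) to the fixed i.i.d.\ scalars $e_j^\intercal\Theta_0X_i\varepsilon_i$, for which Assumption \ref{Assump:Moments}(iv) is exactly sufficient and the limiting variance is the population $\sigma_{0,j}^2$ with no sandwich estimation required. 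With that single modification your argument goes through.
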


Theorem \ref{Thm:BvM1} establishes a coordinatewise BvM result for the debiased posterior. 
In particular, the asymptotic variance $\sigma^{2}_{0,j}$ for the $j$th coordinate of $\sqrt{n}(\tilde{\beta}-\hat{\beta}_{n})$ coincides with the asymptotic variance for the centering point estimator in Assumption \ref{Assump:Frequentist} under general, possibly heteroskedastic, errors. 
The rate conditions $\sqrt{n}\epsilon_n\gamma_{n}\to0$ and $\sqrt{n}(\epsilon_n\|\Theta_0\|_\infty+ \delta_n)(\sqrt{\log p/n} + \log^2 n\log^{2} p/n)\to 0$ ensure two key properties:  
(i) the bias stemming from the initial posterior and the error from estimating the precision matrix are asymptotically negligible; and  
(ii) the residual posterior uncertainty, after the debiasing correction, is asymptotically Gaussian. Importantly, the theorem accommodates a high-dimensional regime in which the dimensionality $p$ may grow exponentially with the sample size $n$. 
As will be shown later, under spike-and-slab or horseshoe priors for $\beta$ and nodewise LASSO or CLIME estimation for $\Theta_0$, the quantities $\epsilon_n$, $\gamma_n$, and $\delta_n$ can all be of order $\sqrt{\log p/n}$ if the sparsity levels of $\beta_0$ and $\Theta_0$ are bounded (implying $\|\Theta_0\|_\infty$ is bounded). Hence, the result remains valid as long as $\log^{5/2} p$ grows at most linearly with $n$, up to logarithmic factors.

The BvM theorem enables the construction of marginal credible intervals for any coordinate of the regression coefficient from the debiased posterior. 
For $0<\alpha<1$, let $q_{n,j}(\alpha)$ denote the $\alpha$th quantile of the debiased posterior distribution for the $j$th coefficient---that is, the $\alpha$th conditional quantile of $\tilde{\beta}_{j}$ (the $j$th coordinate of $\tilde{\beta}$) given $Z^{(n)}$. We study the asymptotic frequentist coverage of the $100(1-\alpha)\%$ credible intervals, defined for $j=1,\ldots,p$ as
\begin{align}\label{Eqn:CSinf}
	\mathcal{C}_{n,j}(\alpha) 
	= \big[\,q_{n,j}(\alpha/2),~~q_{n,j}(1-\alpha/2)\,\big].
\end{align}
In practice, $\mathcal{C}_{n,j}(\alpha)$ is infeasible but can be approximated by its simulated counterpart $\mathcal{C}_{n,j}(\alpha,B)$ defined in \eqref{Eqn:CS} by taking a large $B$.

\begin{cor}\label{Cor:BvM1}
Under the conditions of Theorem~\ref{Thm:BvM1}, for each $j=1,\ldots,p$ and $\alpha\in(0,1)$, if in addition $\sigma_{0,j}^{2}$ is bounded away from zero, then 
\[
P_0\big(\beta_{0,j}\in \mathcal{C}_{n,j}(\alpha)\big) \to 1-\alpha.
\]
\end{cor}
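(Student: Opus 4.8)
The plan is to derive the coverage statement from the coordinatewise BvM theorem by translating weak convergence of the posterior law into convergence of the relevant posterior quantiles, and then invoking the central limit theorem for the centering estimator. First I would rewrite the coverage event in a centered-and-scaled form. Since $\beta_{0,j}\in\mathcal{C}_{n,j}(\alpha)$ is equivalent to $q_{n,j}(\alpha/2)\le \beta_{0,j}\le q_{n,j}(1-\alpha/2)$, subtracting $\hat\beta_{n,j}$ and multiplying by $\sqrt n$ lets me express the endpoints in terms of the posterior quantiles of $e_j^\intercal\sqrt n(\tilde\beta-\hat\beta_n)$. Writing $\tilde q_{n,j}(\tau)$ for the $\tau$th such quantile, the event becomes
\[
\tilde q_{n,j}(\alpha/2)\le \sqrt n(\beta_{0,j}-\hat\beta_{n,j})\le \tilde q_{n,j}(1-\alpha/2).
\]

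Next I would control the posterior quantiles using Theorem~\ref{Thm:BvM1}. Because $d_{BL}$ metrizes weak convergence and the limiting law $N(0,\sigma_{0,j}^2)$ is continuous with a strictly positive density near its quantiles (this is where $\sigma_{0,j}^2$ bounded away from zero is used, guaranteeing a nondegenerate, everywhere-positive Gaussian density), the convergence $d_{BL}(\mathcal L_\Pi(\,\cdot\,),N(0,\sigma_{0,j}^2))\overset{P_0}{\to}0$ upgrades to convergence of the posterior CDF at continuity points, and hence to convergence of quantiles. Concretely, I would show $\tilde q_{n,j}(\tau)\overset{P_0}{\to}\sigma_{0,j}\,z_\tau$ for $\tau\in\{\alpha/2,1-\alpha/2\}$, where $z_\tau=\Phi^{-1}(\tau)$. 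The standard route is an argument by contradiction along subsequences: any subsequential deviation of the quantile from $\sigma_{0,j}z_\tau$ would, via the $d_{BL}$ convergence and continuity of the Gaussian CDF, force the posterior mass on an interval to differ from its Gaussian value, contradicting $d_{BL}\to 0$.

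The remaining ingredient is the frequentist limit of the centering term. By Assumption~\ref{Assump:Frequentist}, $\sqrt n(\hat\beta_{n,j}-\beta_{0,j})=e_j^\intercal\Theta_0\,n^{-1/2}\sum_i X_i\varepsilon_i+o_{P_0}(1)$, and the Lindeberg central limit theorem (justified by the moment condition in Assumption~\ref{Assump:Moments}(iv) and the positivity of $\sigma_{0,j}^2$) gives $\sqrt n(\hat\beta_{n,j}-\beta_{0,j})\overset{d}{\to}N(0,\sigma_{0,j}^2)$, equivalently $\sqrt n(\beta_{0,j}-\hat\beta_{n,j})\overset{d}{\to}N(0,\sigma_{0,j}^2)$. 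Combining this with the quantile limits via Slutsky's theorem, the coverage probability equals
\[
P_0\!\left(\tilde q_{n,j}(\alpha/2)\le \sqrt n(\beta_{0,j}-\hat\beta_{n,j})\le \tilde q_{n,j}(1-\alpha/2)\right)
\to \Phi(z_{1-\alpha/2})-\Phi(z_{\alpha/2})=1-\alpha,
\]
where the symmetry of the Gaussian cancels the common factor $\sigma_{0,j}$.

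The main obstacle is the interchange between convergence in the $d_{BL}$ metric (which holds only in $P_0$-probability, not almost surely) and pointwise convergence of random quantiles. Care is needed because both the posterior quantiles and the data-dependent centering are random: one must ensure that the in-probability $d_{BL}$ convergence delivers joint control, so that the random quantile bounds $\tilde q_{n,j}(\tau)$ converge in probability to deterministic constants while the centered statistic converges in distribution, allowing a clean Slutsky-type combination. I would handle this by working along arbitrary subsequences and passing to further subsequences on which $d_{BL}\to 0$ holds almost surely, establishing the quantile limits there, and then concluding via the subsequence principle that the coverage probability converges to $1-\alpha$ along the original sequence.
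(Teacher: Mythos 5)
Your proposal is correct, but it reaches the conclusion by a genuinely different route from the paper. The paper does not argue from the $d_{BL}$ statement of Theorem~\ref{Thm:BvM1} directly: it goes back to explicit Gaussian couplings with rates --- Yurinskii's coupling for $e_j^\intercal\sqrt n(\hat\beta_n-\beta_0)$ (Lemma~\ref{Lem:StrongApprox}) and the exchangeable-bootstrap strong approximation for the posterior (Lemma~\ref{Lem:BootApprox}) --- and then compares the posterior quantiles to the Gaussian quantiles via the anti-concentration/quantile-comparison inequality of Lemma 11 in \citet{ChernozhukovLeeRosen2013Intersection}, finishing with a density bound and the symmetry of $N(0,\sigma_{0,j}^2)$. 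You instead take the weak-convergence conclusion of Theorem~\ref{Thm:BvM1} at face value, upgrade in-probability $d_{BL}$ convergence to in-probability convergence of the posterior quantiles (via the subsequence device you describe), pair this with an ordinary Lindeberg/Lyapunov CLT for the centering estimator, and close with a Slutsky-type argument; this is more elementary and fully self-contained given Theorem~\ref{Thm:BvM1} and Assumptions~\ref{Assump:Frequentist} and~\ref{Assump:Moments}(iv). The coupling route buys explicit rates ($\eta_n$, $\varsigma_n$) that are reused elsewhere, but for the fixed-$j$, fixed-$\alpha$ coverage claim both arguments suffice. One small caveat for your write-up: in this high-dimensional regime the model is effectively a triangular array, so $\sigma_{0,j}^2$ may vary with $n$; your quantile limit should be stated as $\tilde q_{n,j}(\tau)-\sigma_{0,j}z_\tau\overset{P_0}{\to}0$ (with $\sigma_{0,j}^2$ bounded away from zero, and bounded above by Assumption~\ref{Assump:Moments}(iv), guaranteeing the uniform positivity of the Gaussian density that your quantile-convergence step needs), and the final probability computed after normalizing by $\sigma_{0,j}$ --- which also shows that the Gaussian symmetry you invoke at the end is not actually needed.
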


This corollary provides an important frequentist implication of Theorem \ref{Thm:BvM1}: 
the Bayesian credible intervals derived from the debiased posterior achieve asymptotically correct frequentist coverage for each regression coefficient. 
In other words, for large samples, the posterior-based uncertainty quantification coincides with the frequentist notion of confidence intervals, thereby unifying Bayesian and frequentist inference in high-dimensional settings.

We next consider simultaneous inference for a growing number of coefficients.  Without loss of generality, we focus on the first $J$ coordinates. For this purpose, we strengthen Assumption \ref{Assump:Moments}(iv) to the following condition. 

\begin{ass}[Moment condition for simultaneous inference]\label{Assump:MomentsSimutaneous}
(i) The eigenvalues of $\Sigma^2_{0,J}: = E_{J}^{\intercal}\Theta_0 \mathbb{E}_0[X_{i}X_{i}^{\intercal}\varepsilon_{i}^{2}]\Theta_0 E_{J}$ are bounded; 
(ii) there exists a sequence $\nu_J>0$ such that  $\{\mathbb E_0[\|E_{J}^{\intercal}\Theta_0X_{i}\varepsilon_i\|^{3}_{\infty}]\}^{1/3}=O(\nu_{J})$.
\end{ass}

Assumption \ref{Assump:MomentsSimutaneous}(i) is satisfied if, for example, $\mathbb{E}_0[\varepsilon_{i}^{2}|X_i]$ is constant and the eigenvalues of $\Omega_0$ are bounded away from zero. 
Assumption \ref{Assump:MomentsSimutaneous} (ii) holds with $\nu_J = O(\log J)$ when the random vectors $\Theta_0X_{i}\varepsilon_i$ are sub-exponential.

\begin{thm}\label{Thm:BvM2}
Suppose Assumptions \ref{Assump:Beta}-\ref{Assump:Frequentist}, 
\ref{Assump:Moments}(i)-(iii), and \ref{Assump:MomentsSimutaneous} hold. 
If 
$\sqrt{n}\epsilon_n\gamma_{n}\to0$, $\sqrt{n}(\epsilon_n\|\Theta_0\|_\infty+ \delta_n)(\sqrt{\log p/n} + \log^2 n\log^{2} p/n)\to 0$, 
and $\nu_{J}^{6}J^{3}\log^{9}(1+J)/n\to 0$, then
\[
d_{BL}\!\left(
\mathcal{L}_{\Pi}\big(E_{J}^{\intercal}\sqrt{n}(\tilde{\beta}-\hat{\beta}_{n})\mid Z^{(n)}\big),\,
N(0,\Sigma^2_{0,J})
\right)
\overset{P_{0}}{\longrightarrow} 0.
\]
\end{thm}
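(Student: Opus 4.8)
The plan is to reduce Theorem~\ref{Thm:BvM2} to a conditional high-dimensional central limit theorem for the Bayesian bootstrap, after a linearization of $E_J^\intercal\sqrt{n}(\tilde\beta-\hat\beta_n)$ that parallels the one underlying Theorem~\ref{Thm:BvM1}. Writing $Y_i-X_i^\intercal\beta = X_i^\intercal(\beta_0-\beta)+\varepsilon_i$ and substituting the expansion of Assumption~\ref{Assump:Frequentist}, one obtains
\[
\tilde\beta-\hat\beta_n = \Theta_0\sum_{i=1}^n\Big(W_{ni}-\tfrac1n\Big)X_i\varepsilon_i + R_n,
\]
where, with $\hat\Omega_n^W:=\sum_i W_{ni}X_iX_i^\intercal$, the remainder $R_n=(I_p-\hat\Theta_n\hat\Omega_n^W)(\beta-\beta_0)-\Delta_n+(\hat\Theta_n-\Theta_0)\sum_i W_{ni}X_i\varepsilon_i$ collects the posterior bias, the frequentist expansion error, and the precision-estimation error. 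Because $d_{BL}$ in \eqref{Eqn:dBL} is built from the $\ell_\infty$ Lipschitz modulus, a shift by $\sqrt{n}E_J^\intercal R_n$ perturbs the distance by at most the posterior expectation of $\|\sqrt{n}E_J^\intercal R_n\|_\infty\wedge 1$, so it suffices to show $\|\sqrt{n}R_n\|_\infty\overset{P_0}{\to}0$ in posterior probability.

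First I would dispatch $R_n$ using exactly the bounds already needed for Theorem~\ref{Thm:BvM1}. The three pieces are controlled in the sup-norm by $\sqrt{n}\|\hat\Theta_n\hat\Omega_n^W-I_p\|_{\max}\|\beta-\beta_0\|_1$, by $\sqrt{n}\|\Delta_n\|_\infty$, and by $\sqrt{n}\|\hat\Theta_n-\Theta_0\|_\infty\|\sum_iW_{ni}X_i\varepsilon_i\|_\infty$. Splitting $\hat\Theta_n\hat\Omega_n^W-I_p=(\hat\Theta_n\hat\Omega_n-I_p)+\hat\Theta_n(\hat\Omega_n^W-\hat\Omega_n)$ and bounding the bootstrap Gram fluctuation $\|\hat\Omega_n^W-\hat\Omega_n\|_{\max}$ together with $\|\sum_iW_{ni}X_i\varepsilon_i\|_\infty$ by $O_{P_0}(\sqrt{\log p/n}+\log^2n\log^2p/n)$ via sub-Gaussian and Bayesian-bootstrap Bernstein concentration (the second term reflecting the sub-exponential size of the normalized exponential weights), these quantities are $o_{P_0}(1)$ precisely under the two stated rate conditions $\sqrt{n}\epsilon_n\gamma_n\to0$ and $\sqrt{n}(\epsilon_n\|\Theta_0\|_\infty+\delta_n)(\sqrt{\log p/n}+\log^2n\log^2p/n)\to0$. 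Since all of these controls are phrased in $\|\cdot\|_{\max}/\|\cdot\|_\infty$, they hold uniformly over coordinates and hence over the first $J$ of them, so this part needs no new argument.

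It then remains to analyze the leading term $S_n:=E_J^\intercal\sqrt{n}\Theta_0\sum_i(W_{ni}-1/n)X_i\varepsilon_i=\sqrt{n}\sum_i(W_{ni}-1/n)a_i$, where $a_i:=E_J^\intercal\Theta_0X_i\varepsilon_i\in\mathbf{R}^J$. Using $W_{ni}=\omega_i/\sum_j\omega_j$ and $\bar\omega:=n^{-1}\sum_j\omega_j=1+O_P(n^{-1/2})$, I would first show $S_n=\bar\omega^{-1}\{T_n-(\bar\omega-1)\sqrt{n}\,\bar a_n\}$ with $T_n:=n^{-1/2}\sum_i(\omega_i-1)a_i$ and $\bar a_n:=n^{-1}\sum_i a_i$. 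Since $\mathbb{E}_0[a_i]=0$, maximal inequalities give $\|\sqrt{n}\bar a_n\|_\infty$ and $\|T_n\|_\infty$ of order $\sqrt{\log J}$, so $\|S_n-T_n\|_\infty=O_P(n^{-1/2}\sqrt{\log J})=o_P(1)$. Conditional on $Z^{(n)}$, $T_n$ is a sum of independent mean-zero vectors with covariance $\hat\Sigma_n:=n^{-1}\sum_i a_ia_i^\intercal$ (as $\mathrm{Var}(\omega_i)=1$), and Assumption~\ref{Assump:MomentsSimutaneous} with a high-dimensional law of large numbers yields $\|\hat\Sigma_n-\Sigma^2_{0,J}\|_{op}\overset{P_0}{\to}0$; since $A\mapsto A^{1/2}$ is $\tfrac12$-Hölder in operator norm and $\Sigma^2_{0,J}$ has bounded eigenvalues, a Gaussian coupling gives $d_{BL}(N(0,\hat\Sigma_n),N(0,\Sigma^2_{0,J}))\overset{P_0}{\to}0$, reducing the target to $d_{BL}(\mathcal{L}_{\Pi}(T_n\mid Z^{(n)}),N(0,\hat\Sigma_n))\overset{P_0}{\to}0$.

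The crux, and the step I expect to be the main obstacle, is this conditional high-dimensional CLT in bounded Lipschitz distance over the growing dimension $J$. I would establish it by a smoothing-plus-Lindeberg argument: mollify a test function $f$ with $\|f\|_{BL}\le1$ against $N(0,\sigma^2I_J)$, incurring an $\ell_\infty$ smoothing bias of order $\sigma\sqrt{\log J}$, and then swap the bootstrap summands $n^{-1/2}(\omega_i-1)a_i$ for moment-matched Gaussian summands one index at a time. The third-order Taylor remainder is governed by $\|\nabla^3 f_\sigma\|\lesssim\sigma^{-2}$ contracted against $n^{-3/2}\sum_i\mathbb{E}|\omega_i-1|^3\,\|a_i\|_2^3$; bounding $\|a_i\|_2^3\le J^{3/2}\|a_i\|_\infty^3$ and using $\nu_J^3=\mathbb{E}_0\|a_i\|_\infty^3$ gives a swapping error of order $\sigma^{-2}\nu_J^3J^{3/2}n^{-1/2}$ up to logarithmic factors. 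Optimizing $\sigma$ to balance $\sigma\sqrt{\log J}$ against this swapping error produces exactly the condition $\nu_J^6J^3\log^9(1+J)/n\to0$, with the precise log power emerging from careful tracking of the logarithmic factors that accumulate in the $\ell_\infty$ smoothing bias, the derivative bounds of the mollified function, and the maximal inequalities. The delicate points are controlling the dimension-dependence of the higher derivatives of $f_\sigma$ in the $\ell_\infty$ geometry of $d_{BL}$ and accounting honestly for these logarithmic losses; once these are in hand, combining the negligible remainder $R_n$, the reduction $S_n=T_n+o_P(1)$, the covariance convergence $\hat\Sigma_n\to\Sigma^2_{0,J}$, and the conditional CLT delivers the claimed convergence.
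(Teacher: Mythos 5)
Your treatment of the remainder is essentially the paper's: the same decomposition into $S_n$, $R_{1n}$, $R_{2n}(\beta-\beta_0)$, and $\Delta_n$, with the same $\|\cdot\|_{\max}/\|\cdot\|_\infty$ bounds, and the paper's proof of Theorem~\ref{Thm:BvM2} indeed just observes that the bounds in \eqref{Thm:BvM1:Eqn:3}--\eqref{Thm:BvM1:Eqn:8} are uniform over coordinates once $e_j$ is replaced by $E_J$. Where you genuinely diverge is the leading term. The paper does not prove a conditional CLT from scratch: Lemma~\ref{Lem:BootApproxSimultaneous} invokes the exchangeable-bootstrap strong approximation of \citet{FangSantosShaikhTorgovitsky2023LP} (restated as Theorem~\ref{Thm:strong approx}), which directly delivers a Gaussian vector $\tilde{\mathbb G}_J\sim N(0,\Sigma^2_{0,J})$ independent of $Z^{(n)}$ with $\|E_J^\intercal\sqrt{n}S_n-\tilde{\mathbb G}_J\|_\infty=O_{P_{Z,W}}\big((\nu_J^2J\log^3(1+J)/n^{1/3})^{1/4}\big)$; the rate condition $\nu_J^6J^3\log^9(1+J)/n\to0$ is exactly what makes this coupling error vanish, and the $\log^9$ is inherited from that cited bound rather than derived. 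You instead propose a self-contained smoothing-plus-Lindeberg argument after reducing $S_n$ to $T_n=n^{-1/2}\sum_i(\omega_i-1)a_i$ and replacing $\hat\Sigma_n$ by $\Sigma^2_{0,J}$. This route is sound and your balancing recovers the correct polynomial rate $\nu_J^6J^3/n\to0$ (your own heuristic yields fewer logarithms than $\log^9$, so the stated hypothesis is more than sufficient), but it puts the full burden of the high-dimensional anti-concentration and derivative bookkeeping on you, which is precisely the content the paper outsources. Two small cautions: your claim that $\|\sqrt{n}\bar a_n\|_\infty$ and $\|T_n\|_\infty$ are $O_P(\sqrt{\log J})$ requires sub-exponential tails not guaranteed by Assumption~\ref{Assump:MomentsSimutaneous}(ii) (a cruder $O_P(\sqrt J)$ bound suffices under the stated rate condition, so nothing breaks); and note that the paper's coupling formulation also feeds the quantile-coverage arguments of the type in Corollary~\ref{Cor:BvM1}, so the genuinely distributional $d_{BL}$ statement you target is weaker than what Lemma~\ref{Lem:BootApproxSimultaneous} actually provides, though it is all that Theorem~\ref{Thm:BvM2} requires.
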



Theorem \ref{Thm:BvM2} extends the marginal BvM result in Theorem \ref{Thm:BvM1} to simultaneous inference on a growing subset of regression coefficients. Specifically, the result establishes joint asymptotic normality of the debiased posterior for any subvector of $\tilde{\beta}$ of size $J$, provided that $J$ increases with the sample size $n$ at a suitable rate. This rate is derived from the strong approximation result for the exchangeable bootstrap in \cite{FangSantosShaikhTorgovitsky2023LP}. The condition $\nu_J^{6}J^{3}\log^{9}(1+J)/n \to 0$ captures the complexity of the simultaneous inference problem, balancing the growth of the subset dimension against sample size and moment bounds. If we assume enough moment restrictions on $\|E_{J}^{\intercal}\Theta_0X_{i}\varepsilon_i\|_{\infty}$ (e.g., $\mathbb E_0[\|E_{J}^{\intercal}\Theta_0X_{i}\varepsilon_i\|^{4}_{\infty}]$ being bounded uniformly in $n$ and $p$), we can obtain Theorem \ref{Thm:BvM2} under $J/n^2\to\infty$ (up to logarithmic factors) by Theorem \ref{Thm:strong approx}.

\subsection{Primitive Conditions}\label{Sec:42}

We now verify the high-level conditions in Assumptions \ref{Assump:Beta}-\ref{Assump:Frequentist} by introducing a set of primitive conditions. For concreteness, we focus on a benchmark specification in which the regression coefficients are assigned a spike-and-slab prior, the precision matrix is estimated via the nodewise LASSO procedure, and the centering point estimator is constructed by the deibased LASSO. The general theoretical framework, however, extends to global–local shrinkage priors, including the horseshoe family, and other estimation approaches for the precision matrix, including the CLIME approach. Detailed results for these additional cases are provided in the Appendix.


We follow the construction of \cite{Castilloetal_BayesianSparse_2015}, who specify the spike-and-slab prior through the following hierarchical mixture for $\beta = (\beta_1,\ldots,\beta_p)^{\intercal}$, 
\begin{align}\label{SSprior}
	\pi(\beta|r,\lambda) = \prod_{j=1}^{p}\big[(1-r)\,\delta_0(\beta_j) + r\,\psi(\beta_j|\lambda)\big],	
\end{align}
where $\delta_0$ denotes a point mass at zero, and $\psi(\beta_j|\lambda) = \frac{\lambda}{2}\exp(-\lambda|\beta_j|)$ is the Laplace density with hyperparameter $\lambda>0$. The mixing weight $r$ controls the proportion of nonzero coefficients\footnote{If $r$ were taken to be deterministic, it would represent the expected number of nonzero coefficients \textit{a priori}.} and thus governs the degree of model sparsity. Following \citet{Castilloetal_BayesianSparse_2015}, we assign a Beta hyper-prior 
\begin{align}\label{SSpriorBeta} 
r\sim \mathrm{Beta}(1,p^u),
\end{align}
with hyperparameter $u>1$, large values of which favor sparse models by placing most of the prior mass on small values of $r$.  

The point mass in the spike-and-slab prior is designed for explicit variable selection. However, the posterior under this prior places mass over all $2^p$ candidate models, which is computationally prohibitive in high dimensions. We therefore adopt a variational Bayesian approximation scheme. The variational posterior is defined as the best approximation of the posterior within a given class. The crux is to choose this class both sufficiently rich to approximate the exact posterior well, while at the same time also simple enough so that the minimization problem can be efficiently solved. For this purpose, we consider the following mean-field family from \cite{RaySzabo2022Variational} for $\mu = (\mu_1,\ldots,\mu_p)^{\intercal}$, $\sigma^2 = (\sigma^2_1,\ldots,\sigma^2_p)^{\intercal}$, and $\gamma = (\gamma_1,\ldots,\gamma_p)^{\intercal}$,
 \begin{align}\label{VBFamily}
 \hspace{-0.2cm}\mathcal{P}_{MF}:=\left\{P_{\mu,\sigma^2,\gamma}=\bigotimes_{j=1}^p\big[\gamma_j\,N(\mu_j,\sigma_j^2)+(1-\gamma_j)\,\delta_0\big]:\mu_j\in\mathbf{R},\sigma^2_j>0,\gamma_j\in[0,1] \right\},
 \end{align}
 where $\gamma_j$ plays the role of a variational inclusion probability. 
 The resulting variational Bayesian approximate posterior is defined as the minimizer of the Kullback-Leibler divergence with respect to the exact posterior:
 \begin{align}\label{VBPost}
 	\tilde{\Pi}_{\beta}(\,\cdot\,|Z^{(n)})= \underset{P_{\mu,\sigma^2,\gamma}\in 	\mathcal{P}_{MF}}{\arg \min}\mathrm{KL}\left(P_{\mu,\sigma^2,\gamma}~\|~\Pi_{\beta}(\,\cdot\,|Z^{(n)})\right),
 \end{align}
 where $\mathrm{KL}(\,P\,\|\,Q\,):=\int \log (dP/dQ) dP$ for two probability distributions $P$ and $Q$, and $\Pi_{\beta}(\,\cdot\,|Z^{(n)})$ is the exact posterior. Note that the above mean-field class enforces substantial independence in the approximated posterior. By doing so, it significantly reduces the model complexity, because there are only $p$ inclusion variables $\gamma_j$ to consider, rather than a total of $2^p$ models which the exact posterior puts mass on.
 
The mean-field class approximates the posterior distribution by a product-form distribution, thereby ignoring dependence among different coordinates. Nevertheless, it has been shown to achieve the desired rate of contraction to the true parameter \citep{RaySzabo2022Variational}, which is essential for our high-level conditions to ensure the asymptotic normality of the debiased posterior. Without the debiasing step, however, the variational posterior generally fails to deliver asymptotically correct coverage, even in low-dimensional parametric settings \citep{WangBlei2019VB}. This highlights the versatility of our debiasing proposal from a complementary perspective. The trade-off for relaxing the exactness of the posterior distribution lies in the substantial computational gains. Component-wise coordinate-ascent variational inference algorithms for $\tilde{\Pi}_{\beta}(\,\cdot\,|Z^{(n)})$ are given in  \cite{RaySzabo2022Variational}. Beyond the mean-field family, our arguments also extend to other closely related variational classes that allow dependence among the nonzero coordinates (see Equation (9) in \citealp{RaySzabo2022Variational}), yielding analogous theoretical guarantee. 

Turning to Assumption \ref{Assump:Precision}, we consider the nodewise LASSO regression following \citet{VandeGeer_OnAsymptotically_2014}. Let $X_{j,i}$ denote the $j$th component of $X_{i}$ and let ${X}_{-j,i}\in\mathbf{R}^{p-1}$ be the subvector of $X_{i}$ excluding its $j$th entry. For each $j=1,\ldots, p$, we define the following LASSO regression:
 \begin{align}\label{Eqn: nodewise1}
 \hat{\theta}_j:=\underset{\theta \in \mathbf{R}^{p-1}}{\arg \min} \frac{1}{n}\sum_{i=1}^{n} (X_{j,i}-{X}_{-j,i}^{\intercal} \theta)^2 +2 \lambda_j\|\theta\|_1,
 \end{align}
 with the corresponding residual variance given by
  \begin{align}\label{Eqn: nodewise2}
 	\hat{\tau}_j^2:= \frac{1}{n}\sum_{i=1}^{n} (X_{j,i}-{X}_{-j,i} ^{\intercal}\hat\theta_{j})^2+\lambda_j\|\hat{\theta}_j\|_1,
 \end{align}
 where $\lambda_{j}>0$ is a tuning parameter. Writing $\hat{\theta}_j=(\hat{\theta}_{j, 1},\ldots, \hat{\theta}_{j, j-1}, \hat{\theta}_{j, j+1}, \ldots, \hat{\theta}_{j, p})$, we construct the estimated precision matrix $\hat{\Theta}_n$ as
 \begin{align}\label{Eqn: nodewise3}
 	\hat{\Theta}_n = \left(\begin{array}{cccc}
 		1/\hat{\tau}_1^2 & -\hat{\theta}_{1,2}/\hat{\tau}_1^2 & \cdots & -\hat{\theta}_{1, p}/\hat{\tau}_1^2 \\
 		-\hat{\theta}_{2,1}/\hat{\tau}_2^2 & 1/\hat{\tau}_2^2 & \cdots & -\hat{\theta}_{2, p}/\hat{\tau}_2^2 \\
 		\vdots & \vdots & \ddots & \vdots \\
 		-\hat{\theta}_{p, 1}/\hat{\tau}_p^2 & -\hat{\theta}_{p, 2}/\hat{\tau}_p^2 & \cdots & 1/\hat{\tau}_p^2
 	\end{array}\right).
 \end{align}
The estimator $\hat\Theta_n$ is motivated by noting that the restriction $\Omega_0^{-1}\Omega_0=I_p$ amounts to first order conditions of $p$ linear projections after parametrizing $\Omega_0^{-1}$ in the same structure as $\hat\Theta_n$.

 We choose the centering point estimator as the debiased LASSO estimator. For $\hat{\Theta}_n$ given in \eqref{Eqn: nodewise1}-\eqref{Eqn: nodewise3}, the debiased LASSO estimator is given by
\begin{align}\label{DebiasedLASSO}
\hat{\beta}_n=\hat{\beta}_n^{\text{LASSO}}+ \hat{\Theta}_n\left[\frac{1}{n}\sum_{i=1}^{n}X_i(Y_i-X_i^\intercal\hat{\beta}_n^{\text{LASSO}})\right],
\end{align}
where $\hat{\beta}_n^{\text{LASSO}}$ is a LASSO estimator given by 
\begin{align}\label{LASSO}
 \hat{\beta}_n^{\text{LASSO}}:=\underset{\beta \in \mathbf{R}^{p}}{\arg \min} \frac{1}{n}\sum_{i=1}^{n} (Y_i-X_{i}^{\intercal}\beta)^2 +\rho\|\beta\|_1,
\end{align}
where $\rho>0$ is a tuning parameter. 

We adopt the following standard notation in analyzing the high-dimensional regression. For a vector $\beta=(\beta_1,\cdots,\beta_p)^{\intercal}\in\mathbf{R}^p$ and a subset $S\subset\{1,\cdots,p \}$ of indices, let $\beta_S$ be the vector $(\beta_j)_{j\in S}\in \mathbf{R}^{|S|}$, where $|S|$ is the cardinality of $S$. Let $S_{\beta}:=\{ j\in \{1,\cdots,p \} :\beta_{j}\neq 0\}$ denote the index set of nonzero components of $\beta$, and $s_{\beta}:= |S_{\beta}|$ denote the sparsity level of $\beta$.
Recall that $\hat{\Omega}_n= \sum_{i=1}^{n}X_iX_i^\intercal/n$. Define $\nu(\hat{\Omega}_n):=\max_{1\leq j\leq p}\sqrt{e_{j}^{\intercal}\hat{\Omega}_n e_{j}}$ as the square root of the largest diagonal element of $\hat{\Omega}_n$.

\begin{ass}[Model Specification]\label{Assump:Error}
(i) $X_i$ and $\varepsilon_i$ are independent with $\mathbb E_{0}[X_i] =0$ and $\varepsilon_{i}\sim N(0,1)$; (ii) the eigenvalues of $\Omega_0$ are bounded away from zero and the diagonal entries of $\Omega_0$ are bounded, that is, $\max_{1\leq j \leq p}e_{j}^\intercal \Omega_0 e_j<\infty$.
\end{ass}


\begin{ass}[Hyper/Tuning Parameters]\label{Assump:Tuning}
(i) $\nu(\hat{\Omega}_n)\sqrt{n} /p\leq\lambda\leq 4\nu(\hat{\Omega}_n)\sqrt{n\log p}$ holds with probability one; (ii) $\max_{1\leq j\leq p} \lambda_j \asymp \sqrt{\log p / n})$; (iii) $\rho \asymp \sqrt{\log p / n}$. 
\end{ass}


\begin{ass}[Sparsity and Dimensionality]\label{Assump:Sparsity}
The following conditions hold: 
\[s_{\beta_0}\|\Theta_0\|_{\infty}\frac{\log p}{\sqrt{n}}\left(1 + \frac{\log^{3/2}p\log^2 n}{\sqrt{n}}\right)\to 0\]
 and 
\[\max_{1\leq j\leq p}s_{\Theta_0,j}\frac{\log p}{\sqrt{n}}\left(1 + \frac{\log^{3/2}p\log^2 n}{\sqrt{n}}\right)\to 0,\]
where $s_{\Theta_0,j}:=s_{\Theta_0e_j}$ denotes the sparsity level of the $j$th row/column of $\Theta_0$. 
\end{ass}

All the assumptions are standard in the literature. Following \cite{Castilloetal_BayesianSparse_2015},
we assume $\varepsilon_i \sim N(0,1)$ for simplicity. In the case of an unknown variance, $\varepsilon_i \sim N(0,\sigma_0^2)$, one may rescale the data using an estimate of $\sigma_{0}^2$, thereby adopting an empirical Bayes approach. Alternatively, a fully Bayesian treatment can be employed by assigning a prior to $\sigma_{0}^2$, for instance, an inverse-Gamma prior. In contrast to \cite{Castilloetal_BayesianSparse_2015} and \cite{RaySzabo2022Variational}, we provide primitive conditions for their high-level conditions on compatibility numbers. Since $\|\Theta_0\|_{\infty
}=O(\max_{1\leq j\leq p}\sqrt{s_{\Theta_0,j}})$, Assumption \ref{Assump:Sparsity} involves the sparsity levels of $\beta_0$ and $\Theta_0$ and the dimension $p$. However, the requirements are slightly different from those in the frequentist inference literature. For example, \citet{VandeGeer_OnAsymptotically_2014} require  $s_{\beta_0}\log p/\sqrt{n}\to 0$ and $\max_{1\leq j\leq p}s_{\Theta_0,j}\log p/\sqrt{n}\to 0$. When the sparsity levels of $\beta_0$ and $\Theta_0$ are bounded, Assumption \ref{Assump:Sparsity} requires $\log^{5/2}p = o(n/\log^2 n)$, which is slightly more retrictive than $\log p = o(\sqrt{n})$ required in \citet{VandeGeer_OnAsymptotically_2014}. As in the literature on frequentist inference, $p$ is allowed to grow exponentially with $n$. 

\begin{pro}\label{Pro:Primitive}
Consider the prior in \eqref{SSprior}-\eqref{SSpriorBeta}, the precision matrix estimator in \eqref{Eqn: nodewise1}-\eqref{Eqn: nodewise3}, and the centering point estimator defined by \eqref{DebiasedLASSO}-\eqref{LASSO}. Suppose Assumptions \ref{Assump:Moments}-\ref{Assump:Sparsity} hold and $\nu_{J}^{6}J^{3}\log^{9}(1+J)/n\to 0$. The conclusions of Theorems \ref{Thm:BvM1} and \ref{Thm:BvM2} hold for the exact posterior. If $\lambda =O_{P_0}(\sqrt{n\log p}/s_{\beta_0})$, the conclusions also hold for the variational Bayesian approximate posterior in \eqref{VBFamily}-\eqref{VBPost}.
\end{pro}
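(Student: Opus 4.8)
The plan is to treat Proposition~\ref{Pro:Primitive} as a verification result. Under the primitive conditions (Assumptions~\ref{Assump:Moments}--\ref{Assump:Sparsity}) I will check that the high-level Assumptions~\ref{Assump:Beta}--\ref{Assump:Frequentist} hold with \emph{explicit} rates for the spike-and-slab / nodewise-LASSO / debiased-LASSO triple, and then show that the rate conditions of Theorems~\ref{Thm:BvM1} and~\ref{Thm:BvM2} collapse exactly to the two displays in Assumption~\ref{Assump:Sparsity}. Once the three high-level assumptions are in force and the rate conditions verified, the conclusions follow by direct appeal to the two theorems (the simultaneous-inference inputs, Assumption~\ref{Assump:MomentsSimutaneous} and $\nu_J^6 J^3\log^9(1+J)/n\to 0$, are assumed outright). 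Concretely, I need to pin down three quantities: the $\ell_1$ contraction rate $\epsilon_n$, the precision-matrix rates $(\gamma_n,\delta_n)$, and the order of the linearization remainder $\Delta_n$.

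For Assumption~\ref{Assump:Beta} and the exact posterior, I would invoke the $\ell_1$ contraction theorem of \citet{Castilloetal_BayesianSparse_2015} for the prior \eqref{SSprior}--\eqref{SSpriorBeta}, which yields $\epsilon_n\asymp s_{\beta_0}\sqrt{\log p/n}$ once the tuning range of Assumption~\ref{Assump:Tuning}(i) is in place. The one genuinely nontrivial input is their compatibility / restricted-eigenvalue condition, which is stated on the realized design: I would transfer it from the population level to the sample level by a concentration argument, using sub-Gaussianity of $X_i$ (Assumption~\ref{Assump:Moments}(ii)) together with the bounded-below eigenvalues of $\Omega_0$ (Assumption~\ref{Assump:Error}(ii)) to show that $\hat\Omega_n$ inherits a compatibility constant bounded away from zero on the relevant cone with probability tending to one. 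For the variational posterior I would instead apply the contraction result of \citet{RaySzabo2022Variational} over the mean-field family \eqref{VBFamily}--\eqref{VBPost}; here the extra requirement $\lambda=O_{P_0}(\sqrt{n\log p}/s_{\beta_0})$ enters to keep the Laplace-slab bias negligible, so that the variational posterior contracts at the same rate $\epsilon_n\asymp s_{\beta_0}\sqrt{\log p/n}$.

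For Assumption~\ref{Assump:Precision} I would read off the nodewise-LASSO analysis of \citet{VandeGeer_OnAsymptotically_2014} applied to \eqref{Eqn: nodewise1}--\eqref{Eqn: nodewise3}, giving $\gamma_n\asymp\sqrt{\log p/n}$ for $\|\hat\Theta_n\hat\Omega_n-I_p\|_{\max}$ and $\delta_n\asymp\max_{j}s_{\Theta_0,j}\sqrt{\log p/n}$ for $\|\hat\Theta_n-\Theta_0\|_{\infty}$, again after the same population-to-sample compatibility transfer (now for the $p$ nodewise regressions). For Assumption~\ref{Assump:Frequentist} I would expand the debiased LASSO \eqref{DebiasedLASSO}: substituting $Y_i=X_i^\intercal\beta_0+\varepsilon_i$ gives
\[
\hat\beta_n=\beta_0+\Theta_0\frac1n\sum_{i=1}^n X_i\varepsilon_i-(\hat\Theta_n\hat\Omega_n-I_p)(\hat\beta_n^{\text{LASSO}}-\beta_0)+(\hat\Theta_n-\Theta_0)\frac1n\sum_{i=1}^n X_i\varepsilon_i,
\]
so $\Delta_n$ is the sum of the last two terms. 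Bounding the first by $\|\hat\Theta_n\hat\Omega_n-I_p\|_{\max}\,\|\hat\beta_n^{\text{LASSO}}-\beta_0\|_1=O_{P_0}(\gamma_n s_{\beta_0}\sqrt{\log p/n})$ and the second by $\|\hat\Theta_n-\Theta_0\|_{\infty}\,\|\tfrac1n\sum_i X_i\varepsilon_i\|_{\infty}=O_{P_0}(\delta_n\sqrt{\log p/n})$ (using the elementary $\|Av\|_\infty\le\|A\|_\infty\|v\|_\infty$ and a maximal inequality for $\|\tfrac1n\sum_i X_i\varepsilon_i\|_\infty$), both are $o_{P_0}(n^{-1/2})$ precisely under the leading terms of Assumption~\ref{Assump:Sparsity}.

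Finally I would do the rate bookkeeping. With $\epsilon_n\asymp s_{\beta_0}\sqrt{\log p/n}$ and $\gamma_n\asymp\sqrt{\log p/n}$, the condition $\sqrt n\,\epsilon_n\gamma_n\to0$ becomes $s_{\beta_0}\log p/\sqrt n\to0$, implied by the first display of Assumption~\ref{Assump:Sparsity}. Substituting the same rates into $\sqrt n(\epsilon_n\|\Theta_0\|_\infty+\delta_n)(\sqrt{\log p/n}+\log^2 n\log^2 p/n)$ and separating the two summands reproduces, term by term, the two displays of Assumption~\ref{Assump:Sparsity} (the $\epsilon_n\|\Theta_0\|_\infty$ piece matching the $s_{\beta_0}\|\Theta_0\|_\infty$ display, and the $\delta_n$ piece matching the $\max_j s_{\Theta_0,j}$ display). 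With all rate conditions met, Theorems~\ref{Thm:BvM1} and~\ref{Thm:BvM2} apply to both the exact and the variational posterior. The main obstacle is the design-level step flagged above: the cited contraction and nodewise-LASSO results are stated conditionally on compatibility numbers of the realized Gram matrix, so the crux is the uniform, high-probability transfer of these numbers from $\Omega_0$ to $\hat\Omega_n$ under the sub-Gaussian random design, which is what makes the stated conditions genuinely primitive.
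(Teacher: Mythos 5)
Your proposal is correct and follows essentially the same route as the paper's proof: verify Assumption~\ref{Assump:Beta} via the contraction theorems of \citet{Castilloetal_BayesianSparse_2015} (exact) and \citet{RaySzabo2022Variational} (variational) with $\epsilon_n\asymp s_{\beta_0}\sqrt{\log p/n}$, transferring the compatibility numbers from $\Omega_0$ to $\hat\Omega_n$ by a restricted-eigenvalue concentration argument under the sub-Gaussian design; verify Assumption~\ref{Assump:Precision} with $\gamma_n\asymp\sqrt{\log p/n}$ and $\delta_n\asymp\max_j s_{\Theta_0,j}\sqrt{\log p/n}$ from the nodewise-LASSO analysis of \citet{VandeGeer_OnAsymptotically_2014}; verify Assumption~\ref{Assump:Frequentist} by the debiased-LASSO expansion with the $(\hat\Theta_n-\Theta_0)\tfrac1n\sum_i X_i\varepsilon_i$ remainder bounded by $\delta_n\sqrt{\log p/n}$; and close with the same rate bookkeeping against Assumption~\ref{Assump:Sparsity}. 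The only cosmetic differences are that the paper cites Theorem~2.4 of \citet{VandeGeer_OnAsymptotically_2014} for the expansion you derive by hand, and that the variational rate in the paper carries an extra slowly diverging factor $\rho_n$, which does not affect the conclusion.
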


\begin{rem}
\cite{Castilloetal_BayesianSparse_2015} provide thorough theoretical analysis of the exact posterior in high-dimensional settings. They establish a distributional approximation result, showing that the posterior can be approximated by a mixture of normal distributions. Under suitable conditions on the hyperparameter $\lambda$ and a standard \emph{beta-min} condition, the posterior credible sets achieve correct asymptotic coverage for nonzero coefficients, while assigning asymptotic point mass at zero for truly zero coefficients. These assumptions are substantially stronger. More importantly, the computational burden of evaluating the exact posterior makes it infeasible in real high-dimensional problems. As our Monte Carlo simulations demonstrate, when using the variational Bayes approximation, the method performs poorly in finite samples related to nonzero coefficients, highlighting the practical limitations of the existing spike-and-slab framework in large-scale problems. \qed
\end{rem}


\begin{rem}\label{Rem:DebiasPoint}
It is well known that the LASSO estimator corresponds to the posterior mode under the prior: for $\beta = (\beta_1,\ldots,\beta_p)^{\intercal}$, 
\begin{align}
\pi(\beta|\lambda) = \prod_{j=1}^{p}\frac{\lambda}{2}\exp(-\lambda|\beta_j|),
\end{align}
where $\lambda>0$ corresponds to the tuning parameter in LASSO controlling the amount of shrinkage and sparsity. Consequently, the frequentist debiasing approach can be viewed as applying a correction to a Bayesian point estimator. Beyond using the LASSO as the pilot estimator, the literature has also explored other Bayesian point estimators. For instance, \citet{ZhangPolitis2022Ridge} focus on ridge regression, which corresponds to the posterior mode under Gaussian priors on the regression coefficients, while \citet{Baietal2022Group} study the posterior mode in the additive regression, arising from a continuous spike-and-slab prior \citep{Rockova_ContinuousSpikeSlab_2018}. Our general debiased Bayesian inference encompasses these and other classes of priors. \qed
\end{rem}




\subsection{Additional Discussions}\label{Sec:43}
Our proposal can be seen as performing one-step update based on the identity $\beta_0 = \chi(\beta_0,\Theta_0,F_0)$, where $F_0$ denotes the distribution of $(Y_i,X_i)$, and $\chi$ is a mapping from the space of $(\beta_0,\Theta_0,F_0)$ to that of $\beta_0$ defined by 
\begin{align}\label{Eqn: one-step}
\chi(\beta,\Theta,F)=\beta+\Theta\int x\left [y-x^\intercal \beta\right]dF(y,x).
\end{align}
We consider the posterior law of $F$, denoted by  $\Pi_{F}(\,\cdot\,|Z^{(n)})$, to be the Bayesian bootstrap law, which can be viewed as the posterior of $F$ in a nonparametric Bayesian model with a Dirichlet process prior having a degenerate (zero-mass) base measure \citep{Rubin_BayesianBootstrap_1981}.  Then, \eqref{Eqn:DebiasBayes} can be expressed as
\begin{align}
\tilde{\beta} = \chi(\beta,\hat{\Theta}_n,F),
\end{align}
where $(\beta,F)|Z^{(n)} \sim \Pi_{\beta}(\,\cdot\,|Z^{(n)}) \times \Pi_{F}(\,\cdot\,|Z^{(n)})$. 
Informallly speaking, the debiased posterior is obtained by plugging in the frequentist estimator $\hat{\Theta}_n$ for $\Theta_0$, the initial posterior for $\beta$, and the Bayesian bootstrap for $F$. Note that the evaluation of $\Pi_{F}(\,\cdot\,|Z^{(n)})$ boils down to $\Pi_{W}(\,\cdot\,|Z^{(n)})$, whose randomness involves the weighted exponential random variables. Since the two posteriors are conditionally independent, the debiased posterior distribution arises from the product of the initial posterior distribution of $\beta$ and the posterior distribution of $F$. Next, we discuss the comparison of our approach with a number of recent proposals in the literature.

Our debiased posterior aligns with the one-step posterior framework of \citet{YiuFongHolmesRousseau2023Semiparametric}, which studies scalar parameters of interest in general semiparametric models. Nonetheless, our proposal is distinct in several important ways. 
First, we consider a high-dimensional regime where the number of covariates may exceed the sample size, and we allow for simultaneous inference for a growing number of coefficients. The one-step correction term in \citet{YiuFongHolmesRousseau2023Semiparametric} involves the influence function perturbed by the Bayesian bootstrap weights. In a linear regression model, the $j$th regression coefficient $\beta_{0,j}$ has the influence function $(y,x^{\intercal})^{\intercal}\mapsto e_{j}^{\intercal}\Theta_{0} x(y-x^\intercal \beta_0)$ at the truth. 
This representation permits the application of their one-step posterior in fixed-dimensional settings. In contrast, our framework explicitly employs sparsity-inducing priors in high-dimensional regimes. Moreover, the debiasing construction applies to the entire vector of regression coefficients, allowing the number of covariates to exceed the sample size while both $n$ and $p$ diverge. The resulting general BvM theorems enable inference on subvectors of regression coefficients whose dimension can grow with $(n,p)$. 

Second, we estimate $\Theta_0$ via a frequentist method rather than a Bayesian one, which substantially improves computational efficiency. This necessitates addressing non-trivial technical challenges and developing new theoretical results. Suppose the covariates $\{X_i\}_{i=1}^n$ are \textit{i.i.d.} draws from ${N}(0,\Theta_0^{-1})$. The likelihood of $\{X_i\}_{i=1}^n$ as a function of the precision matrix $\Theta_0$ is  
\begin{align}
    L(\{X_i\}_{i=1}^n|\Theta)
    = \frac{\det(\Theta)^{n/2}}{(2\pi)^{np/2}}
      \exp\left(-\frac{1}{2}\sum_{i=1}^{n} X_{i}^{\intercal}\Theta X_{i}\right).
\end{align}
A natural Bayesian approach would place a sparsity-inducing prior, such as a spike-and-slab prior or a horseshoe-type prior, on the entries of $\Theta_0$ \citep{Atchade2019Graph}. However, computing the posterior distribution of a high-dimensional precision matrix is prohibitive in practice. Our proposal instead adopts a more pragmatic strategy: we simply plug in a frequentist estimator. In particular, one may compute $\hat{\Theta}_n$ either by running $p$ nodewise LASSO regressions or by solving $p$ linear programming problems in CLIME, both of which are computationally efficient. This plug-in approach is feasible because our procedure only requires $\hat{\Theta}_n$ to satisfy certain convergence rate conditions. Since our primary inferential target is the regression coefficients $\beta_0$, and $\Theta_0$ only plays a supporting role, this frequentist plug-in approach is both theoretically valid and practically effective.

Third, we accommodate a broad class of bootstrap weights beyond the Bayesian bootstrap weights; while the Bayesian bootstrap weights provide a natural Bayesian interpretation, alternative weights can be incorporated. A closer look at proofs indicates that other common exchangeable weights may also be employed. The Bayesian bootstrap law corresponds to the posterior law when the empirical distribution of $\{Y_i,X_i\}_{i=1}^n$ is equipped with a Dirichlet process prior with degenerate (zero-mass) base measure. For more general base measures, however, the resulting posterior law becomes substantially more cumbersome both theoretically and computationally; see Equation (2.1) in \cite{RayVaart2021BvM} for a representation of the posterior in this case.
\section{Monte Carlo Simulations}\label{Sec: Monte Carlo}
In this section, we evaluate the finite-sample performance of the proposed debiased Bayesian inference method through a series of Monte Carlo experiments. Specifically, we examine the frequentist coverage of the Bayesian credible sets and assess estimation accuracy in terms of bias and root mean squared error (RMSE). Beyond the stylized linear model with standard Gaussian errors presented in Section \ref{Sec: 4}, we further demonstrate the robustness and effectiveness of our approach across a range of more complex data-generating processes. We highlight that current results are all based on the default choices of hyper-parameters and tuning parameters in the existing R packages, which do not require additional interventions from users.

\pgfplotstableread{
	beta alpha  Bayes     DebiasBayes  DebiasLasso
	0    0.95   0.9956444  0.9379778     0.9522667
	1    0.95   0.1120000  0.9060000     0.8870000
	2    0.95   0.0750000  0.9050000     0.9100000
	3    0.95   0.0500000  0.8940000     0.8820000
	4    0.95   0.0670000  0.9070000     0.6290000
	5    0.95   0.0820000  0.9080000     0.8400000
}\pfnoheter  

\pgfplotstableread{
	beta  alpha  Bayes      DebiasBayes  DebiasLasso
	0     0.95   0.9947778  0.9392       0.9522222
	1     0.95   0.0620000  0.9210       0.9480000
	2     0.95   0.0530000  0.9090       0.9270000
	3     0.95   0.0570000  0.8880       0.9110000
	4     0.95   0.0600000  0.9020       0.6450000
	5     0.95   0.0730000  0.9250       0.8710000
}\pfnohomochi  

\pgfplotstableread{
	beta  alpha     Bayes     DebiasBayes  DebiasLasso
	0     0.95      0.9913333 0.9454444    0.9519111
	1     0.95      0.3070000 0.8910000    0.9250000
	2     0.95      0.4590000 0.8810000    0.8470000
	3     0.95      0.4670000 0.8850000    0.7230000
	4     0.95      0.5210000 0.8990000    0.6220000
	5     0.95      0.5810000 0.9010000    0.7560000
}\pfnohomon  

\pgfplotstableread{
	beta  alpha     Bayes     DebiasBayes  DebiasLasso
	0     0.95      0.9998632 0.9392316    0.9528421
	1     0.95      0.0530000 0.9210000    0.9080000
	2     0.95      0.0560000 0.9310000    0.9440000
	3     0.95      0.0780000 0.8900000    0.8970000
	4     0.95      0.0820000 0.9060000    0.6760000
	5     0.95      0.0940000 0.9290000    0.8530000
}\ponoheter  

\pgfplotstableread{
	beta  alpha  Bayes      DebiasBayes  DebiasLasso
	0     0.95   0.9999158  0.9371368    0.9528105
	1     0.95   0.0390000  0.9400000    0.9590000
	2     0.95   0.0460000  0.9130000    0.9290000
	3     0.95   0.0470000  0.9020000    0.9310000
	4     0.95   0.0400000  0.9020000    0.7100000
	5     0.95   0.0530000  0.9250000    0.8700000
}\ponohomochi  

\pgfplotstableread{
	beta  alpha  Bayes      DebiasBayes  DebiasLasso
	0     0.95   0.9995158  0.9480526    0.9601368
	1     0.95   0.1520000  0.9210000    0.9330000
	2     0.95   0.3130000  0.9000000    0.8700000
	3     0.95   0.3330000  0.8840000    0.7330000
	4     0.95   0.4080000  0.8970000    0.5900000
	5     0.95   0.5460000  0.9140000    0.7280000
}\ponohomon  
	
\pgfplotstableread{
	beta  alpha  Bayes      DebiasBayes  DebiasLasso
	0     0.95    0.9999333  0.9378872     0.953159
	1     0.95    0.0390000  0.9270000     0.921000
	2     0.95    0.0430000  0.9220000     0.944000
	3     0.95    0.0570000  0.8740000     0.918000
	4     0.95    0.0500000  0.9140000     0.728000
	5     0.95    0.0530000  0.9290000     0.885000
}\ptonoheter  

\pgfplotstableread{
	beta  alpha  Bayes      DebiasBayes  DebiasLasso
	0     0.95    0.999959   0.9368205     0.9530615
	1     0.95    0.018000   0.9210000     0.9540000
	2     0.95    0.029000   0.9190000     0.9410000
	3     0.95    0.046000   0.8840000     0.9340000
	4     0.95    0.045000   0.9170000     0.7540000
	5     0.95    0.042000   0.9280000     0.8780000
}\ptonohomochi  

\pgfplotstableread{
	beta  alpha  Bayes      DebiasBayes  DebiasLasso
	0     0.95    0.9997385  0.9457128     0.9647282
	1     0.95    0.1190000  0.9230000     0.9440000
	2     0.95    0.2240000  0.8780000     0.8640000
	3     0.95    0.2200000  0.8500000     0.7250000
	4     0.95    0.3010000  0.8790000     0.5640000
	5     0.95    0.4730000  0.9240000     0.7400000
}\ptonohomon  

\pgfplotstableread{
	beta alpha Bayes      DebiasBayes DebiasLasso
	0    0.95  0.9998444  0.9439556   0.9522
	1    0.95  0.1900000  0.9210000   0.8590
	2    0.95  0.2890000  0.9180000   0.9380
	3    0.95  0.3330000  0.9330000   0.9500
	4    0.95  0.3780000  0.9250000   0.9410
	5    0.95  0.5790000  0.9450000   0.9680
}\CovpfnoheterSigmaone  

\pgfplotstableread{
	beta alpha Bayes      DebiasBayes DebiasLasso
	0    0.95  0.9998889  0.9355111   0.9498444
	1    0.95  0.0800000  0.9090000   0.9370000
	2    0.95  0.1220000  0.9240000   0.9390000
	3    0.95  0.1300000  0.9270000   0.9550000
	4    0.95  0.1540000  0.9370000   0.9550000
	5    0.95  0.2510000  0.9450000   0.9560000
}\CovpfnohomochiSigmaone  

\pgfplotstableread{
	beta alpha Bayes      DebiasBayes DebiasLasso
	0    0.95  0.9991778  0.9604222   0.9544222
	1    0.95  0.6210000  0.9200000   0.9470000
	2    0.95  0.8080000  0.9120000   0.9240000
	3    0.95  0.8710000  0.9230000   0.9320000
	4    0.95  0.9050000  0.9210000   0.9360000
	5    0.95  0.9280000  0.9450000   0.9510000
}\CovpfnohomonSigmaone  

\pgfplotstableread{
	beta alpha Bayes      DebiasBayes DebiasLasso
	0    0.95  0.9998211  0.9430632   0.9519053
	1    0.95  0.1470000  0.9070000   0.8520000
	2    0.95  0.1970000  0.9230000   0.9500000
	3    0.95  0.2510000  0.9220000   0.9500000
	4    0.95  0.3170000  0.9270000   0.9470000
	5    0.95  0.5290000  0.9490000   0.9590000
}\CovponoheterSigmaone  

\pgfplotstableread{
	beta alpha Bayes      DebiasBayes DebiasLasso
	0    0.95  0.9999579  0.9342526   0.9499684
	1    0.95  0.0460000  0.9280000   0.9580000
	2    0.95  0.0700000  0.9310000   0.9560000
	3    0.95  0.0790000  0.9140000   0.9370000
	4    0.95  0.1110000  0.9340000   0.9530000
	5    0.95  0.2010000  0.9550000   0.9610000
}\CovponohomochiSigmaone  

\pgfplotstableread{
	beta alpha Bayes      DebiasBayes DebiasLasso
	0    0.95  0.9991053  0.9630316   0.9560211
	1    0.95  0.4760000  0.8800000   0.9440000
	2    0.95  0.6820000  0.9040000   0.9310000
	3    0.95  0.8080000  0.9240000   0.9390000
	4    0.95  0.8660000  0.9300000   0.9380000
	5    0.95  0.9250000  0.9460000   0.9480000
}\CovponohomonSigmaone  

\pgfplotstableread{
	beta alpha Bayes      DebiasBayes DebiasLasso
	0    0.95  0.9994667  0.9422359   0.9518769
	1    0.95  0.1050000  0.8900000   0.8770000
	2    0.95  0.1460000  0.9030000   0.9400000
	3    0.95  0.1870000  0.9250000   0.9410000
	4    0.95  0.2320000  0.9280000   0.9520000
	5    0.95  0.4480000  0.9430000   0.9550000
}\CovptnoheterSigmaone  

\pgfplotstableread{
	beta alpha Bayes      DebiasBayes DebiasLasso
	0    0.95  0.9998256  0.9352974   0.9508923
	1    0.95  0.0280000  0.9030000   0.9450000
	2    0.95  0.0430000  0.9160000   0.9470000
	3    0.95  0.0590000  0.9200000   0.9530000
	4    0.95  0.0680000  0.9360000   0.9580000
	5    0.95  0.1570000  0.9540000   0.9650000
}\CovptnohomochiSigmaone  

\pgfplotstableread{
	beta alpha Bayes      DebiasBayes DebiasLasso
	0    0.95  0.9987333  0.9634718   0.9564872
	1    0.95  0.3640000  0.8100000   0.9550000
	2    0.95  0.6020000  0.8850000   0.9380000
	3    0.95  0.7430000  0.9130000   0.9350000
	4    0.95  0.8540000  0.9300000   0.9420000
	5    0.95  0.9240000  0.9390000   0.9520000
}\CovptnohomonSigmaone  

\pgfplotstableread{
	beta Bayes       DebiasBayes   DebiasLasso
	0    0.001405958  0.0004518953  0.001585543
	1    0.188266654  0.0198290763  0.006154626
	2    0.379296830  0.0079773669  0.006147921
	3    0.551779900  0.0223169900  0.005791037
	4    0.687083060  0.0496947506  0.029428829
	5    0.798797334  0.0089568229  0.008365539
}\BiaspfnoheterSigmaone  

\pgfplotstableread{
	beta Bayes        DebiasBayes  DebiasLasso
	0    0.0001724629  0.021572592  0.022899255
	1    0.2298244455  0.021302251  0.005174676
	2    0.4558012007  0.018473234  0.002541299
	3    0.6727092236  0.017276890  0.005215583
	4    0.8796584906  0.036355861  0.024209296
	5    1.4444430016  0.005467414  0.001221988
}\BiaspfnohomochiSigmaone  

\pgfplotstableread{
	beta Bayes       DebiasBayes  DebiasLasso
	0    0.001383397  0.005604515  0.003177875
	1    0.123256509  0.019453508  0.011714049
	2    0.084265179  0.003992994  0.008970469
	3    0.064312110  0.008131654  0.019866299
	4    0.066967318  0.014106107  0.030450883
	5    0.055819965  0.006161213  0.022271943
}\BiaspfnohomonSigmaone  

\pgfplotstableread{
	beta Bayes       DebiasBayes   DebiasLasso
	0    0.002200622  0.0005885165  0.0004309773
	1    0.209249474  0.0309213007  0.0058489507
	2    0.433571940  0.0500124761  0.0268647271
	3    0.606398851  0.0267364013  0.0079332456
	4    0.727287633  0.0412250312  0.0239651555
	5    0.883873170  0.0031275531  0.0058530058
}\BiasponoheterSigmaone  

\pgfplotstableread{
	beta Bayes        DebiasBayes  DebiasLasso
	0    0.0004517858  0.017746747  0.014082731
	1    0.2440684313  0.015509021  0.011653418
	2    0.4768083808  0.025150244  0.004387045
	3    0.6944520531  0.036664305  0.026167961
	4    0.8928678258  0.009734311  0.024369349
	5    1.5364042550  0.005458272  0.002653375
}\BiasponohomochiSigmaone  

\pgfplotstableread{
	beta Bayes        DebiasBayes  DebiasLasso
	0    0.0003041703  0.003508030  0.00230409
	1    0.1652696245  0.034634330  0.01369987
	2    0.1777309113  0.028141826  0.02851929
	3    0.1271038110  0.010364185  0.02419351
	4    0.1082425540  0.013614436  0.03484569
	5    0.0426113377  0.009365502  0.01871877
}\BiasponohomonSigmaone  

\pgfplotstableread{
	beta Bayes        DebiasBayes  DebiasLasso
	0    0.0001799846  0.009249999  0.007015938
	1    0.2165986597  0.041992712  0.011857832
	2    0.4451585143  0.041864287  0.010924753
	3    0.6271699454  0.043673796  0.018468610
	4    0.7847656468  0.027481907  0.008362704
	5    1.0399637874  0.027770277  0.028437195
}\BiasptnoheterSigmaone  

\pgfplotstableread{
	beta Bayes       DebiasBayes  DebiasLasso
	0    0.001088812  0.03254957   0.034389545
	1    0.240153859  0.02675843   0.004916506
	2    0.479685095  0.03712065   0.012825109
	3    0.710580816  0.03511702   0.023680412
	4    0.929041048  0.05068757   0.035217777
	5    1.652809687  0.04848980   0.038397464
}\BiasptnohomochiSigmaone  

\pgfplotstableread{
	beta Bayes       DebiasBayes  DebiasLasso
	0    0.001279871  0.005058094  0.004698813
	1    0.182678905  0.046673645  0.016240946
	2    0.215813474  0.031674831  0.023118812
	3    0.183858651  0.023055634  0.029313589
	4    0.140857600  0.016156247  0.031925811
	5    0.065958831  0.012642770  0.038299084
}\BiasptnohomonSigmaone  

\pgfplotstableread{
	beta Bayes       DebiasBayes  DebiasLasso
	0    0.01925207   0.01589084   0.09247683
	1    0.22005008   0.05499430   0.04028210
	2    0.41834090   0.08614162   0.12418467
	3    0.64245754   0.14470681   0.16520591
	4    0.79990089   0.12796052   0.32591465
	5    0.40817257   0.08722685   0.18174314
}\BiaspfnoheterSigmatwo  

\pgfplotstableread{
	beta Bayes       DebiasBayes  DebiasLasso
	0    0.01861526   0.02142069   0.10998463
	1    0.23570318   0.05861051   0.03837186
	2    0.44150229   0.07770014   0.11975520
	3    0.65084910   0.13654482   0.14655078
	4    0.83032609   0.12435710   0.33407180
	5    0.43106131   0.09160899   0.18111704
}\BiaspfnohomochiSigmatwo  

\pgfplotstableread{
	beta Bayes       DebiasBayes  DebiasLasso
	0    0.003968657  0.003084803  0.01835252
	1    0.113172058  0.020778603  0.03020753
	2    0.111124704  0.030065644  0.06907534
	3    0.062742655  0.022823021  0.10995558
	4    0.041286355  0.017870030  0.13628569
	5    0.026227520  0.012266941  0.08989703
}\BiaspfnohomonSigmatwo  

\pgfplotstableread{
	beta Bayes       DebiasBayes  DebiasLasso
	0    0.0003854495 0.01096429   0.09335733
	1    0.2331656821 0.06218585   0.03936348
	2    0.4632173206 0.08336527   0.10991463
	3    0.6692672165 0.15808925   0.15691825
	4    0.8253575326 0.13633374   0.31972620
	5    0.4103119204 0.09383610   0.18014970
}\BiasponoheterSigmatwo  

\pgfplotstableread{
	beta Bayes       DebiasBayes  DebiasLasso
	0    0.0000417    0.01837294   0.11257658
	1    0.2421614    0.05504538   0.03360933
	2    0.4672417    0.07278867   0.10560262
	3    0.6924365    0.15432506   0.14150559
	4    0.8859219    0.13005213   0.32324648
	5    0.4421301    0.09837324   0.17942505
}\BiasponohomochiSigmatwo  

\pgfplotstableread{
	beta Bayes     DebiasBayes  DebiasLasso
	0 0.003506978 0.000316633 0.01817540
	1 0.184636804 0.037862667 0.03510348
	2 0.248930128 0.055971252 0.07236356
	3 0.212215880 0.047770395 0.12039333
	4 0.123404768 0.030420800 0.15347303
	5 0.041941850 0.016230821 0.10470971
}\BiasponohomonSigmatwo  

\pgfplotstableread{
	beta Bayes     DebiasBayes  DebiasLasso
	0 0.0007001274 0.01035257  0.09325036
	1 0.2377599729 0.05398724  0.02732927
	2 0.4640292922 0.07979414  0.09458665
	3 0.6941966000 0.16645638  0.14223467
	4 0.8925797685 0.14864868  0.30480988
	5 0.4387484356 0.09187280  0.15887714
}\BiasptnoheterSigmatwo  

\pgfplotstableread{
	beta  Bayes       DebiasBayes  DebiasLasso
	0     0.0010033   0.01765966   0.1096036
	1     0.2443898   0.05233355   0.0271311
	2     0.4785253   0.09218970   0.1101223
	3     0.6974934   0.15740223   0.1260607
	4     0.9078971   0.13117534   0.2983622
	5     0.4587348   0.10312010   0.1683140
}\BiasptnohomochiSigmatwo  

\pgfplotstableread{
	beta  Bayes       DebiasBayes  DebiasLasso
	0     0.00540785  0.00046631   0.01902918
	1     0.19907592  0.03883401   0.03311536
	2     0.31594224  0.07267480   0.07249739
	3     0.32860568  0.07351858   0.12647247
	4     0.21528844  0.05249209   0.16446921
	5     0.06759288  0.01927046   0.10592719
}\BiasptnohomonSigmatwo  

\pgfplotstableread{
	beta   Bayes       DebiasBayes  DebiasLasso
	0      0.08254837  0.5086305    0.5441416
	1      0.25971430  0.2637163    0.2786553
	2      0.46170014  0.3033568    0.3153846
	3      0.65905194  0.3591929    0.3726307
	4      0.84968007  0.4218454    0.4267199
	5      1.29128758  0.4603910    0.4594298
}\RMSEpfnoheterSigmaone  

\pgfplotstableread{
	beta  Bayes       DebiasBayes  DebiasLasso
	0     0.06209489  0.6611826     0.6778838
	1     0.25160765  0.2675794     0.2839598
	2     0.48733017  0.3771388     0.3906908
	3     0.72631279  0.4635989     0.4665322
	4     0.95027629  0.5102706     0.5193839
	5     1.73324747  0.5847290     0.5860584
}\RMSEpfnohomochiSigmaone  

\pgfplotstableread{
	beta       Bayes       DebiasBayes   DebiasLasso
	0          0.06815679  0.2451978     0.2804317
	1          0.19176764  0.1098113     0.1133537
	2          0.26271838  0.1660429     0.1713026
	3          0.27672554  0.1957949     0.2049842
	4          0.28410034  0.2180002     0.2322905
	5          0.24572306  0.2394319     0.2590445
}\RMSEpfnohomonSigmaone  

\pgfplotstableread{
	beta       Bayes       DebiasBayes   DebiasLasso
	0          0.04539301  0.5078626     0.5396947
	1          0.25982750  0.2650866     0.2876093
	2          0.47776810  0.2918635     0.3088945
	3          0.68868893  0.3547176     0.3620187
	4          0.88052546  0.4314152     0.4370849
	5          1.36814845  0.4550914     0.4526235
}\RMSEponoheterSigmaone  

\pgfplotstableread{
	beta       Bayes       DebiasBayes   DebiasLasso
	0          0.02811768  0.6416459     0.6581995
	1          0.24828050  0.2525080     0.2779783
	2          0.49103707  0.3650983     0.3790894
	3          0.73751676  0.4876046     0.4917899
	4          0.96280747  0.5483693     0.5549353
	5          1.78887747  0.5720341     0.5695781
}\RMSEponohomochiSigmaone  

\pgfplotstableread{
	beta       Bayes       DebiasBayes   DebiasLasso
	0          0.0353110   0.2342876     0.2740181
	1          0.2103085   0.1154194     0.1137970
	2          0.3150441   0.1729335     0.1697419
	3          0.3360473   0.2032149     0.2035512
	4          0.3418442   0.2322440     0.2403927
	5          0.2545867   0.2439282     0.2550367
}\RMSEponohomonSigmaone  

\pgfplotstableread{
	beta       Bayes       DebiasBayes   DebiasLasso
	0          0.06266688  0.5109558     0.5367663
	1          0.25404252  0.2516392     0.2783975
	2          0.48008979  0.2982676     0.3109914
	3          0.70624598  0.3689816     0.3798191
	4          0.91121555  0.4225943     0.4320719
	5          1.47325863  0.4628739     0.4641672
}\RMSEptnoheterSigmaone  

\pgfplotstableread{
	beta       Bayes       DebiasBayes   DebiasLasso
	0          0.03904932  0.6827244     0.6942065
	1          0.25422926  0.2530075     0.2775954
	2          0.49637675  0.3846165     0.3996284
	3          0.73710518  0.4641862     0.4659896
	4          0.97600696  0.5270730     0.5407012
	5          1.84099104  0.5693631     0.5718987
}\RMSEptnohomochiSigmaone  

\pgfplotstableread{
	beta       Bayes       DebiasBayes   DebiasLasso
	0          0.04626209  0.2380254     0.2807367
	1          0.22161738  0.1220298     0.1164194
	2          0.34942491  0.1761957     0.1717340
	3          0.39417550  0.2128253     0.2157517
	4          0.38773882  0.2362682     0.2429289
	5          0.27160848  0.2506018     0.2717188
}\RMSEptnohomonSigmaone  

\pgfplotstableread{
	beta       Bayes       DebiasBayes   DebiasLasso
	0          0.09168509  0.2163082     0.2046474
	1          0.27463131  0.3054994     0.3104504
	2          0.47637559  0.2479545     0.2600198
	3          0.70733907  0.2712974     0.2602138
	4          0.90924111  0.2656003     0.3858627
	5          0.49621814  0.2501143     0.2675998
}\RMSEpfnoheterSigmatwo  

\pgfplotstableread{
	beta       Bayes       DebiasBayes   DebiasLasso
	0          0.08779023  0.2299180     0.2245792
	1          0.26106128  0.2656319     0.2720877
	2          0.48299388  0.2550444     0.2657801
	3          0.71286718  0.2826652     0.2627743
	4          0.92871305  0.2820502     0.4032482
	5          0.50143339  0.2504466     0.2700973
}\RMSEpfnohomochiSigmatwo  

\pgfplotstableread{
	beta       Bayes       DebiasBayes   DebiasLasso
	0          0.02958598  0.09069449    0.08294056
	1          0.20510245  0.11858631    0.11920002
	2          0.23966992  0.12577262    0.13439824
	3          0.21783138  0.12206385    0.15986872
	4          0.19774494  0.11978397    0.17797528
	5          0.11721226  0.10809705    0.14070302
}\RMSEpfnohomonSigmatwo  

\pgfplotstableread{
	beta       Bayes       DebiasBayes   DebiasLasso
	0          0.01113178  0.2225581     0.2142471
	1          0.26707937  0.2972665     0.3013392
	2          0.48454461  0.2314504     0.2482455
	3          0.71034970  0.2735636     0.2609650
	4          0.91087610  0.2646515     0.3805860
	5          0.47610880  0.2380804     0.2650259
}\RMSEponoheterSigmatwo  

\pgfplotstableread{
	beta       Bayes       DebiasBayes   DebiasLasso
	0          0.00019168  0.2336082     0.2315057
	1          0.25160764  0.2581977     0.2692475
	2          0.48637188  0.2539778     0.2669972
	3          0.72393887  0.2779722     0.2557865
	4          0.94569961  0.2771307     0.3924041
	5          0.48940920  0.2476144     0.2717588
}\RMSEponohomochiSigmatwo  

\pgfplotstableread{
	beta       Bayes       DebiasBayes   DebiasLasso
	0          0.02842996  0.09851303    0.08989901
	1          0.22966045  0.11707309    0.11997834
	2          0.33936494  0.12602572    0.13516457
	3          0.38408620  0.13598380    0.16585410
	4          0.30282318  0.12907162    0.19242741
	5          0.15056226  0.11501320    0.15130191
}\RMSEponohomonSigmatwo  

\pgfplotstableread{
	beta       Bayes       DebiasBayes   DebiasLasso
	0          0.0154824   0.2169399     0.2146040
	1          0.2563912   0.2867962     0.2976811
	2          0.4870332   0.2355932     0.2488787
	3          0.7215580   0.2788700     0.2581164
	4          0.9449017   0.2617744     0.3629847
	5          0.4860447   0.2361386     0.2545549
}\RMSEptnoheterSigmatwo  

\pgfplotstableread{
	beta       Bayes       DebiasBayes   DebiasLasso
	0          0.02227146  0.2315107     0.2322819
	1          0.25263231  0.2660762     0.2803563
	2          0.49126995  0.2557743     0.2762522
	3          0.72526834  0.2825386     0.2575012
	4          0.95687267  0.2664669     0.3690936
	5          0.49710271  0.2577885     0.2759601
}\RMSEptnohomochiSigmatwo  

\pgfplotstableread{
	beta       Bayes       DebiasBayes   DebiasLasso
	0          0.03176429  0.09775338    0.09127577
	1          0.23455020  0.11603385    0.11992921
	2          0.38856441  0.13797691    0.13812807
	3          0.48110861  0.15255231    0.17382285
	4          0.40918779  0.14570092    0.20269544
	5          0.19329238  0.12088637    0.15347705
}\RMSEptnohomonSigmatwo  

We compare our debiased Bayes method with both standard Bayesian inference based on the spike-and-slab prior and a frequentist benchmark---specifically, the debiased inference procedure using a LASSO pilot estimator. The methods under comparison are summarized as follows.

\begin{table}[H]
\centering
\caption{Methods Compared in the Simulation Study}
\label{tab:methods}
\renewcommand{\arraystretch}{1.3}
\begin{tabular}{@{}p{4cm}p{10cm}@{}}
\toprule
\textbf{Method} & \textbf{Description} \\ 
\midrule
\textbf{Bayes} 
& Standard Bayesian inference using the spike-and-slab prior, 
where the posterior is approximated via variational Bayes. \\[0.2cm]
\textbf{Debiased-Bayes} 
& Our proposed debiased Bayesian inference procedure. \\[0.2cm]
\textbf{Debiased-LASSO} 
& The debiased LASSO estimator of \citet{VandeGeer_OnAsymptotically_2014}. \\
\bottomrule
\end{tabular}
\end{table}

\noindent\textbf{Simulation design.} We fix the sample size at $n = 100$ and vary the ambient dimensionality of the covariates by setting $p = 50, 100,$ and $200$, respectively. The true regression coefficient vector $\beta_0$ is sparse with $5$ nonzero entries. Specifically, the first five elements of $\beta_0$ are set to $(0.25, 0.5, 0.75, 1, 2)$, while the remaining coefficients are zero.

\noindent
\textbf{Data-generating processes.}
We consider six simulation scenarios designed to examine the robustness of each method:
\begin{itemize}
\item[] \textbf{S1 (Homoskedastic Normal):} $\varepsilon_i \sim N(0, 1)$ for all $i = 1, \ldots, n$.
\item[] \textbf{S2 (Non-Normal Errors):} $\varepsilon_i \sim \chi^2(3) - 3$, i.e., centered chi-squared noise.
\item[] \textbf{S3 (Heteroskedastic Normal):} $\varepsilon_i \sim N(0, \sigma_{i}^2)$, where $\sigma_i = 1 + |X_{1,i}|$.
\end{itemize}
The heteroskedastic specification in S3 is similar to that in Section 4.2 of \citet{HouMaWang2023CompositeQuantile}. For S1–S3, the regressors are drawn independently as $X_i \sim N(0, \Theta_0^{-1})$, where $\Theta_0$ is a $p\times p$ diagonal matrix with entries $(1, 2, \ldots, p)$ on the diagonal.

\begin{itemize}
\item[] \textbf{S4–S6 (Correlated Covariates):}
The error terms are specified as in S1–S3, but the covariates $X_i$ are sampled from a correlated Gaussian distribution $N(0, \Theta_0^{-1})$, where $\Theta_0$ is a banded precision matrix with its $ij$th element defined by
\[
\theta_{0,ij} =
\begin{cases}
	1, & \text{if } i = j, \\
	0.5, & \text{if } |i - j| = 1, \\
	0, & \text{otherwise}.
\end{cases}
\]
\end{itemize}
Each simulation scenario is replicated 1,000 times.

\noindent
\textbf{Implementation details.}
For the standard Bayesian method, we employ the variational Bayes algorithm of \citet{RaySzabo2022Variational} to approximate the posterior distribution. Each posterior sample consists of $B = 8{,}000$ draws from the variational Bayes approximated posterior. The spike-and-slab prior uses a Laplace slab with $\lambda = 1$ and a Beta$(1, p^u)$ hyperprior on the inclusion probability with $u = 1$. No additional tuning is performed. For the precision matrix estimator $\hat{\Theta}_n$, we adopt the nodewise LASSO regression of \citet{VandeGeer_OnAsymptotically_2014}, as implemented in the R package \texttt{hdi} \citep{DBMM2015hdi}. This estimator is used for both the Debiased Bayes and the Debiased LASSO methods to ensure comparability. Our numerical experiments suggest that the default tuning parameters in the package yield stable and robust performance. We anticipate that further optimization of hyper-parameters could potentially improve finite-sample performance, such exploration is beyond the scope of the present study.

Figures \ref{Fig: one} and \ref{Fig: two} illustrate the empirical coverage probabilities of the $95\%$ credible or confidence sets for the regression coefficients across the simulation scenarios. Specifically, Figure \ref{Fig: one} reports results under specifications S1 (first column), S2 (second column), and S3 (third column), with each row corresponding to a different dimensionality setting: $p = 50$, $p = 100$, and $p = 200$. On the horizontal axis, the label 0 represents the average coverage across all zero coefficients in $\beta_0$, whereas labels 1–5 correspond to the coverage probabilities for the nonzero coefficients $(0.25, 0.5, 0.75, 1, 2)$, respectively. Figure \ref{Fig: two} displays the analogous coverage results for specifications S4–S6, following the same layout and interpretation. We also compare the estimation accuracy of the posterior mean of the debiased posterior distribution against the standard Bayesian posterior mean and the frequentist debiased estimator. Figures \ref{Fig: three} and \ref{Fig: four} report the empirical bias of the three estimators, while Figures \ref{Fig: five} and \ref{Fig: six} present the corresponding RMSE.

\begin{figure}[!h]
\centering\scriptsize

\begin{tikzpicture} 
	\begin{groupplot}[group style={group name=myplots,group size=3 by 3,horizontal sep= 0.8cm,vertical sep=1.1cm},
		grid = minor,
		width = 0.375\textwidth,
		xmax=5,xmin=0,
		ymax=1,ymin=0,
		every axis title/.style={below,at={(0.2,0.8)}},
		xlabel=$\beta_0$,
		x label style={at={(axis description cs:0.95,0.04)},anchor=south},
		xtick={0,1,2,3,4,5},
		ytick={0,0.5,0.95,1},
		tick label style={/pgf/number format/fixed},
		legend style={text=black,cells={align=center},row sep = 3pt,legend columns = -1, draw=none,fill=none},
		cycle list={%
			{smooth,tension=0,color=black, mark=halfsquare*,every mark/.append style={rotate=270},mark size=1.5pt,line width=0.5pt},
			{smooth,tension=0,color=blue, mark=halfsquare*,every mark/.append style={rotate=90},mark size=1.5pt,line width=0.5pt}, 
			{smooth,tension=0,color=red, mark=10-pointed star,mark size=1.5pt,line width=0.5pt},
			{smooth,tension=0,color=RoyalBlue1, mark=halfcircle*,every mark/.append style={rotate=90},mark size=1.5pt,line width=0.5pt}, 
			{smooth,tension=0,color=RoyalBlue2, mark=halfcircle*,every mark/.append style={rotate=180},mark size=1.5pt,line width=0.5pt}, 
			{smooth,tension=0,color=RoyalBlue3, mark=halfcircle*,every mark/.append style={rotate=270},mark size=1.5pt,line width=0.5pt},
			{smooth,tension=0,color=RoyalBlue4, mark=halfcircle*,every mark/.append style={rotate=360},mark size=1.5pt,line width=0.5pt},
		}
		]
		\nextgroupplot
		\node[anchor=north] at (axis description cs: 0.25,  0.95) {\fontsize{5}{4}\selectfont \shortstack{
				\\  \\
				Homo-Normal\\
				$p=50$
		}};
		\addplot[smooth,tension=0.5,color=NavyBlue, no markers,line width=0.25pt, densely dotted,forget plot] table[x = beta,y=alpha] from \CovpfnohomonSigmaone;
		\addplot table[x = beta,y=Bayes] from \CovpfnohomonSigmaone;
		\addplot table[x = beta,y=DebiasBayes] from \CovpfnohomonSigmaone;
		\addplot table[x = beta,y=DebiasLasso] from \CovpfnohomonSigmaone;
		
		\nextgroupplot

		\node[anchor=north] at (axis description cs: 0.25,  0.95) {\fontsize{5}{4}\selectfont \shortstack{
				\\  \\
				Homo-Chi\\
				$p=50$
		}};
		\addplot[smooth,tension=0.5,color=NavyBlue, no markers,line width=0.25pt, densely dotted,forget plot] table[x = beta,y=alpha] from \CovpfnohomochiSigmaone;
		\addplot table[x = beta,y=Bayes] from \CovpfnohomochiSigmaone;
		\addplot table[x = beta,y=DebiasBayes] from \CovpfnohomochiSigmaone;
		\addplot table[x = beta,y=DebiasLasso] from \CovpfnohomochiSigmaone;

		\nextgroupplot[legend style = {column sep = 7pt, legend to name = LegendMon1}]
		\addplot[smooth,tension=0.5,color=NavyBlue, no markers,line width=0.25pt, densely dotted,forget plot] table[x = beta,y=alpha] from \CovpfnoheterSigmaone;
		\addplot table[x = beta,y=Bayes] from \CovpfnoheterSigmaone;
		\addplot table[x = beta,y=DebiasBayes] from \CovpfnoheterSigmaone;
		\addplot table[x = beta,y=DebiasLasso] from \CovpfnoheterSigmaone;
		\node[anchor=north] at (axis description cs: 0.25,  0.95) {\fontsize{5}{4}\selectfont \shortstack{
				\\  \\
				Hetero\\
				$p=50$
		}};
		
		
		\nextgroupplot
		\node[anchor=north] at (axis description cs: 0.25,  0.95) {\fontsize{5}{4}\selectfont \shortstack{
				\\  \\
				Homo-Normal\\
				$p=100$
		}};
		\addplot[smooth,tension=0.5,color=NavyBlue, no markers,line width=0.25pt, densely dotted,forget plot] table[x = beta,y=alpha] from \CovponohomonSigmaone;
		\addplot table[x = beta,y=Bayes] from \CovponohomonSigmaone;
		\addplot table[x = beta,y=DebiasBayes] from \CovponohomonSigmaone;
		\addplot table[x = beta,y=DebiasLasso] from \CovponohomonSigmaone;

		\nextgroupplot
		\node[anchor=north] at (axis description cs: 0.25,  0.95) {\fontsize{5}{4}\selectfont \shortstack{
				\\  \\
				Homo-Chi\\
				$p=100$
		}};
		\addplot[smooth,tension=0.5,color=NavyBlue, no markers,line width=0.25pt, densely dotted,forget plot] table[x = beta,y=alpha] from \CovponohomochiSigmaone;
		\addplot table[x = beta,y=Bayes] from \CovponohomochiSigmaone;
		\addplot table[x = beta,y=DebiasBayes] from \CovponohomochiSigmaone;
		\addplot table[x = beta,y=DebiasLasso] from \CovponohomochiSigmaone;
		
		\nextgroupplot[legend style = {column sep = 3.5pt, legend to name = LegendMon2}]
		
		\node[anchor=north] at (axis description cs: 0.25,  0.95) {\fontsize{5}{4}\selectfont \shortstack{
				\\  \\
				Hetero\\
				$p=100$
		}};
		\addplot[smooth,tension=0.5,color=NavyBlue, no markers,line width=0.25pt, densely dotted,forget plot] table[x = beta,y=alpha] from \CovponoheterSigmaone;
		\addplot table[x = beta,y=Bayes] from \CovponoheterSigmaone;
		\addplot table[x = beta,y=DebiasBayes] from \CovponoheterSigmaone;
		\addplot table[x = beta,y=DebiasLasso] from \CovponoheterSigmaone;

			\nextgroupplot
		\node[anchor=north] at (axis description cs: 0.25,  0.95) {\fontsize{5}{4}\selectfont \shortstack{
				\\  \\
				Homo-Normal\\
				$p=200$
		}};
		\addplot[smooth,tension=0.5,color=NavyBlue, no markers,line width=0.25pt, densely dotted,forget plot] table[x = beta,y=alpha] from \ptonohomon;
		\addplot table[x = beta,y=Bayes] from \CovptnohomonSigmaone;
		\addplot table[x = beta,y=DebiasBayes] from \CovptnohomonSigmaone;
		\addplot table[x = beta,y=DebiasLasso] from \CovptnohomonSigmaone;

		\nextgroupplot
		\node[anchor=north] at (axis description cs: 0.25,  0.95) {\fontsize{5}{4}\selectfont \shortstack{
				\\  \\
				Homo-Chi\\
				$p=200$
		}};
		\addplot[smooth,tension=0.5,color=NavyBlue, no markers,line width=0.25pt, densely dotted,forget plot] table[x = beta,y=alpha] from \CovptnohomochiSigmaone;
		\addplot table[x = beta,y=Bayes] from \CovptnohomochiSigmaone;
		\addplot table[x = beta,y=DebiasBayes] from \CovptnohomochiSigmaone;
		\addplot table[x = beta,y=DebiasLasso] from \CovptnohomochiSigmaone;
		
		\nextgroupplot[legend style = {column sep = 3.5pt, legend to name = LegendMon3}]
		
		\node[anchor=north] at (axis description cs: 0.25,  0.95) {\fontsize{5}{4}\selectfont \shortstack{
				\\  \\
				Hetero\\
				$p=200$
		}};
		\addplot[smooth,tension=0.5,color=NavyBlue, no markers,line width=0.25pt, densely dotted,forget plot] table[x = beta,y=alpha] from \CovptnoheterSigmaone;
		\addplot table[x = beta,y=Bayes] from \CovptnoheterSigmaone;
		\addplot table[x = beta,y=DebiasBayes] from \CovptnoheterSigmaone;
		\addplot table[x = beta,y=DebiasLasso] from \CovptnoheterSigmaone;
		\addlegendentry{Bayes};
		\addlegendentry{Debiased-Bayes};
		\addlegendentry{Debiased-LASSO};

	\end{groupplot}
	\node at ($(myplots c2r1) + (0,-2.25cm)$) {\ref{LegendMon1}};
	\node at ($(myplots c2r2) + (0,-2.25cm)$) {\ref{LegendMon2}};
	\node at ($(myplots c2r3) + (0,-2.25cm)$) {\ref{LegendMon3}};
\end{tikzpicture}
\caption{Coverage rates corresponding to different values of $\beta_0$ under settings S1 (first column), S2 (second column), and S3 (third column). The dashed line represents the 95\% benchmark.} \label{Fig: one}
\end{figure}
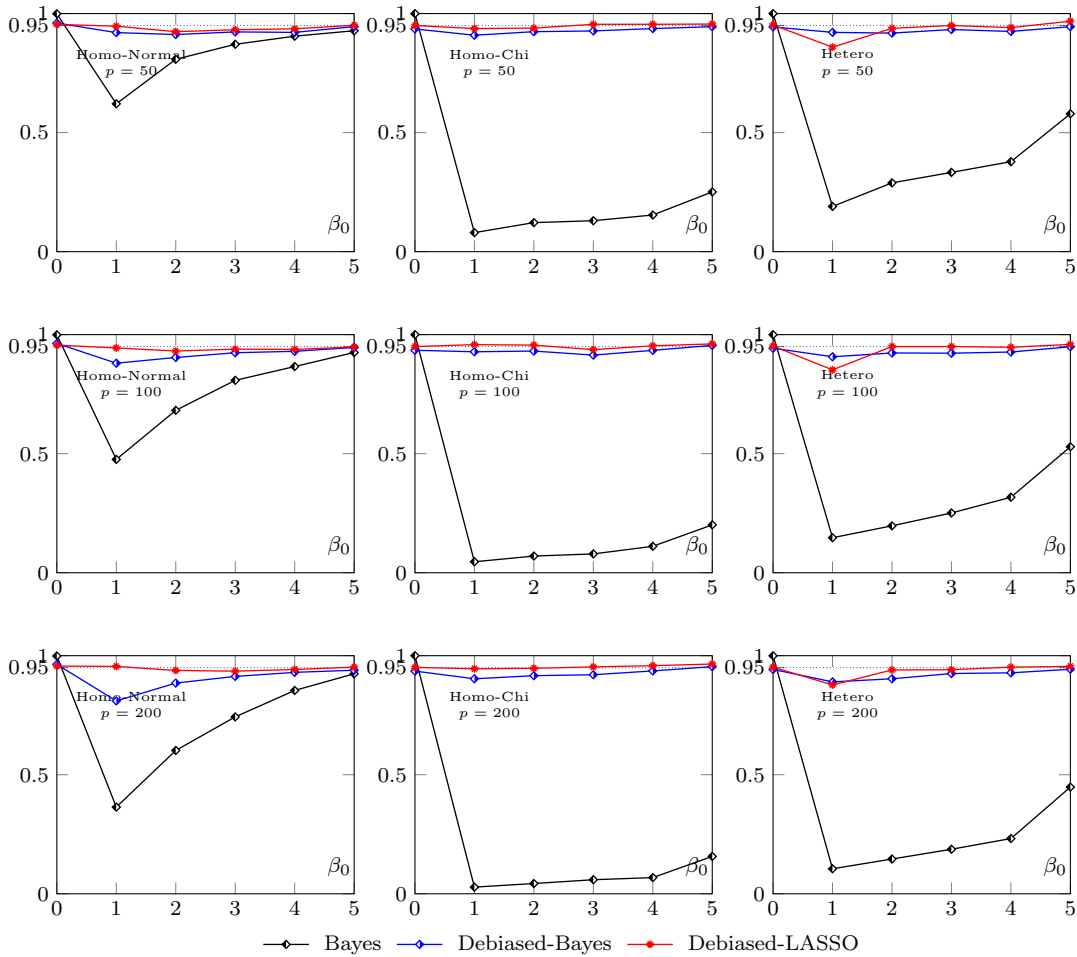

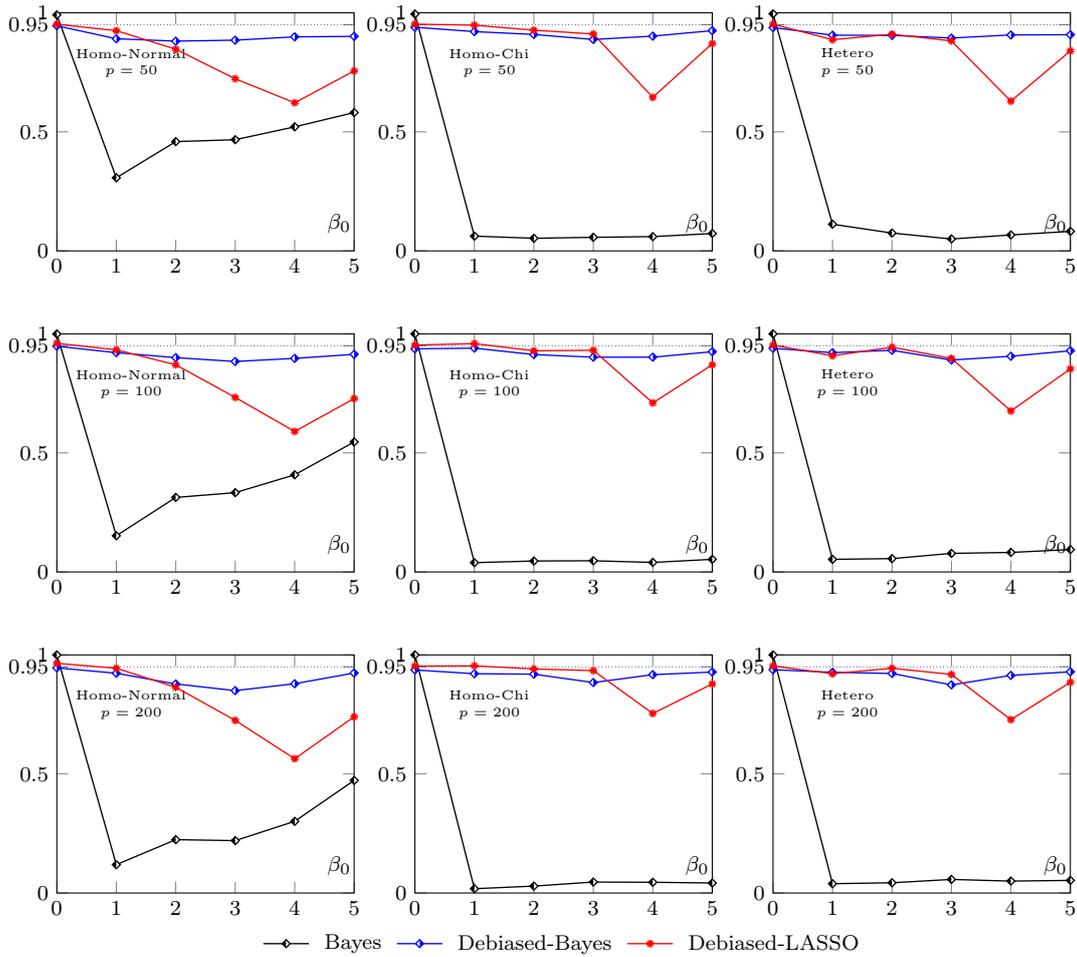
\begin{figure}[!h]
	\centering\scriptsize
	\begin{tikzpicture} 
		\begin{groupplot}[group style={group name=myplots,group size=3 by 3,horizontal sep= 0.8cm,vertical sep=1.1cm},
			grid = minor,
			width = 0.375\textwidth,
			xmax=5,xmin=0,
			ymax=1,ymin=0,
			every axis title/.style={below,at={(0.2,0.8)}},
			xlabel=$\beta_0$,
			x label style={at={(axis description cs:0.95,0.04)},anchor=south},
			xtick={0,1,2,3,4,5},
			ytick={0,0.5,0.95,1},
			tick label style={/pgf/number format/fixed},
			legend style={text=black,cells={align=center},row sep = 3pt,legend columns = -1, draw=none,fill=none},
			cycle list={%
				{smooth,tension=0,color=black, mark=halfsquare*,every mark/.append style={rotate=270},mark size=1.5pt,line width=0.5pt},
				{smooth,tension=0,color=blue, mark=halfsquare*,every mark/.append style={rotate=90},mark size=1.5pt,line width=0.5pt}, 
				{smooth,tension=0,color=red, mark=10-pointed star,mark size=1.5pt,line width=0.5pt},
				{smooth,tension=0,color=RoyalBlue1, mark=halfcircle*,every mark/.append style={rotate=90},mark size=1.5pt,line width=0.5pt}, 
				{smooth,tension=0,color=RoyalBlue2, mark=halfcircle*,every mark/.append style={rotate=180},mark size=1.5pt,line width=0.5pt}, 
				{smooth,tension=0,color=RoyalBlue3, mark=halfcircle*,every mark/.append style={rotate=270},mark size=1.5pt,line width=0.5pt},
				{smooth,tension=0,color=RoyalBlue4, mark=halfcircle*,every mark/.append style={rotate=360},mark size=1.5pt,line width=0.5pt},
			}
			]
			
			\nextgroupplot
			\node[anchor=north] at (axis description cs: 0.25,  0.95) {\fontsize{5}{4}\selectfont \shortstack{
					\\  \\
					Homo-Normal\\
					$p=50$
			}};
			\addplot[smooth,tension=0.5,color=NavyBlue, no markers,line width=0.25pt, densely dotted,forget plot] table[x = beta,y=alpha] from \pfnohomon;
			\addplot table[x = beta,y=Bayes] from \pfnohomon;
			\addplot table[x = beta,y=DebiasBayes] from \pfnohomon;
			\addplot table[x = beta,y=DebiasLasso] from \pfnohomon;

			\nextgroupplot

			\node[anchor=north] at (axis description cs: 0.25,  0.95) {\fontsize{5}{4}\selectfont \shortstack{
					\\  \\
					Homo-Chi\\
					$p=50$
			}};
			\addplot[smooth,tension=0.5,color=NavyBlue, no markers,line width=0.25pt, densely dotted,forget plot] table[x = beta,y=alpha] from \pfnohomochi;
			\addplot table[x = beta,y=Bayes] from \pfnohomochi;
			\addplot table[x = beta,y=DebiasBayes] from \pfnohomochi;
			\addplot table[x = beta,y=DebiasLasso] from \pfnohomochi;

			\nextgroupplot[legend style = {column sep = 7pt, legend to name = LegendMon12}]
			\addplot[smooth,tension=0.5,color=NavyBlue, no markers,line width=0.25pt, densely dotted,forget plot] table[x = beta,y=alpha] from \pfnoheter;
			\addplot table[x = beta,y=Bayes] from \pfnoheter;
			\addplot table[x = beta,y=DebiasBayes] from \pfnoheter;
			\addplot table[x = beta,y=DebiasLasso] from \pfnoheter;
			\node[anchor=north] at (axis description cs: 0.25,  0.95) {\fontsize{5}{4}\selectfont \shortstack{
					\\  \\
					Hetero\\
					$p=50$
			}};
			
			
			\nextgroupplot
			\node[anchor=north] at (axis description cs: 0.25,  0.95) {\fontsize{5}{4}\selectfont \shortstack{
					\\  \\
					Homo-Normal\\
					$p=100$
			}};
			\addplot[smooth,tension=0.5,color=NavyBlue, no markers,line width=0.25pt, densely dotted,forget plot] table[x = beta,y=alpha] from \ponohomon;
			\addplot table[x = beta,y=Bayes] from \ponohomon;
			\addplot table[x = beta,y=DebiasBayes] from \ponohomon;
			\addplot table[x = beta,y=DebiasLasso] from \ponohomon;

			\nextgroupplot
			\node[anchor=north] at (axis description cs: 0.25,  0.95) {\fontsize{5}{4}\selectfont \shortstack{
					\\  \\
					Homo-Chi\\
					$p=100$
			}};
			\addplot[smooth,tension=0.5,color=NavyBlue, no markers,line width=0.25pt, densely dotted,forget plot] table[x = beta,y=alpha] from \ponohomochi;
			\addplot table[x = beta,y=Bayes] from \ponohomochi;
			\addplot table[x = beta,y=DebiasBayes] from \ponohomochi;
			\addplot table[x = beta,y=DebiasLasso] from \ponohomochi;
			
				\nextgroupplot[legend style = {column sep = 3.5pt, legend to name = LegendMon22}]
			
			\node[anchor=north] at (axis description cs: 0.25,  0.95) {\fontsize{5}{4}\selectfont \shortstack{
					\\  \\
					Hetero\\
					$p=100$
			}};
			\addplot[smooth,tension=0.5,color=NavyBlue, no markers,line width=0.25pt, densely dotted,forget plot] table[x = beta,y=alpha] from \ponoheter;
			\addplot table[x = beta,y=Bayes] from \ponoheter;
			\addplot table[x = beta,y=DebiasBayes] from \ponoheter;
			\addplot table[x = beta,y=DebiasLasso] from \ponoheter;
			
			\nextgroupplot
			\node[anchor=north] at (axis description cs: 0.25,  0.95) {\fontsize{5}{4}\selectfont \shortstack{
					\\  \\
					Homo-Normal\\
					$p=200$
			}};
			\addplot[smooth,tension=0.5,color=NavyBlue, no markers,line width=0.25pt, densely dotted,forget plot] table[x = beta,y=alpha] from \ptonohomon;
			\addplot table[x = beta,y=Bayes] from \ptonohomon;
			\addplot table[x = beta,y=DebiasBayes] from \ptonohomon;
			\addplot table[x = beta,y=DebiasLasso] from \ptonohomon;

			\nextgroupplot
			\node[anchor=north] at (axis description cs: 0.25,  0.95) {\fontsize{5}{4}\selectfont \shortstack{
					\\  \\
					Homo-Chi\\
					$p=200$
			}};
			\addplot[smooth,tension=0.5,color=NavyBlue, no markers,line width=0.25pt, densely dotted,forget plot] table[x = beta,y=alpha] from \ptonohomochi;
			\addplot table[x = beta,y=Bayes] from \ptonohomochi;
			\addplot table[x = beta,y=DebiasBayes] from \ptonohomochi;
			\addplot table[x = beta,y=DebiasLasso] from \ptonohomochi;
			
			\nextgroupplot[legend style = {column sep = 3.5pt, legend to name = LegendMon32}]
			
			\node[anchor=north] at (axis description cs: 0.25,  0.95) {\fontsize{5}{4}\selectfont \shortstack{
					\\  \\
					Hetero\\
					$p=200$
			}};
			\addplot[smooth,tension=0.5,color=NavyBlue, no markers,line width=0.25pt, densely dotted,forget plot] table[x = beta,y=alpha] from \ptonoheter;
			\addplot table[x = beta,y=Bayes] from \ptonoheter;
			\addplot table[x = beta,y=DebiasBayes] from \ptonoheter;
			\addplot table[x = beta,y=DebiasLasso] from \ptonoheter;
			\addlegendentry{Bayes};
			\addlegendentry{Debiased-Bayes};
			\addlegendentry{Debiased-LASSO};

		\end{groupplot}
		\node at ($(myplots c2r1) + (0,-2.25cm)$) {\ref{LegendMon12}};
		\node at ($(myplots c2r2) + (0,-2.25cm)$) {\ref{LegendMon22}};
		\node at ($(myplots c2r3) + (0,-2.25cm)$) {\ref{LegendMon32}};
	\end{tikzpicture}
	\caption{Coverage rates corresponding to different values of $\beta_0$ under settings S4 (first column), S5 (second column), and S6 (third column). The dashed line represents the 95\% benchmark.} \label{Fig: two}
\end{figure}

We summarize our findings as follows. The standard Bayesian approach based on the spike-and-slab prior performs well in identifying the zero coefficients, as reflected by favorable average coverage, low bias, and small RMSE for these components. However, its performance deteriorates substantially for the nonzero coefficients,\footnote{By contrast, \citet{RaySzabo2022Variational} considered much larger nonzero entries of $\beta_0$ (as large as 10), whereas ours are more moderate.} and becomes more fragile when the error distribution departs from the baseline homoscedastic Gaussian specification. In contrast, our proposed debiased Bayes approach consistently outperforms the standard method, achieving markedly better estimation accuracy and more reliable uncertainty quantification across all designs. Relative to the frequentist debiased LASSO benchmark, our method delivers comparable performance under the independent-covariate settings (S1–S3). Notably, in the more challenging correlated-covariate scenarios (S4–S6), the debiased Bayes procedure tends to attain substantially improved coverage, lower bias, and smaller RMSE for relatively larger coefficients. Overall, these results demonstrate the robustness and accuracy of the debaised Bayes approach across diverse conditions and the importance of debiasing standard Bayesian procedures in high-dimensional settings. It provides clear improvements over conventional Bayesian inference and exhibits complementary strengths relative to the leading frequentist benchmark.

\begin{figure}[!h]
	\centering\scriptsize
	\begin{tikzpicture} 
		\begin{groupplot}[group style={group name=myplots,group size=3 by 3,horizontal sep= 0.8cm,vertical sep=1.1cm},
			grid = minor,
			width = 0.375\textwidth,
			xmax=5,xmin=0,
			ymax=1.66,ymin=0,
			every axis title/.style={below,at={(0.2,0.8)}},
			xlabel=$\beta_0$,
			x label style={at={(axis description cs:0.95,0.04)},anchor=south},
			xtick={0,1,2,3,4,5},
			ytick={0,0.5,1,1.5},
			tick label style={/pgf/number format/fixed},
			legend style={text=black,cells={align=center},row sep = 3pt,legend columns = -1, draw=none,fill=none},
			cycle list={%
				{smooth,tension=0,color=black, mark=halfsquare*,every mark/.append style={rotate=270},mark size=1.5pt,line width=0.5pt},
				{smooth,tension=0,color=blue, mark=halfsquare*,every mark/.append style={rotate=90},mark size=1.5pt,line width=0.5pt}, 
				{smooth,tension=0,color=red, mark=10-pointed star,mark size=1.5pt,line width=0.5pt},
				{smooth,tension=0,color=RoyalBlue1, mark=halfcircle*,every mark/.append style={rotate=90},mark size=1.5pt,line width=0.5pt}, 
				{smooth,tension=0,color=RoyalBlue2, mark=halfcircle*,every mark/.append style={rotate=180},mark size=1.5pt,line width=0.5pt}, 
				{smooth,tension=0,color=RoyalBlue3, mark=halfcircle*,every mark/.append style={rotate=270},mark size=1.5pt,line width=0.5pt},
				{smooth,tension=0,color=RoyalBlue4, mark=halfcircle*,every mark/.append style={rotate=360},mark size=1.5pt,line width=0.5pt},
			}
			]
			\nextgroupplot
			\node[anchor=north] at (axis description cs: 0.25,  0.95) {\fontsize{5}{4}\selectfont \shortstack{
					\\  \\
					Homo-Normal\\
					$p=50$
			}};
			\addplot table[x = beta,y=Bayes] from \BiaspfnohomonSigmaone;
			\addplot table[x = beta,y=DebiasBayes] from \BiaspfnohomonSigmaone;
			\addplot table[x = beta,y=DebiasLasso] from \BiaspfnohomonSigmaone;
			
			\nextgroupplot

			\node[anchor=north] at (axis description cs: 0.25,  0.95) {\fontsize{5}{4}\selectfont \shortstack{
					\\  \\
					Homo-Chi\\
					$p=50$
			}};
			\addplot table[x = beta,y=Bayes] from \BiaspfnohomochiSigmaone;
			\addplot table[x = beta,y=DebiasBayes] from \BiaspfnohomochiSigmaone;
			\addplot table[x = beta,y=DebiasLasso] from \BiaspfnohomochiSigmaone;

			\nextgroupplot[legend style = {column sep = 7pt, legend to name = LegendMon13}]
			\addplot table[x = beta,y=Bayes] from \BiaspfnoheterSigmaone;
			\addplot table[x = beta,y=DebiasBayes] from \BiaspfnoheterSigmaone;
			\addplot table[x = beta,y=DebiasLasso] from \BiaspfnoheterSigmaone;
			\node[anchor=north] at (axis description cs: 0.25,  0.95) {\fontsize{5}{4}\selectfont \shortstack{
					\\  \\
					Hetero\\
					$p=50$
			}};
			
			\nextgroupplot
			\node[anchor=north] at (axis description cs: 0.25,  0.95) {\fontsize{5}{4}\selectfont \shortstack{
					\\  \\
					Homo-Normal\\
					$p=100$
			}};
			\addplot table[x = beta,y=Bayes] from \BiasponohomonSigmaone;
			\addplot table[x = beta,y=DebiasBayes] from \BiasponohomonSigmaone;
			\addplot table[x = beta,y=DebiasLasso] from \BiasponohomonSigmaone;

			\nextgroupplot
			\node[anchor=north] at (axis description cs: 0.25,  0.95) {\fontsize{5}{4}\selectfont \shortstack{
					\\  \\
					Homo-Chi\\
					$p=100$
			}};
			\addplot table[x = beta,y=Bayes] from \BiasponohomochiSigmaone;
			\addplot table[x = beta,y=DebiasBayes] from \BiasponohomochiSigmaone;
			\addplot table[x = beta,y=DebiasLasso] from \BiasponohomochiSigmaone;
			
			\nextgroupplot[legend style = {column sep = 3.5pt, legend to name = LegendMon23}]
			
			\node[anchor=north] at (axis description cs: 0.25,  0.95) {\fontsize{5}{4}\selectfont \shortstack{
					\\  \\
					Hetero\\
					$p=100$
			}};
			\addplot table[x = beta,y=Bayes] from \BiasponoheterSigmaone;
			\addplot table[x = beta,y=DebiasBayes] from \BiasponoheterSigmaone;
			\addplot table[x = beta,y=DebiasLasso] from \BiasponoheterSigmaone;
			
			\nextgroupplot
			\node[anchor=north] at (axis description cs: 0.25,  0.95) {\fontsize{5}{4}\selectfont \shortstack{
					\\  \\
					Homo-Normal\\
					$p=200$
			}};
			
			\addplot table[x = beta,y=Bayes] from \BiasptnohomonSigmaone;
			\addplot table[x = beta,y=DebiasBayes] from \BiasptnohomonSigmaone;
			\addplot table[x = beta,y=DebiasLasso] from \BiasptnohomonSigmaone;
			
			\nextgroupplot
			\node[anchor=north] at (axis description cs: 0.25,  0.95) {\fontsize{5}{4}\selectfont \shortstack{
					\\  \\
					Homo-Chi\\
					$p=200$
			}};
			\addplot table[x = beta,y=Bayes] from \BiasptnohomochiSigmaone;
			\addplot table[x = beta,y=DebiasBayes] from \BiasptnohomochiSigmaone;
			\addplot table[x = beta,y=DebiasLasso] from \BiasptnohomochiSigmaone;
			
			\nextgroupplot[legend style = {column sep = 3.5pt, legend to name = LegendMon33}]
			
			\node[anchor=north] at (axis description cs: 0.25,  0.95) {\fontsize{5}{4}\selectfont \shortstack{
					\\  \\
					Heter\\
					$p=200$
			}};
			
			\addplot table[x = beta,y=Bayes] from \BiasptnoheterSigmaone;
			\addplot table[x = beta,y=DebiasBayes] from \BiasptnoheterSigmaone;
			\addplot table[x = beta,y=DebiasLasso] from \BiasptnoheterSigmaone;
			\addlegendentry{Bayes};
			\addlegendentry{Debiased-Bayes};
			\addlegendentry{Debiased-LASSO};

		\end{groupplot}
		\node at ($(myplots c2r1) + (0,-2.25cm)$) {\ref{LegendMon13}};
		\node at ($(myplots c2r2) + (0,-2.25cm)$) {\ref{LegendMon23}};
		\node at ($(myplots c2r3) + (0,-2.25cm)$) {\ref{LegendMon33}};
	\end{tikzpicture}
	\caption{Bias corresponding to different values of $\beta_0$ under settings S1 (first column), S2 (second column), and S3 (third column).} \label{Fig: three}
\end{figure}
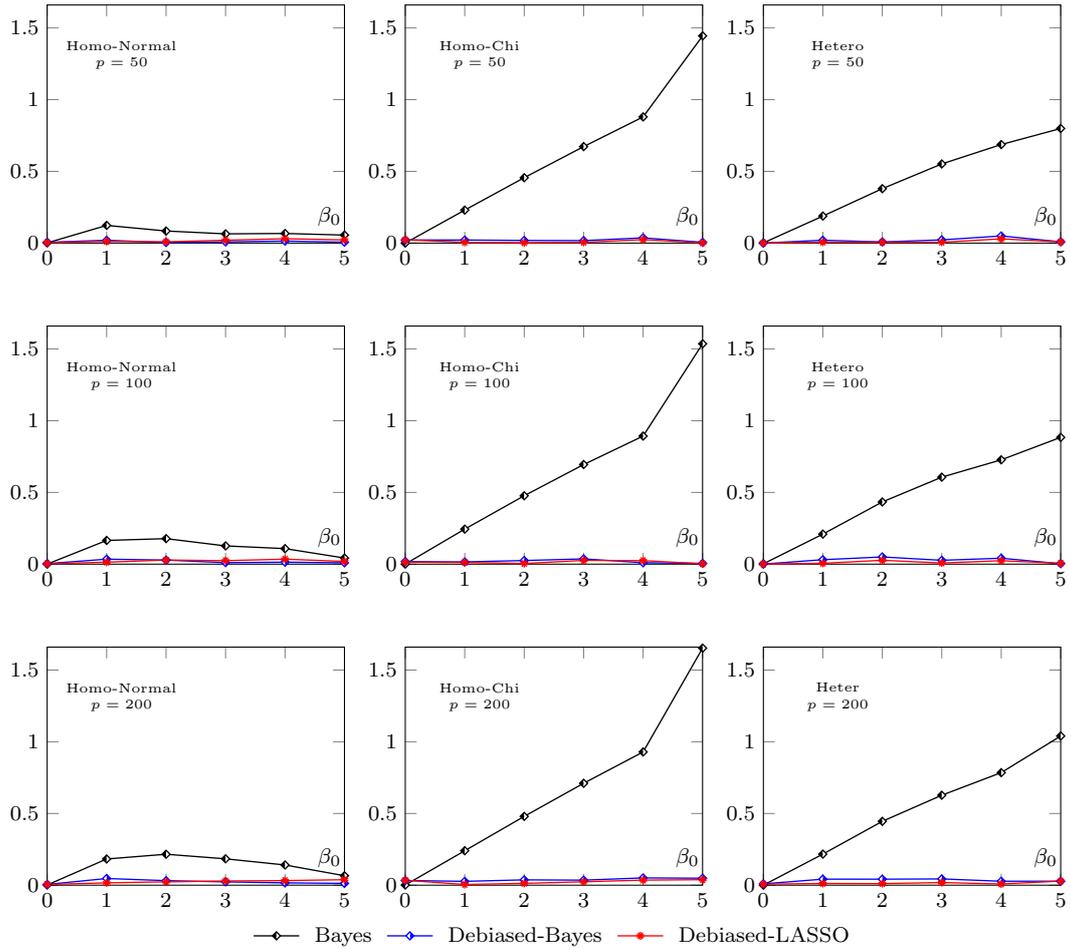

\begin{figure}[!h]
	\centering\scriptsize
	\begin{tikzpicture} 
		\begin{groupplot}[group style={group name=myplots,group size=3 by 3,horizontal sep= 0.8cm,vertical sep=1.1cm},
			grid = minor,
			width = 0.375\textwidth,
			xmax=5,xmin=0,
			ymax=1,ymin=0,
			every axis title/.style={below,at={(0.2,0.8)}},
			xlabel=$\beta_0$,
			x label style={at={(axis description cs:0.95,0.04)},anchor=south},
			xtick={0,1,2,3,4,5},
			ytick={0,0.5,1},
			tick label style={/pgf/number format/fixed},
			legend style={text=black,cells={align=center},row sep = 3pt,legend columns = -1, draw=none,fill=none},
			cycle list={%
				{smooth,tension=0,color=black, mark=halfsquare*,every mark/.append style={rotate=270},mark size=1.5pt,line width=0.5pt},
				{smooth,tension=0,color=blue, mark=halfsquare*,every mark/.append style={rotate=90},mark size=1.5pt,line width=0.5pt}, 
				{smooth,tension=0,color=red, mark=10-pointed star,mark size=1.5pt,line width=0.5pt},
				{smooth,tension=0,color=RoyalBlue1, mark=halfcircle*,every mark/.append style={rotate=90},mark size=1.5pt,line width=0.5pt}, 
				{smooth,tension=0,color=RoyalBlue2, mark=halfcircle*,every mark/.append style={rotate=180},mark size=1.5pt,line width=0.5pt}, 
				{smooth,tension=0,color=RoyalBlue3, mark=halfcircle*,every mark/.append style={rotate=270},mark size=1.5pt,line width=0.5pt},
				{smooth,tension=0,color=RoyalBlue4, mark=halfcircle*,every mark/.append style={rotate=360},mark size=1.5pt,line width=0.5pt},
			}
			]
			\nextgroupplot
			\node[anchor=north] at (axis description cs: 0.25,  0.95) {\fontsize{5}{4}\selectfont \shortstack{
					\\  \\
					Homo-Normal\\
					$p=50$
			}};
			\addplot table[x = beta,y=Bayes] from \BiaspfnohomonSigmatwo;
			\addplot table[x = beta,y=DebiasBayes] from \BiaspfnohomonSigmatwo;
			\addplot table[x = beta,y=DebiasLasso] from \BiaspfnohomonSigmatwo;

			\nextgroupplot

			\node[anchor=north] at (axis description cs: 0.25,  0.95) {\fontsize{5}{4}\selectfont \shortstack{
					\\  \\
					Homo-Chi\\
					$p=50$
			}};
			\addplot table[x = beta,y=Bayes] from \BiaspfnohomochiSigmatwo;
			\addplot table[x = beta,y=DebiasBayes] from \BiaspfnohomochiSigmatwo;
			\addplot table[x = beta,y=DebiasLasso] from \BiaspfnohomochiSigmatwo;
			
			\nextgroupplot[legend style = {column sep = 7pt, legend to name = LegendMon14}]
			\addplot table[x = beta,y=Bayes] from \BiaspfnoheterSigmatwo;
			\addplot table[x = beta,y=DebiasBayes] from \BiaspfnoheterSigmatwo;
			\addplot table[x = beta,y=DebiasLasso] from \BiaspfnoheterSigmatwo;
			\node[anchor=north] at (axis description cs: 0.25,  0.95) {\fontsize{5}{4}\selectfont \shortstack{
					\\  \\
					Hetero\\
					$p=50$
			}};
			
			\nextgroupplot
			\node[anchor=north] at (axis description cs: 0.25,  0.95) {\fontsize{5}{4}\selectfont \shortstack{
					\\  \\
					Homo-Normal\\
					$p=100$
			}};
			\addplot table[x = beta,y=Bayes] from \BiasponohomonSigmatwo;
			\addplot table[x = beta,y=DebiasBayes] from \BiasponohomonSigmatwo;
			\addplot table[x = beta,y=DebiasLasso] from \BiasponohomonSigmatwo;

			\nextgroupplot
			\node[anchor=north] at (axis description cs: 0.25,  0.95) {\fontsize{5}{4}\selectfont \shortstack{
					\\  \\
					Homo-Chi\\
					$p=100$
			}};
			\addplot table[x = beta,y=Bayes] from \BiasponohomochiSigmatwo;
			\addplot table[x = beta,y=DebiasBayes] from \BiasponohomochiSigmatwo;
			\addplot table[x = beta,y=DebiasLasso] from \BiasponohomochiSigmatwo;
			
			\nextgroupplot[legend style = {column sep = 3.5pt, legend to name = LegendMon24}]
			
			\node[anchor=north] at (axis description cs: 0.25,  0.95) {\fontsize{5}{4}\selectfont \shortstack{
					\\  \\
					Heter\\
					$p=100$
			}};
			\addplot table[x = beta,y=Bayes] from \BiasponoheterSigmatwo;
			\addplot table[x = beta,y=DebiasBayes] from \BiasponoheterSigmatwo;
			\addplot table[x = beta,y=DebiasLasso] from \BiasponoheterSigmatwo;
			
			\nextgroupplot
			\node[anchor=north] at (axis description cs: 0.25,  0.95) {\fontsize{5}{4}\selectfont \shortstack{
					\\  \\
					Homo-Normal\\
					$p=200$
			}};
			
			\addplot table[x = beta,y=Bayes] from \BiasptnohomonSigmatwo;
			\addplot table[x = beta,y=DebiasBayes] from \BiasptnohomonSigmatwo;
			\addplot table[x = beta,y=DebiasLasso] from \BiasptnohomonSigmatwo;

			\nextgroupplot
			\node[anchor=north] at (axis description cs: 0.25,  0.95) {\fontsize{5}{4}\selectfont \shortstack{
					\\  \\
					Homo-Chi\\
					$p=200$
			}};
			\addplot table[x = beta,y=Bayes] from \BiasptnohomochiSigmatwo;
			\addplot table[x = beta,y=DebiasBayes] from \BiasptnohomochiSigmatwo;
			\addplot table[x = beta,y=DebiasLasso] from \BiasptnohomochiSigmatwo;
			
			\nextgroupplot[legend style = {column sep = 3.5pt, legend to name = LegendMon34}]
			
			\node[anchor=north] at (axis description cs: 0.25,  0.95) {\fontsize{5}{4}\selectfont \shortstack{
					\\  \\
					Hetero\\
					$p=200$
			}};
			
			\addplot table[x = beta,y=Bayes] from \BiasptnoheterSigmatwo;
			\addplot table[x = beta,y=DebiasBayes] from \BiasptnoheterSigmatwo;
			\addplot table[x = beta,y=DebiasLasso] from \BiasptnoheterSigmatwo;
			\addlegendentry{Bayes};
			\addlegendentry{Debiased-Bayes};
			\addlegendentry{Debiased-LASSO};

		\end{groupplot}
		\node at ($(myplots c2r1) + (0,-2.25cm)$) {\ref{LegendMon14}};
		\node at ($(myplots c2r2) + (0,-2.25cm)$) {\ref{LegendMon24}};
		\node at ($(myplots c2r3) + (0,-2.25cm)$) {\ref{LegendMon34}};
	\end{tikzpicture}
	\caption{Bias corresponding to different values of $\beta_0$ under settings S4 (first column), S5 (second column), and S6 (third column).} \label{Fig: four}
\end{figure}

\begin{figure}[!h]
	\centering\scriptsize
	\begin{tikzpicture} 
		\begin{groupplot}[group style={group name=myplots,group size=3 by 3,horizontal sep= 0.8cm,vertical sep=1.1cm},
			grid = minor,
			width = 0.375\textwidth,
			xmax=5,xmin=0,
			ymax=1.85,ymin=0,
			every axis title/.style={below,at={(0.2,0.8)}},
			xlabel=$\beta_0$,
			x label style={at={(axis description cs:0.95,0.04)},anchor=south},
			xtick={0,1,2,3,4,5},
			ytick={0,0.5,1,1.5},
			tick label style={/pgf/number format/fixed},
			legend style={text=black,cells={align=center},row sep = 3pt,legend columns = -1, draw=none,fill=none},
			cycle list={%
				{smooth,tension=0,color=black, mark=halfsquare*,every mark/.append style={rotate=270},mark size=1.5pt,line width=0.5pt},
				{smooth,tension=0,color=blue, mark=halfsquare*,every mark/.append style={rotate=90},mark size=1.5pt,line width=0.5pt}, 
				{smooth,tension=0,color=red, mark=10-pointed star,mark size=1.5pt,line width=0.5pt},
				{smooth,tension=0,color=RoyalBlue1, mark=halfcircle*,every mark/.append style={rotate=90},mark size=1.5pt,line width=0.5pt}, 
				{smooth,tension=0,color=RoyalBlue2, mark=halfcircle*,every mark/.append style={rotate=180},mark size=1.5pt,line width=0.5pt}, 
				{smooth,tension=0,color=RoyalBlue3, mark=halfcircle*,every mark/.append style={rotate=270},mark size=1.5pt,line width=0.5pt},
				{smooth,tension=0,color=RoyalBlue4, mark=halfcircle*,every mark/.append style={rotate=360},mark size=1.5pt,line width=0.5pt},
			}
			]
			\nextgroupplot
			\node[anchor=north] at (axis description cs: 0.25,  0.95) {\fontsize{5}{4}\selectfont \shortstack{
					\\  \\
					Homo-Normal\\
					$p=50$
			}};
			\addplot table[x = beta,y=Bayes] from \RMSEpfnohomonSigmaone;
			\addplot table[x = beta,y=DebiasBayes] from \RMSEpfnohomonSigmaone;
			\addplot table[x = beta,y=DebiasLasso] from \RMSEpfnohomonSigmaone;
						
			\nextgroupplot

			\node[anchor=north] at (axis description cs: 0.25,  0.95) {\fontsize{5}{4}\selectfont \shortstack{
					\\  \\
					Homo-Chi\\
					$p=50$
			}};
			\addplot table[x = beta,y=Bayes] from \RMSEpfnohomochiSigmaone;
			\addplot table[x = beta,y=DebiasBayes] from \RMSEpfnohomochiSigmaone;
			\addplot table[x = beta,y=DebiasLasso] from \RMSEpfnohomochiSigmaone;
			
			\nextgroupplot[legend style = {column sep = 7pt, legend to name = LegendMon15}]
			\addplot table[x = beta,y=Bayes] from \RMSEpfnoheterSigmaone;
			\addplot table[x = beta,y=DebiasBayes] from \RMSEpfnoheterSigmaone;
			\addplot table[x = beta,y=DebiasLasso] from \RMSEpfnoheterSigmaone;
			\node[anchor=north] at (axis description cs: 0.25,  0.95) {\fontsize{5}{4}\selectfont \shortstack{
					\\  \\
					Hetero\\
					$p=50$
			}};

			\nextgroupplot
			\node[anchor=north] at (axis description cs: 0.25,  0.95) {\fontsize{5}{4}\selectfont \shortstack{
					\\  \\
					Homo-Normal\\
					$p=100$
			}};
			\addplot table[x = beta,y=Bayes] from \RMSEponohomonSigmaone;
			\addplot table[x = beta,y=DebiasBayes] from \RMSEponohomonSigmaone;
			\addplot table[x = beta,y=DebiasLasso] from \RMSEponohomonSigmaone;

			\nextgroupplot
			\node[anchor=north] at (axis description cs: 0.25,  0.95) {\fontsize{5}{4}\selectfont \shortstack{
					\\  \\
					Homo-Chi\\
					$p=100$
			}};
			\addplot table[x = beta,y=Bayes] from \RMSEponohomochiSigmaone;
			\addplot table[x = beta,y=DebiasBayes] from \RMSEponohomochiSigmaone;
			\addplot table[x = beta,y=DebiasLasso] from \RMSEponohomochiSigmaone;
			
			\nextgroupplot[legend style = {column sep = 3.5pt, legend to name = LegendMon25}]
			
			\node[anchor=north] at (axis description cs: 0.25,  0.95) {\fontsize{5}{4}\selectfont \shortstack{
					\\  \\
					Hetero\\
					$p=100$
			}};
			\addplot table[x = beta,y=Bayes] from \RMSEponoheterSigmaone;
			\addplot table[x = beta,y=DebiasBayes] from \RMSEponoheterSigmaone;
			\addplot table[x = beta,y=DebiasLasso] from \RMSEponoheterSigmaone;
			
			\nextgroupplot
			\node[anchor=north] at (axis description cs: 0.25,  0.95) {\fontsize{5}{4}\selectfont \shortstack{
					\\  \\
					Homo-Normal\\
					$p=200$
			}};
			
			\addplot table[x = beta,y=Bayes] from \RMSEptnohomonSigmaone;
			\addplot table[x = beta,y=DebiasBayes] from \RMSEptnohomonSigmaone;
			\addplot table[x = beta,y=DebiasLasso] from \RMSEptnohomonSigmaone;

			\nextgroupplot
			\node[anchor=north] at (axis description cs: 0.25,  0.95) {\fontsize{5}{4}\selectfont \shortstack{
					\\  \\
					Homo-Chi\\
					$p=200$
			}};
			\addplot table[x = beta,y=Bayes] from \RMSEptnohomochiSigmaone;
			\addplot table[x = beta,y=DebiasBayes] from \RMSEptnohomochiSigmaone;
			\addplot table[x = beta,y=DebiasLasso] from \RMSEptnohomochiSigmaone;
			
				\nextgroupplot[legend style = {column sep = 3.5pt, legend to name = LegendMon35}]
			
			\node[anchor=north] at (axis description cs: 0.25,  0.95) {\fontsize{5}{4}\selectfont \shortstack{
					\\  \\
					Hetero\\
					$p=200$
			}};
			
			\addplot table[x = beta,y=Bayes] from \RMSEptnoheterSigmaone;
			\addplot table[x = beta,y=DebiasBayes] from \RMSEptnoheterSigmaone;
			\addplot table[x = beta,y=DebiasLasso] from \RMSEptnoheterSigmaone;
			\addlegendentry{Bayes};
			\addlegendentry{Debiased-Bayes};
			\addlegendentry{Debiased-LASSO};

		\end{groupplot}
		\node at ($(myplots c2r1) + (0,-2.25cm)$) {\ref{LegendMon15}};
		\node at ($(myplots c2r2) + (0,-2.25cm)$) {\ref{LegendMon25}};
		\node at ($(myplots c2r3) + (0,-2.25cm)$) {\ref{LegendMon35}};
	\end{tikzpicture}
	\caption{RMSE corresponding to different values of $\beta_0$ under settings S4 (first column), S5 (second column), and S6 (third column).} \label{Fig: five}
\end{figure}

\begin{figure}[!h]
	\centering\scriptsize
	\begin{tikzpicture} 
		\begin{groupplot}[group style={group name=myplots,group size=3 by 3,horizontal sep= 0.8cm,vertical sep=1.1cm},
			grid = minor,
			width = 0.375\textwidth,
			xmax=5,xmin=0,
			ymax=1,ymin=0,
			every axis title/.style={below,at={(0.2,0.8)}},
			xlabel=$\beta_0$,
			x label style={at={(axis description cs:0.95,0.04)},anchor=south},
			xtick={0,1,2,3,4,5},
			ytick={0,0.5,1},
			tick label style={/pgf/number format/fixed},
			legend style={text=black,cells={align=center},row sep = 3pt,legend columns = -1, draw=none,fill=none},
			cycle list={%
				{smooth,tension=0,color=black, mark=halfsquare*,every mark/.append style={rotate=270},mark size=1.5pt,line width=0.5pt},
				{smooth,tension=0,color=blue, mark=halfsquare*,every mark/.append style={rotate=90},mark size=1.5pt,line width=0.5pt}, 
				{smooth,tension=0,color=red, mark=10-pointed star,mark size=1.5pt,line width=0.5pt},
				{smooth,tension=0,color=RoyalBlue1, mark=halfcircle*,every mark/.append style={rotate=90},mark size=1.5pt,line width=0.5pt}, 
				{smooth,tension=0,color=RoyalBlue2, mark=halfcircle*,every mark/.append style={rotate=180},mark size=1.5pt,line width=0.5pt}, 
				{smooth,tension=0,color=RoyalBlue3, mark=halfcircle*,every mark/.append style={rotate=270},mark size=1.5pt,line width=0.5pt},
				{smooth,tension=0,color=RoyalBlue4, mark=halfcircle*,every mark/.append style={rotate=360},mark size=1.5pt,line width=0.5pt},
			}
			]
			\nextgroupplot
			\node[anchor=north] at (axis description cs: 0.25,  0.95) {\fontsize{5}{4}\selectfont \shortstack{
					\\  \\
					Homo-Normal\\
					$p=50$
			}};
			\addplot table[x = beta,y=Bayes] from \RMSEpfnohomonSigmatwo;
			\addplot table[x = beta,y=DebiasBayes] from \RMSEpfnohomonSigmatwo;
			\addplot table[x = beta,y=DebiasLasso] from \RMSEpfnohomonSigmatwo;

			\nextgroupplot

			\node[anchor=north] at (axis description cs: 0.25,  0.95) {\fontsize{5}{4}\selectfont \shortstack{
					\\  \\
					Homo-Chi\\
					$p=50$
			}};
			\addplot table[x = beta,y=Bayes] from \RMSEpfnohomochiSigmatwo;
			\addplot table[x = beta,y=DebiasBayes] from \RMSEpfnohomochiSigmatwo;
			\addplot table[x = beta,y=DebiasLasso] from \RMSEpfnohomochiSigmatwo;
			
			\nextgroupplot[legend style = {column sep = 7pt, legend to name = LegendMon16}]
			\addplot table[x = beta,y=Bayes] from \RMSEpfnoheterSigmatwo;
			\addplot table[x = beta,y=DebiasBayes] from \RMSEpfnoheterSigmatwo;
			\addplot table[x = beta,y=DebiasLasso] from \RMSEpfnoheterSigmatwo;
			\node[anchor=north] at (axis description cs: 0.25,  0.95) {\fontsize{5}{4}\selectfont \shortstack{
					\\  \\
					Hetero\\
					$p=50$
			}};
			
			\nextgroupplot
			\node[anchor=north] at (axis description cs: 0.25,  0.95) {\fontsize{5}{4}\selectfont \shortstack{
					\\  \\
					Homo-Normal\\
					$p=100$
			}};
			\addplot table[x = beta,y=Bayes] from \RMSEponohomonSigmatwo ;
			\addplot table[x = beta,y=DebiasBayes] from \RMSEponohomonSigmatwo ;
			\addplot table[x = beta,y=DebiasLasso] from \RMSEponohomonSigmatwo ;

			\nextgroupplot
			\node[anchor=north] at (axis description cs: 0.25,  0.95) {\fontsize{5}{4}\selectfont \shortstack{
					\\  \\
					Homo-Chi\\
					$p=100$
			}};
			\addplot table[x = beta,y=Bayes] from \RMSEponohomochiSigmatwo;
			\addplot table[x = beta,y=DebiasBayes] from \RMSEponohomochiSigmatwo;
			\addplot table[x = beta,y=DebiasLasso] from \RMSEponohomochiSigmatwo;
			
			\nextgroupplot[legend style = {column sep = 3.5pt, legend to name = LegendMon26}]
			
			\node[anchor=north] at (axis description cs: 0.25,  0.95) {\fontsize{5}{4}\selectfont \shortstack{
					\\  \\
					Heter\\
					$p=100$
			}};
			\addplot table[x = beta,y=Bayes] from \RMSEponoheterSigmatwo ;
			\addplot table[x = beta,y=DebiasBayes] from \RMSEponoheterSigmatwo ;
			\addplot table[x = beta,y=DebiasLasso] from \RMSEponoheterSigmatwo ;

			\nextgroupplot
			\node[anchor=north] at (axis description cs: 0.25,  0.95) {\fontsize{5}{4}\selectfont \shortstack{
					\\  \\
					Homo-Normal\\
					$p=200$
			}};
			
			\addplot table[x = beta,y=Bayes] from \RMSEptnohomonSigmatwo;
			\addplot table[x = beta,y=DebiasBayes] from \RMSEptnohomonSigmatwo;
			\addplot table[x = beta,y=DebiasLasso] from \RMSEptnohomonSigmatwo;

			\nextgroupplot
			\node[anchor=north] at (axis description cs: 0.25,  0.95) {\fontsize{5}{4}\selectfont \shortstack{
					\\  \\
					Homo-Chi\\
					$p=200$
			}};
			\addplot table[x = beta,y=Bayes] from \RMSEptnohomochiSigmatwo;
			\addplot table[x = beta,y=DebiasBayes] from \RMSEptnohomochiSigmatwo;
			\addplot table[x = beta,y=DebiasLasso] from \RMSEptnohomochiSigmatwo;
			
			\nextgroupplot[legend style = {column sep = 3.5pt, legend to name = LegendMon36}]
			
			\node[anchor=north] at (axis description cs: 0.25,  0.95) {\fontsize{5}{4}\selectfont \shortstack{
					\\  \\
					Hetero\\
					$p=200$
			}};
			
			\addplot table[x = beta,y=Bayes] from \RMSEptnoheterSigmatwo;
			\addplot table[x = beta,y=DebiasBayes] from \RMSEptnoheterSigmatwo;
			\addplot table[x = beta,y=DebiasLasso] from \RMSEptnoheterSigmatwo;
			\addlegendentry{Bayes};
			\addlegendentry{Debiased-Bayes};
			\addlegendentry{Debiased-LASSO};

		\end{groupplot}
		\node at ($(myplots c2r1) + (0,-2.25cm)$) {\ref{LegendMon16}};
		\node at ($(myplots c2r2) + (0,-2.25cm)$) {\ref{LegendMon26}};
		\node at ($(myplots c2r3) + (0,-2.25cm)$) {\ref{LegendMon36}};
	\end{tikzpicture}
	\caption{RMSE corresponding to different values of $\beta_0$ under settings S4 (first column), S5 (second column), and S6 (third column).} \label{Fig: six}
\end{figure}
\newpage

\section{Conclusion}\label{Sec: Conclusion}
In this paper, we develop a new debiased Bayesian inferential method for high-dimensional linear regression models. The construction resembles the frequentist debiasing step, whereas the key difference is that we correct the entire posterior distribution rather than the point estimator. Our approach is tailored to building confidence intervals based on sparsity-inducing priors such as spike-and-slab or horseshoe type priors of the regression coefficients. We establish the frequentist validity of our proposal in the general setup and also provide low-level conditions. It is straightforward to observe that our debiasing step easily extends to other parametric or semiparametric models. To demonstrate the versatility of our general methodology, we mention two possible extensions.

One may consider the generalized linear model for the conditional density function of the dependent variable given covariates as follows:
\begin{align*}
	p(y|x)\propto\exp\left(y\cdot x^\intercal\beta_0-d(x^\intercal\beta_0)\right),
\end{align*}
where $d$ is a convex link function. In this case, our debiased Bayesian method builds on
\begin{align*}
	\tilde{\beta}=\beta+\hat{\Theta}_n\left[\sum_{i=1}^{n}W_{ni}X_i(Y_i-d^\prime(X_i^\intercal\beta))\right],
\end{align*}
where $\hat{\Theta}_n$ stands for a regularized inverse of the Hessian matrix $n^{-1}\sum_{i=1}^nd^{\prime\prime} (X_{i}^{\intercal}\hat\beta_n)X_iX_i^\intercal$, given some pilot estimator $\hat{\beta}_n$. 

One may also incorporate group structure which often occurs in additive models \citep{Baietal2022Group}, multivariate outcome variables regression, and regressors collected over mixed frequencies \citep{MoglianiSimoni2021BMidas}. For example, consider the linear regression specified by the following form:
\begin{equation}
	Y=\sum_{g=1}^G X_g^{\intercal}\beta_{g,0}+\varepsilon,
\end{equation}
where $\beta_{g,0}$ is an $m_g\times 1$ vector of coefficients, and $X_g$ is an $m_g\times 1$ vector of covariates corresponding to group $g = 1, \cdots,G$. The number of groups $G$ is potentially larger than the sample size $n$. In this case, our debiasing procedure is based on
	\begin{align*}
	\tilde{\beta}_{g}=\beta_{g}+\mathcal{E}_g^\intercal \hat{\Theta}_n\left[\frac{1}{n}\sum_{i=1}^{n}X_i(Y_i-X_i^\intercal\beta)\right],
\end{align*}
where $\mathcal{E}_g$ here is the matrix formed by columns of $I_p$ such that $\mathcal{E}_g\beta=\beta_g$ with $\beta=(\beta_1^\intercal,\ldots,\beta_G^\intercal)^\intercal$ and $p=m_1+\cdots+m_G$, $X_i=(X_{1,i}^\intercal,\ldots,X_{G,i}^\intercal)^\intercal$, and the initial posterior for $\beta$ can be obtained by the group spike-and-slab prior from \cite{Baietal2022Group}. 

For both extensions, we believe our debiased Bayesian inference offers interesting new insights. We will defer the detailed theoretical development, as well as practical performance to some future work.

\begin{appendices}
\titleformat{\section}{\Large\center}{{\sc Appendix} \thesection}{1em}{}

\section{Proofs of Main Results}\label{Sec: Main proofs}
\renewcommand{\theequation}{A.\arabic{equation}}
\setcounter{equation}{0}
We provide some heuristic discussions of the proofs of our theoretical results. 
Starting from the definition of $\tilde{\beta}$ in \eqref{Eqn:DebiasBayes} and the expansion of $\hat{\beta}_{n}$ in Assumption \ref{Assump:Frequentist}, 
we obtain from the model \eqref{Eqn:Model} the following decomposition:
\begin{align}\label{Eqn: Decomposition}
\tilde{\beta}-\hat{\beta}_{n} 
&= 
\underbrace{\Theta_0 \left[\sum_{i=1}^{n}\left(W_{ni}-\frac{1}{n}\right)X_i\varepsilon_i\right]}_{S_n}
+\underbrace{(\hat{\Theta}_n -\Theta_0)\left[\sum_{i=1}^{n}W_{ni}X_i\varepsilon_i\right]}_{R_{1n}} \notag\\
&\hspace{0.5cm} +
\underbrace{\left\{\left[I_p - \hat{\Theta}_n \hat{\Omega}_n\right]
 -\hat{\Theta}_n \left[\sum_{i=1}^{n}\left(W_{ni}-\frac{1}{n}\right)X_iX_i^{\intercal}\right]\right\}}_{R_{2n}}(\beta - \beta_0) 
- {\Delta}_n \notag\\
&= S_n + R_{1n} + R_{2n}(\beta - \beta_0) - {\Delta}_n.
\end{align}
The decomposition in \eqref{Eqn: Decomposition} forms the foundation of our analysis. 
The leading term $S_n$ captures the first-order stochastic component of the debiased posterior and drives its asymptotic Gaussianity. Considering its close connection to the bootstrap central limit theorem, we rely on the recent strong approximation result for the exchangeable bootstrap in the high dimensional setup in \cite{FangSantosShaikhTorgovitsky2023LP}; see also Theorem \ref{Thm:strong approx} for a version of the result. We show that, for every $\epsilon>0$,
\begin{align}
\Pi\big(\sqrt{n}\|R_{1n}\|_{\infty}>\epsilon \mid Z^{(n)}\big)=o_{P_0}(1),~ 
\Pi\big(\sqrt{n}\|R_{2n}(\beta-\beta_0)\|_{\infty}>\epsilon \mid Z^{(n)}\big)=o_{P_0}(1),
\end{align}
and that $\sqrt{n}\|\Delta_n\|_{\infty}=o_{P_0}(1)$, which follows from Assumption \ref{Assump:Frequentist}. 
Hence, the remainder terms $R_{1n}$, $R_{2n}(\beta-\beta_0)$, and ${\Delta}_n$ are uniformly negligible across coordinates, 
ensuring that the asymptotic behavior of $\tilde{\beta}-\hat{\beta}_n$ is fully characterized by the leading term $S_n$. The subtle difference about $R_{1n}$ and $R_{2n}(\cdot)$ is whether it depends on $\beta-\beta_0$ or not. As a consequence, one needs to show the convergence of certain random matrices in the operator norm or in the elementwise sup-norm in proving the corresponding negligibility.

Throughout, we denote by $\mathbb{E}_{Z,W}[\cdot]$ the expectation evaluated under the joint distribution of $Z^{(n)}$ and $W^{(n)}$. For a sequence $r_n>0$, the symbols $o_{P_{Z,W}}(r_n)$ and $O_{P_{Z,W}}(r_n)$ are defined with respect to the same underlying probability measure $P_{Z,W}$.

\begin{proof}[Proof of Theorem \ref{Thm:BvM1}]
As shown in the sequel, it suffices to establish the conditional weak convergence of $e_{j}^{\intercal}\sqrt{n}(\tilde{\beta}-\hat{\beta}_{n})$ restricted to some set $\Gamma_n$ such that $\Pi(\Gamma_n^c\mid Z^{(n)})=o_{P_0}(1)$. To construct this set $\Gamma_n$, let $\eta_n:=\sqrt{\log p/n} + \log^2 n\log^2 p/n$. Since $\sqrt{n}(\epsilon_n \|\Theta_0\|_{\infty}+\delta_n)\eta_n\to0$, there exists a positive sequence $\xi_n$ such that $\xi_n\to\infty$ and $\xi_n\sqrt{n}(\epsilon_n\|\Theta_0\|_{\infty}+\delta_n)\eta_n \to0$. Given the positive constant $C$ in Assumption \ref{Assump:Beta}, define the following sets: 
\begin{align}\label{Thm:BvM1:Eqn:1}
\mathcal{A}_{n}&: = \left\{\left\|\sum_{i=1}^{n}W_{ni}X_i\varepsilon_i\right\|_{\infty}\leq  C \xi_{n} \eta_n\right\},\notag\\
\mathcal{B}_{n}&: = \left\{\left\|\sum_{i=1}^{n}(W_{ni}-\frac{1}{n})X_iX_i^{\intercal}\right\|_{\max}\leq C \xi_{n} \eta_n\right\},\\
\mathcal{C}_{n}&: = \{\|\beta - \beta_0\|_{1}\leq C\epsilon_n\}.\notag
\end{align}
Then, we take $\Gamma_n:=\mathcal{A}_n\bigcap \mathcal{B}_n\bigcap\mathcal{C}_n$. Since $\xi_n\to\infty$, $\mathbb{E}_0[\Pi_{W}(\mathcal{A}^{c}_{n}\mid Z^{(n)})] = P_{Z,W} (\mathcal{A}^{c}_{n}) \to 0$ and $\mathbb{E}_0[\Pi_{W}(\mathcal{B}^{c}_{n}\mid Z^{(n)})] = P_{Z,W}(\mathcal{B}^{c}_{n}) \to 0$ by Lemma \ref{Lem:BootstrapRate}. And by Assumption \ref{Assump:Beta}, $\mathbb{E}_0[\Pi_{\beta}(\mathcal{C}_n^{c})\mid Z^{(n)}]\to 0$. 
Using both results, Markov inequality yields 
\begin{align}\label{Thm:BvM1:Eqn:4}
\Pi(\Gamma_n^c\mid Z^{(n)})\leq \Pi_{W}(\mathcal A_n^c\mid Z^{(n)}) +\Pi_{W}(\mathcal B_n^c\mid Z^{(n)})+\Pi_{\beta}(\mathcal C_n^c\mid Z^{(n)})\overset{P_0}{\to} 0,
\end{align}
where the inequality holds since $\Pi(\,\cdot\,|Z^{(n)}) = \Pi_{\beta}(\,\cdot\,|Z^{(n)})\times \Pi_W(\,\cdot\,|Z^{(n)})$, and both $\mathcal A_n^c$ and $\mathcal B_n^c$ do not depend on $\beta$ while $\mathcal C_n^c$ does not depend on $W^{(n)}$. 

By the triangle inequality, for each $j=1,\ldots, p$,
\begin{align}\label{Thm:BvM1:Eqn:2}
 &d_{BL}\left(\mathcal{L}_{\Pi}(e_{j}^{\intercal}\sqrt{n}(\tilde{\beta}-\hat{\beta}_{n})\mid Z^{(n)}),N(0,\sigma_{0,j}^2)\right)\notag\\
 &\hspace{-0.5cm}\leq d_{BL}\left(\mathcal{L}_{\Pi}(e_{j}^{\intercal}\sqrt{n}S_{n}\mid Z^{(n)}),N(0,\sigma_{0,j}^2) \right) \notag\\
		&+d_{BL}\left(\mathcal{L}_{\Pi}(e_{j}^{\intercal}\sqrt{n}(\tilde{\beta}-\hat{\beta}_{n})\mid Z^{(n)}),\mathcal{L}_{\Pi}(e_{j}^{\intercal}\sqrt{n}S_{n}\mid Z^{(n)}) \right).
\end{align}
By Lemma \ref{Lem:BootApprox}, the first term on the right hand side of the inequality in \eqref{Thm:BvM1:Eqn:2} is convergent to zero in probability since $\mathcal{L}_{\Pi}(e_{j}^{\intercal}\sqrt{n}S_{n}\mid Z^{(n)})$ is identical to $\mathcal{L}_{\Pi_{W}}(e_{j}^{\intercal}\sqrt{n}S_{n}\mid Z^{(n)})$. For the second term, it follows from \eqref{Eqn: Decomposition} that
\begin{align}\label{Thm:BvM1:Eqn:3}
&d_{BL}\left(\mathcal{L}_{\Pi}(e_{j}^{\intercal}\sqrt{n}(\tilde{\beta}-\hat{\beta}_{n})\mid Z^{(n)}),\mathcal{L}_{\Pi}(e_{j}^{\intercal}\sqrt{n}S_{n}\mid Z^{(n)}) \right)\notag\\
&\hspace{-0.5cm}= \sup_{\|f\|_{BL}\le 1}\left|\int [f((e_{j}^{\intercal}\sqrt{n}(\tilde{\beta}-\hat{\beta}_{n}))-f(e_{j}^{\intercal}\sqrt{n}S_{n})]d\Pi(\beta,W^{(n)}\mid Z^{(n)})\right|\notag\\
&\hspace{-0.5cm}\leq \sup_{\|f\|_{BL}\le 1}\left|\int_{\Gamma_n} [f((e_{j}^{\intercal}\sqrt{n}(\tilde{\beta}-\hat{\beta}_{n}))-f(e_{j}^{\intercal}\sqrt{n}S_{n})]d\Pi(\beta,W^{(n)}\mid Z^{(n)})\right|\notag\\
&+ \sup_{\|f\|_{BL}\le 1}\int_{\Gamma_n^c} \left|[f((e_{j}^{\intercal}\sqrt{n}(\tilde{\beta}-\hat{\beta}_{n}))\right|d\Pi(\beta,W^{(n)}\mid Z^{(n)})\notag\\
&+\sup_{\|f\|_{BL}\le 1}\int_{\Gamma_n^c} \left|f(e_{j}^{\intercal}\sqrt{n}S_{n})]\right|d\Pi(\beta,W^{(n)}\mid Z^{(n)})\notag\\
        &\hspace{-0.5cm}\le \int_{\Gamma_n}|e_{j}^{\intercal}[\sqrt{n}R_{1n}+\sqrt{n}R_{2n}(\beta-\beta_0) - \sqrt{n}{\Delta}_{n}]|d\Pi(\beta,W^{(n)}\mid Z^{(n)})\notag\\
&+2\Pi(\Gamma_n^{c} \mid Z^{(n)}),
\end{align}
where the first inequality holds by the triangle inequality, and the second inequality holds since $\sup_{x}|f(x)|\leq 1$ and $|f(x) - f(y)|\leq |x-y|$ for any $x, y$ whenever $\|f\|_{BL}\leq 1$. 

By \eqref{Thm:BvM1:Eqn:4}, it remains to show that the first term on the right hand side of the inequality in \eqref{Thm:BvM1:Eqn:3} is convergent to zero in probability. By simple algebra, we have 
\begin{align}\label{Thm:BvM1:Eqn:5}
&\int_{\Gamma_n} |e_{j}^{\intercal}[\sqrt{n}R_{1n}+\sqrt{n}R_{2n}(\beta-\beta_0) - \sqrt{n}{\Delta}_{n}]|d\Pi(\beta,W^{(n)}\mid Z^{(n)})\notag\\
&\hspace{-0.5cm} \leq \int_{\mathcal A_n \cap \mathcal B_n \cap \mathcal C_n} (\|\sqrt{n}R_{1n}\|_{\infty}+\|\sqrt{n}R_{2n}\|_{\max}\|\beta-\beta_0\|_{1} + \|\sqrt{n}{\Delta}_{n}\|_{\infty})d\Pi(\beta,W^{(n)}\mid Z^{(n)}) \notag\\
&\hspace{-0.5cm} \leq \int_{\mathcal B_n}\|\sqrt{n} R_{2n}\|_{\max} d\Pi_{W}(W^{(n)}\mid Z^{(n)}) \int_{\mathcal C_n} \|\beta-\beta_0\|_1 d\Pi_{\beta}(\beta \mid Z^{(n)})\notag\\
& + \int_{\mathcal A_n}\|\sqrt{n}R_{1n}\|_{\infty}d\Pi_{W}(W^{(n)}\mid Z^{(n)}) + \|\sqrt{n}{\Delta}_{n}\|_{\infty},
\end{align}
where the first inequality holds since $\|Ax\|_{\infty}\leq \|A\|_{\max}\|x\|_{1}$ for matrix $A$ and vector $x$, and the second inequality holds since $\Pi(\,\cdot\,|Z^{(n)}) = \Pi_{\beta}(\,\cdot\,|Z^{(n)})\times \Pi_W(\,\cdot\,|Z^{(n)})$. We next show that each term on the right hand side of the second inequality in \eqref{Thm:BvM1:Eqn:5} is convergent to zero in probability. First, $\|\sqrt{n}{\Delta}_{n}\|_{\infty}\overset{P_{0}}{\to} 0$ by Assumption \ref{Assump:Frequentist}. Second, 
\begin{align}\label{Thm:BvM1:Eqn:6}
&\int_{\mathcal A_n}\|\sqrt{n} R_{1n}\|_{\infty} d\Pi_{W}(W^{(n)}\mid Z^{(n)})\notag\\
&\hspace{-0.5cm} \leq \sqrt{n} \|\hat{\Theta}_n -\Theta_0)\|_{\infty}\int_{\mathcal A_n}\left\|\sum_{i=1}^{n}W_{ni}X_i\varepsilon_i\right\|_{\infty}d\Pi_{W}(W^{(n)}\mid Z^{(n)})\notag\\
&\hspace{-0.5cm} \leq \sqrt{n} \|\hat{\Theta}_n -\Theta_0)\|_{\infty} C\xi_n \eta_n \overset{P_{0}}{\to} 0,
\end{align}
where the first inequality holds since $\|Ax\|_{\infty}\leq \|A\|_{\infty} \|x\|_{\infty}$, and the convergence holds due to Assumption \ref{Assump:Precision} and $\xi_n\sqrt{n}\delta_n \eta_n \to0$. Third, similarly,
\begin{align}\label{Thm:BvM1:Eqn:7}
&\int_{\mathcal B_n}\|\sqrt{n} R_{2n}\|_{\max} d\Pi_{W}(W^{(n)}\mid Z^{(n)})\notag\\
&\hspace{-0.5cm} \leq \sqrt{n} \|\hat{\Theta}_n\hat{\Omega}_n - I_{p}\|_{\max} + \sqrt{n}\|\hat{\Theta}_n\|_{\infty}\int_{\mathcal B_n}\left\|\sum_{i=1}^{n}(W_{ni}-\frac{1}{n})X_iX_i^{\intercal}\right\|_{\max} d\Pi_{W}(W^{(n)}\mid Z^{(n)}) \notag\\
&\hspace{-0.5cm} \leq \sqrt{n} \|\hat{\Theta}_n\hat{\Omega}_n - I_{p}\|_{\max} + \sqrt{n}(\|\hat{\Theta}_n-{\Theta}_0\|_{\infty} + \|{\Theta}_0\|_{\infty})C \xi_n\eta_n,
\end{align}
where the first inequality holds since $\|A B\|_{\max} \leq \|A\|_{\infty}\|B\|_{\max}$ for matrices $A$ and $B$, and trivially, 
\begin{align}\label{Thm:BvM1:Eqn:8}
\int_{\mathcal C_n} \|\beta-\beta_0\|_1 d\Pi_{\beta}(\beta \mid Z^{(n)}) \leq C \epsilon_n.
\end{align}
Combining \eqref{Thm:BvM1:Eqn:7} and \eqref{Thm:BvM1:Eqn:8} implies the first term on the right hand side of the second inequality in \eqref{Thm:BvM1:Eqn:5} is convergent to zero in probability due to Assumption \ref{Assump:Precision}, $\sqrt{n}\epsilon_n\gamma_n\to 0$, and $\xi_n\sqrt{n}\epsilon_n\eta_n\|\Theta_0\|_{\infty}\to 0$. This completes the proof of the theorem. 
\end{proof}

\begin{proof}[Proof of Corollary \ref{Cor:BvM1}]
The essential idea is to establish the proper convergence of percentiles of the debiased posterior distribution of $e_{j}^{\intercal}\sqrt n(\tilde{\beta}-\hat\beta_n)$. We also make use of the symmetry of the limiting Gaussian distribution to show the validity of the debiased Bayesian credible interval.

By Lemma \ref{Lem:StrongApprox}, there exists some Gaussian $\mathbb G_{j}\sim N(0,\sigma_{0,j}^2)$ such that
\begin{align}\label{Cor:BvM1 aux1}
\limsup_{n\to\infty}P_0(|e_{j}^{\intercal}\sqrt n(\hat\beta_n-\beta_0)-\mathbb G_{j}|>\eta_n) = 0,
\end{align}
for some $\eta_n\downarrow 0$. By Lemmas \ref{Lem:BootApprox}, \eqref{Eqn: Decomposition}, and the inequalities used in the arguments of \eqref{Thm:BvM1:Eqn:4}, \eqref{Thm:BvM1:Eqn:5}, and \eqref{Thm:BvM1:Eqn:6} and by making $\eta_n$ larger if necessary, we have that, for a copy $\bar{\mathbb G}_{j}$ of $\mathbb G_{j}$ that is independent of $Z^{(n)}$,
\begin{align}\label{Cor:BvM1 aux2}
\limsup_{n\to\infty}\mathbb{E}_0[\Pi(|e_{j}^{\intercal}\sqrt n(\tilde \beta-\hat\beta_{n})-\bar{\mathbb G}_{j}|>\eta_n\mid Z^{(n)})] = 0.    
\end{align}
By Markov’s inequality, Fubini’s theorem, and \eqref{Cor:BvM1 aux2}, there is some $\varpi_n\downarrow 0$ such that
\begin{align}\label{Cor:BvM1 aux3}
\limsup_{n\to\infty}\Pi(|e_{j}^{\intercal}\sqrt n(\tilde \beta-\hat\beta_{n})-\bar{\mathbb G}_{j}|>\eta_n\mid Z^{(n)}) = o_{P_0}(\varpi_n).    
\end{align}
For $0<\tau<1$, let $c_{j}(\tau)$ be the $\tau$th quantile of $N(0,\sigma_{0,j}^2)$ and $\hat c_{n,j}(\tau)$ be the $\tau$th conditional quantile of $e_{j}^{\intercal}\sqrt n(\tilde \beta-\hat\beta_{n})$ given $Z^{(n)}$. Since $\bar{\mathbb G}_{j}$ is independent of $Z^{(n)}$, we note that $c_{j}(\tau)$ is also the $\tau$th conditional quantile of $\bar{\mathbb G}_{j}$ given $Z^{(n)}$. By Lemma 11 in \citet{ChernozhukovLeeRosen2013Intersection}, it follows from \eqref{Cor:BvM1 aux3} that, with probability approaching one, for some $\varsigma_n=o(\varpi_n)$, and for any $\tau\in(0,1-\varsigma_n)$,
\begin{align}\label{Cor:BvM1 aux4}
\hat c_{n,j}(\tau)\le c_{j}(\tau+\varsigma_n)+\eta_n,\quad c_{j}(\tau)\le \hat c_{n,j}(\tau+\varsigma_n)+\eta_n.
\end{align}

By equivariance of quantiles (see, e.g., \citet[p.39]{Koenker2005QR}), we note that
\begin{align}\label{Cor:BvM1 aux5}
P_0(\beta_{0,j}\in\mathcal C_{n,j}(\alpha)) = P_0\left(-\hat c_{n,j}(1-{\alpha}/{2})\leq e_{j}^{\intercal}\sqrt{n}(\hat{\beta}_{n}-\beta_{0})\leq -\hat c_{n,j}({\alpha}/{2})\right).
\end{align}
Let $H_{j}$ be the cdf of $N(0,\sigma_{0,j}^2)$. By \eqref{Cor:BvM1 aux1} and \eqref{Cor:BvM1 aux4}, we have:
\begin{align}\label{Cor:BvM1 aux6}
&\hspace{0.5cm}\limsup_{n\to\infty} P_0(e_{j}^{\intercal}\sqrt{n}(\hat{\beta}_{n}-\beta_{0})\le -\hat c_{n,j}({\alpha}/{2}))\notag\\
&\le \limsup_{n\to\infty} P_0(\mathbb G_{j}-\eta_n \le -c_{j}({\alpha}/{2}-\varsigma_n)+\eta_n)\notag\\
&= \limsup_{n\to\infty}H_{j}(-c_{j}({\alpha}/{2}-\varsigma_n)+2\eta_n).
\end{align}
Since $\sigma_{0,j}^2$ is bounded away from zero by assumption, the density of $H_{j}$ is bounded. Thus,  the fundamental theorem of calculus implies
\begin{align}\label{Cor:BvM1 aux7}
H_{j}(-c_{j}({\alpha}/{2}-\varsigma_n)+2\eta_n)  - H_{j}(-c_{j}({\alpha}/{2}-\varsigma_n)) = O(\eta_n).
\end{align}
Note that $H_{j}(-c_{j}(\alpha/2-\varsigma_n))=1-\alpha/2+\varsigma_n$ by the symmetry of $N(0,\sigma_{0,j}^2)$ around zero. Since $\varsigma_n\vee \eta_n=o(1)$, we may then obtain from \eqref{Cor:BvM1 aux6} and \eqref{Cor:BvM1 aux7} that
\begin{align}\label{Cor:BvM1 aux8}
\limsup_{n\to\infty} P_0(e_{j}^{\intercal}\sqrt{n}(\hat{\beta}_{n}-\beta_{0})\le -\hat c_{n,j}({\alpha}{/2}))\le 1-\alpha/2.
\end{align}
For the reverse direction, we have by \eqref{Cor:BvM1 aux1} and \eqref{Cor:BvM1 aux4} that
\begin{align}\label{Cor:BvM1 aux9}
&\hspace{0.5cm}\liminf_{n\to\infty} P_0(e_{j}^{\intercal}\sqrt{n}(\hat{\beta}_{n}-\beta_{0})\le -\hat c_{n,j}({\alpha}/{2}))\notag\\
&\ge \liminf_{n\to\infty} P_0(\mathbb G_{j}+\eta_n \le -c_{j}({\alpha}/{2}+\varsigma_n)-\eta_n)\notag\\
&= \liminf_{n\to\infty}H_{j}(-c_{j}({\alpha}/{2}+\varsigma_n)-2\eta_n)=1-{\alpha}/{2},
\end{align}
where the last step follows by another application of the fundamental theorem of calculus and $\varsigma_n\vee \eta_n=o(1)$. Combining \eqref{Cor:BvM1 aux8} and \eqref{Cor:BvM1 aux9} yields:
\begin{align}\label{Cor:BvM1 aux10}
\lim_{n\to\infty} P_0(e_{j}^{\intercal}\sqrt{n}(\hat{\beta}_{n}-\beta_{0})\le -\hat c_{n,j}({\alpha}/{2}))=    1-\alpha/2.
\end{align}
By similar arguments, we may also obtain that
\begin{align}\label{Cor:BvM1 aux11}
\lim_{n\to\infty} P_0(e_{j}^{\intercal}\sqrt{n}(\hat{\beta}_{n}-\beta_{0})< -\hat c_{n,j}(1-{\alpha}/{2}))=    \alpha/2.
\end{align}
The conclusion of the corollary then follows from \eqref{Cor:BvM1 aux5}, \eqref{Cor:BvM1 aux10}, and \eqref{Cor:BvM1 aux11}.  
\end{proof}

\begin{proof}[Proof of Theorem \ref{Thm:BvM2}]
The proof of the theorem follows by the similar arguments as in the proof of Theorem \ref{Thm:BvM1}. Specifically, \eqref{Thm:BvM1:Eqn:2} continues to hold by replacing $e_j$ with $E_{J}$ and $\sigma^{2}_{0,j}$ with $\Sigma_{0,J}$. The first term on the right hand side of the inequality after the replacement is convergent to zero in probability by Lemma \ref{Lem:BootApproxSimultaneous}. The second term also converge in probability to zero by following almost the same arguments in \eqref{Thm:BvM1:Eqn:3}-\eqref{Thm:BvM1:Eqn:8}, where $e_j$ needs to be placed by $E_{J}$ and $|\cdot|$ by $\|\cdot\|_{\infty}$. In particular, the bound for the remainder terms in \eqref{Thm:BvM1:Eqn:5} is uniformly valid over $j=1,\ldots, p$.
\end{proof}

\begin{proof}[Proof of Proposition \ref{Pro:Primitive}]
We first verify Assumption \ref{Assump:Beta} for the exact posterior under the spike-and-slab prior. By Examples 4 and 5 of  \citet{Castilloetal_BayesianSparse_2015}, their Assumption 1 is satisfied by the spike-and-slab prior. Let $A$ be the value of the fourth constant in their Assumption 1 in this case. We define the following notation. For a subset $S\subset\{1,\ldots,p\}$, define the compatibility number
\begin{align}\label{CompNumber}
	\phi(S):= \inf\left\{\frac{\sqrt{|S|\beta^\intercal \hat{\Omega}_n\beta } }{\nu(\hat{\Omega}_n)\|\beta_S\|_1}: \|\beta_{S^c}\|_1\le 7\|\beta_S\|_1, \beta_S\neq 0\right\}.
\end{align}
For a sparsity level $s\leq p$, define the uniform compatibility number for sparse vectors
\begin{align}\label{UCompNumber}
	\bar\phi(s):= \inf\left\{\frac{\sqrt{|S_{\beta}|\beta^\intercal \hat{\Omega}_n\beta}}{\nu(\hat{\Omega}_n)\|\beta\|_1}: 0\neq |S_\beta|\le s\right\}.
\end{align}
The posterior contraction rate in Theorem 2 of \cite{Castilloetal_BayesianSparse_2015} depends on both $\phi$ and $\bar\phi$ through the following quantity: for a subset $S\subset\{1,\ldots,p\}$,
\begin{align}\label{MixofCompNumber}
\bar\psi(S) =\bar\phi\left(\left[2+\frac{3}{A}+\frac{33}{\phi(S)^2}\frac{\lambda}{\bar\lambda}\right]|S|\right),
\end{align}
where $\lambda$ is the hyperparameter in the Laplace density for the slab part in \eqref{SSprior}, and $\bar\lambda=2\nu(\hat{\Omega}_n)\sqrt{n\log p}$. Note that \eqref{CompNumber}, \eqref{UCompNumber}, and \eqref{MixofCompNumber} are Definitions 2.1 and 2.2 and Equation (2.5) of \cite{Castilloetal_BayesianSparse_2015}, respectively. Their condition (2.1) is satisfied by Assumption \ref{Assump:Tuning}(i). Thus, by the law of iterated expectation and the dominated convergence theorem, Theorem 2 implies that 
\begin{align}\label{Pro:Primitive:Eqn:1}
\mathbb{E}_{0}\left[\Pi_{\beta}\left(\|\beta-\beta_0\|_1>\frac{M}{\phi(S_{\beta_0})^2\bar\psi(S_{\beta_0})^2\nu(\hat{\Omega}_n)}s_{\beta_0}\sqrt{\frac{ \log p}{n}}\Biggm|Z^{(n)}\right)\right] \to 0
\end{align}
for sufficiently large $M>0$. Below we show that $\nu(\hat{\Omega}_n)$, $\phi(S_{\beta_0})$, and $\bar{\psi}(S_{\beta_0})$ are bounded away from zero with probability approaching one, so that Assumption \ref{Assump:Beta} holds with 
\begin{align}\label{Pro:Primitive:Eqn:rateExact}
\epsilon_n = s_{\beta_0}\sqrt{\frac{\log p}{n}}.
\end{align}

First, by Assumptions \ref{Assump:Moments}(i) and (ii), following the same arguments as in the proof of Lemma \ref{Lem:BootstrapRate} (see \eqref{Lem:BootstrapRate:Eqn:3}), we have
\begin{align}\label{Pro:Primitive:Eqn:2}
\|\hat{\Omega}_n - \Omega_0\|_{\max} = O_{P_{0}}\left( \sqrt{\frac{\log p}{n}}\right) = O_{P_{0}}(1). 
\end{align}
Since the eigenvalues of $\Omega_0$ are bounded away from zero, it follows from \eqref{Pro:Primitive:Eqn:2} that $\nu(\hat{\Omega}_n)$ is bounded away from zero with probability approaching one. Second, since $\|\beta_{S}\|_1\leq |S|^{1/2}\|\beta_{S}\|_2\leq |S|^{1/2}\|\beta\|_2$, it suffices to show that $\phi^{\ast}(S_{\beta_0})$ and $\bar{\psi}^{\ast}(S_{\beta_0})$ are bounded away from zero with probability approaching one, where
\begin{align}\label{CompNumberstar}
	\phi^{\ast}(S):= \inf\left\{\frac{\sqrt{\beta^\intercal \hat{\Omega}_n\beta } }{\|\beta\|_2}: \|\beta_{S^c}\|_1\le 7\|\beta_S\|_1, \beta_S\neq 0\right\}
\end{align}
and, for $\bar\phi^{\ast}(s):= \inf\left\{\sqrt{\beta^\intercal \hat{\Omega}_n\beta}/\|\beta\|_2: 0\neq |S_\beta|\le s\right\}$,
\begin{align}\label{MixofCompNumberstar}
\bar\psi^{\ast}(S)=\bar\phi^{\ast}\left(\left[2+\frac{3}{A}+\frac{33}{\phi^{\ast}(S)^2}\frac{\lambda}{\bar\lambda}\right]|S|\right).
\end{align}
By Theorem 1 (or Corollary 1) of \citet{Raskuttietal_RestrictedEigenvalue_2010} and its sub-gaussian extension \citep{Zhou_RestrictedSubgaussian_2009}, there is a constant $L>0$ such that $\phi^{\ast}(S)\geq L$ with probability approaching one uniformly over all subsets $S\subset\{1,\ldots,p\}$ such that $|S|=o(n/\log p)$. So, $\phi^{\ast}(S_{\beta_0})$ is bounded away from zero with probability approaching one, since $s_{\beta_0}\log p/\sqrt{n}\to0$. The uniformity further implies that $\bar\phi^{\ast}(s)\geq L$ with probability approaching one for $s=o(n/\log p)$, since $0=\|\beta_{S_{\beta}^{c}}\|_1\leq 7\|\beta_{S_{\beta}}\|_1$ is always true. Given the range of $\lambda$ in Assumption \ref{Assump:Tuning} (i), $\bar\psi^{\ast}(S_{\beta_0})\leq \bar\phi^{\ast}(\eta_n s_{\beta_0})$ with probability approaching one for any $\eta_n\to\infty$. So, $\bar{\psi}^{\ast}(S_{\beta_0})$ are bounded away from zero with probability approaching one.

For the variational Bayesian approximate posterior in \eqref{VBFamily} and \eqref{VBPost}, we apply Theorem 1 in \cite{RaySzabo2022Variational}. The posterior contraction rate depends on both $\phi$ and $\bar\phi$ instead through the following quantity: for a subset $S\subset\{1,\ldots,p\}$,
\begin{align}\label{MixofCompNumberVB}
\bar\psi_{M}(S) =\bar\phi\left(\left[2+\frac{4M}{A}\left(1+\frac{16}{\phi(S)^2}\frac{\lambda}{\bar\lambda}\right)\right]|S|\right) \text{ for }M>0.
\end{align}
Since the diagonal elements of $\Omega_0$ are bounded, it follows from \eqref{Pro:Primitive:Eqn:2} that $\nu(\hat{\Omega}_n)$ is bounded with probability approaching one. Thus, $\lambda = O_{P_0}(\nu(\hat{\Omega}_n)\sqrt{n\log p}/s_{\beta_0})$. Again by the law of iterated expectation and the dominated
convergence theorem, Theorem 1 implies that there exists a constant $M_1>$ such that
\begin{align}\label{Pro:Primitive:Eqn:VB}
\mathbb{E}_{0}\left[\tilde\Pi_{\beta}\left(\|\beta-\beta_0\|_1>\frac{M_1\eta_n}{\phi(S_{\beta_0})^2\bar\psi_{\eta_n}(S_{\beta_0})^2\nu(\hat{\Omega}_n)}s_{\beta_0}\sqrt{\frac{ \log p}{n}}\Biggm|Z^{(n)}\right)\right] \to 0
\end{align}
for any $\eta_n\to\infty$ (arbitrarily slowly). By the same arguments as above, $\psi_{\eta_n}(S_{\beta_0})$ is bounded away from zero with probability approaching one when $s_{\beta_0}\eta_n = o(n/\log p)$. It follows that Assumption \ref{Assump:Beta} holds with
\begin{align}\label{Pro:Primitive:Eqn:rateVB}
\epsilon_n = s_{\beta_0}\rho_n\sqrt{\frac{\log p}{n}}
\end{align}
for any $\rho_n\to\infty$ such that $s_{\beta_0}\rho_n = o(\sqrt{n/\log p})$.
 
We next verify Assumption \ref{Assump:Precision} for $\hat\Theta_n$ obtained by the nodewise LASSO regression. A rate for $\|\hat{\Theta}_n-\Theta_0\|_{\infty}$ is directly available from the first result in Theorem 2.4 of \citet{VandeGeer_OnAsymptotically_2014} such that the second condition of Assumption \ref{Assump:Precision} holds with
\begin{align}\label{Pro:Primitive:Eqn:3}
\delta_n = \max_{1\leq j\leq p} s_{\Theta_0,j}\sqrt{\frac{\log p}{n}}.
\end{align}
For the second condition, we make use of (10) in \citet{VandeGeer_OnAsymptotically_2014}: $\|\hat{\Theta}_n\hat{\Omega}_n - I_{p}\|_{\max} \leq \max_{1\leq j\leq p} {\lambda_j}/{\hat{\tau}_j}$.
By the proof of Theorem 2.4 in \citet{VandeGeer_OnAsymptotically_2014} (p.1196), $\sup_{1\leq j\leq p}|\hat\tau^{-1}_{j}-\tau^{-1}_j| = O_{P_0}(\sqrt{\max_{1\leq j\leq p}s_{\Theta_0,j}\log p/n})$ for some $\tau_j$ which is uniformly bounded away zero from over $j$. Since $\max_{1\leq j\leq p}\lambda_j = O(\sqrt{\log p/n})$, it then follows that the first condition of Assumption \ref{Assump:Precision} holds with
\begin{align}\label{Pro:Primitive:Eqn:4}
\gamma_n = \sqrt{\frac{\log p}{n}}.
\end{align}

We now verify Assumption \ref{Assump:Frequentist} for the debiased LASSO estimator. By Theorem 2.4 of \citet{VandeGeer_OnAsymptotically_2014}, we have
\begin{align}\label{Pro:Primitive:Eqn:5}
\hat\beta_n = \beta_0  + \hat\Theta_{n}\frac{1}{n} \sum_{i=1}^{n}X_i\varepsilon_i + \Delta_{1n}, ~\text{ where } \|\Delta_{1n}\|_{\infty} = o_{P_0}(n^{-1/2}).
\end{align}
Since $\|\sum_{i=1}^{n}X_i\varepsilon_i/n\|_{\infty} = O_{P_0}(\sqrt{{\log p}/{n}})$, it follows from \eqref{Pro:Primitive:Eqn:3} that 
\begin{align}\label{Pro:Primitive:Eqn:6}
\left\|(\hat\Theta_{n}-\Theta_0)\frac{1}{n}\sum_{i=1}^{n}X_i\varepsilon_i\right\|_{\infty}&\leq \|\hat\Theta_{n}-\Theta_0\|_{\infty}\left\|\frac{1}{n}\sum_{i=1}^{n}X_i\varepsilon_i\right\|_{\infty}\notag\\ 
&= O_{P_0}\left(\max_{1\leq j\leq p} s_{\Theta_0,j}{\frac{\log p}{n}}\right).
\end{align} 
Therefore, Assumption \ref{Assump:Frequentist} holds with $\Delta_n = (\hat\Theta_{n}-\Theta_0)\sum_{i=1}^{n}X_i\varepsilon_i/n + \Delta_{1n}$ by noting that $\max_{1\leq j\leq p} s_{\Theta_0,j}\log p/\sqrt{n}\to 0$. 

We lastly verify the rate conditions in Theorems \ref{Thm:BvM1} and \ref{Thm:BvM2}. First, $\sqrt{n}\epsilon_n\gamma_n = s_{\beta_0}\log p/\sqrt{n}\to 0$ for $\epsilon_n$ in \eqref{Pro:Primitive:Eqn:rateExact}. For $\epsilon_n$ in \eqref{Pro:Primitive:Eqn:rateVB}, the condition continues to hold for a sufficiently slow rate $\rho_n$. The condition $\sqrt{n}(\epsilon_n\|\Theta_0\|_\infty+ \delta_n)(\sqrt{\log p/n} + \log^2 n\log^{2} p/n)\to 0$ can be also similarly verified. This completes the proof of the proposition. 
\end{proof}

\section{Technical Lemmas and Proofs}\label{Sec: Tech lemmasproofs}
\renewcommand{\theequation}{B.\arabic{equation}}
\setcounter{equation}{0}

\begin{lem}\label{Lem:BootstrapRate}
Suppose Assumptions \ref{Assump:Moments}(i)-(iii) hold. Then
\[\left\|\sum_{i=1}^n W_{ni}X_i\varepsilon_i\right\|_{\infty}=O_{P_{Z,W}}\left( \sqrt{\frac{\log p}{n}}+\frac{\log^{2} n\log^{2} p}{n}\right)\]
and
\[\left\|\sum_{i=1}^n(W_{ni}-\frac{1}{n})X_iX_i^\intercal\right\|_{\max}=O_{P_{Z,W}}\left( \sqrt{\frac{\log p}{n}}+\frac{\log^{2} n\log^{2} p}{n}\right).\]
\end{lem}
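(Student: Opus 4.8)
The plan is to exploit the representation $W_{ni}=\omega_i/\bar\omega_n$ with $\bar\omega_n:=\sum_{j=1}^n\omega_j$, which separates a scalar normalization from an i.i.d.\ average. Since the $\omega_j$ are i.i.d.\ $\mathrm{Exp}(1)$, a standard exponential concentration gives $n/\bar\omega_n = 1 + O_{P_{Z,W}}(n^{-1/2})$, so this factor is asymptotically negligible and I may work with the unnormalized averages. For the first bound I write
\[
\sum_{i=1}^n W_{ni}X_i\varepsilon_i=\frac{n}{\bar\omega_n}\cdot\frac1n\sum_{i=1}^n \omega_i X_i\varepsilon_i,
\]
whose $j$th coordinate is $n/\bar\omega_n$ times the average of $\omega_i X_{j,i}\varepsilon_i$. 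Because $\omega_i$ is independent of the data with $\mathbb{E}_0[X_i\varepsilon_i]=0$, these summands are i.i.d.\ across $i$ with mean zero. For the second bound I first use $\sum_{i=1}^n(W_{ni}-1/n)=0$ to center: with $V_i:=\omega_i-1$ and $\bar V_n:=n^{-1}\sum_i V_i$,
\[
\sum_{i=1}^n\Bigl(W_{ni}-\tfrac1n\Bigr)X_iX_i^\intercal=\frac{1}{\bar\omega_n}\sum_{i=1}^n V_i\bigl(X_iX_i^\intercal-\Omega_0\bigr)-\frac{\bar V_n}{\bar\omega_n}\sum_{i=1}^n\bigl(X_iX_i^\intercal-\Omega_0\bigr),
\]
where entrywise the leading summands $V_i(X_{j,i}X_{k,i}-\Omega_{0,jk})$ are again i.i.d.\ mean zero, while the second term is lower order since $\bar V_n=O_{P_{Z,W}}(n^{-1/2})$ and $\|n^{-1}\sum_i(X_iX_i^\intercal-\Omega_0)\|_{\max}=O_{P_0}(\sqrt{\log p/n})$.

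Next I would quantify the tails of the summands. The entry $X_{j,i}\varepsilon_i$ is a product of two sub-Gaussian variables, hence sub-exponential with Orlicz norm bounded uniformly in $j$ by $\|X_i\|_{\psi_2}\|\varepsilon_i\|_{\psi_2}$; likewise $X_{j,i}X_{k,i}-\Omega_{0,jk}$ is sub-exponential uniformly in $(j,k)$. Multiplying by the sub-exponential weight $\omega_i$ (resp.\ $V_i$) produces summands that are sub-Weibull of order $1/2$ with uniformly bounded Orlicz norm, yet with variance of order one (e.g.\ $\mathrm{Var}(\omega_i X_{j,i}\varepsilon_i)=\mathbb{E}_0[\omega_i^2]\,\mathbb{E}_0[X_{j,i}^2\varepsilon_i^2]=O(1)$). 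To exploit this small variance while controlling the heavy tail, I truncate both factors: on the high-probability events $\{\max_i\omega_i\le C\log n\}$ and $\{\max_{i\le n,\,j\le p}|X_{j,i}\varepsilon_i|\le C\log(np)\}$ (whose complements have vanishing probability by the maximum of $n$ exponentials and of $np$ sub-exponentials), each summand is bounded in absolute value by a quantity of order $\log n\,\log(np)$. The truncation bias is negligible because the sub-Weibull($1/2$) tail makes the discarded expectation super-polynomially small, so after re-centering I am left with i.i.d.\ bounded mean-zero summands whose variance is still $O(1)$.

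I would then apply the classical Bernstein inequality for bounded independent variables coordinatewise and union bound over the $p$ coordinates (respectively $p^2$ entries, which changes $\log p$ only by a constant factor). Choosing the deviation level $u$ so that the Bernstein exponent exceeds a constant multiple of $\log p$ balances the two terms: the variance term yields the Gaussian rate $u\asymp\sqrt{\log p/n}$, while the range term yields $u\asymp \log n\,\log(np)\,\log p/n$. Since $\log n\,\log(np)\,\log p\le C\,\log^2 n\,\log^2 p$, this recovers the stated upper bound $\sqrt{\log p/n}+\log^2 n\,\log^2 p/n$, and multiplying by the $O_{P_{Z,W}}(1)$ normalization $n/\bar\omega_n$ leaves the rate unchanged.

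The step I expect to be the main obstacle is obtaining the sharp tail (second) term. A naive sub-exponential Bernstein applied directly to $\omega_i X_{j,i}\varepsilon_i$ would inflate the Gaussian term by a factor $\log n$, because the $\psi_1$-norm of the product is of order $\log n$ rather than its true standard deviation. Avoiding this requires keeping the genuine $O(1)$ variance in the Bernstein bound and separately controlling the range through the two-sided truncation above; verifying that the truncation bias and the maxima of the exponential weights and of the sub-exponential products are simultaneously negligible, uniformly over the $p$ (or $p^2$) indices, is where the $\log^2 n$ and the extra $\log p$ factors arise, and is the most delicate part of the argument.
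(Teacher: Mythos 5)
Your proposal is correct and reaches the stated rate, but the concentration step is carried out differently from the paper. The structural moves coincide: you peel off the normalization $n/\sum_j\omega_j=1+O_{P_{Z,W}}(n^{-1/2})$ exactly as the paper does, and your centering of the second sum via $V_i=\omega_i-1$ is an algebraically equivalent variant of the paper's decomposition $\sum_i(W_{ni}-\tfrac1n)\{X_iX_i^\intercal-\mathbb{E}_0[X_iX_i^\intercal]\}$ split by the triangle inequality into a $W_{ni}$-weighted term and the unweighted empirical average. Where you diverge is the tail bound: the paper simply observes that $\omega_iX_{j,i}\varepsilon_i$ (resp.\ $\omega_i\{X_{j,i}X_{k,i}-\mathbb{E}_0[X_{j,i}X_{k,i}]\}$) is sub-Weibull of order $\theta=1/2$ and invokes the ready-made maximal inequality of \citet{KuchibhotlaChakrabortty_MovingBeyond_2022} (Theorem \ref{Lem:SubWeibull}) with $t=M\log p$, whose two terms $\sqrt{\Gamma_{n,k}(t+\log k)/n}$ and $K_{n,k}(\log 2n)^{2}(t+\log k)^{2}/n$ directly produce $\sqrt{\log p/n}$ and $\log^2n\log^2p/n$. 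You instead re-derive the needed special case by hand via two-sided truncation at $\max_i\omega_i\lesssim\log n$ and $\max_{i,j}|X_{j,i}\varepsilon_i|\lesssim\log(np)$, followed by the classical bounded Bernstein inequality and a union bound; this is more elementary and makes transparent why the genuine $O(1)$ variance (rather than the $\psi_1$-norm of order $\log n$) governs the Gaussian term, at the cost of having to verify the truncation-bias and exceptional-event estimates yourself. Two small points of care: the discarded expectation is of order $e^{-c\sqrt{\log n\,\log(np)}}$, which is polynomially small with an arbitrarily large exponent (for a large truncation constant) rather than literally super-polynomially small --- still amply negligible against $\sqrt{\log p/n}$; and after truncating you must re-center, since the truncated summands are no longer exactly mean zero, which you do acknowledge. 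With those details filled in, your argument is a valid substitute for the paper's appeal to the sub-Weibull maximal inequality.
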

\begin{proof}
We prove the results by applying a maximal inequality for sub-Weibull random variables (see Theorem \ref{Lem:SubWeibull}). We first prove the first result. Since $W_{ni} = \omega_i/\sum_{j=1}^{n}\omega_j$ for $i=1,\ldots, n$ where $\{\omega_i\}_{i=1}^{n} \stackrel{i.i.d.}{\sim} \textup{Exp}(1)$, it follows that 
\begin{align}\label{Lem:BootstrapRate:Eqn:1}
\sum_{i=1}^n W_{ni}X_i\varepsilon_i = \left(\frac{1}{n}\sum_{i=1}^{n}\omega_i\right)^{-1} \frac{1}{n}\sum_{i=1}^n \omega_i X_i\varepsilon_i.
\end{align}
Since $\sum_{i=1}^{n}\omega_i/n\overset{P_{Z,W}}{\to} 1$ by the law of large numbers, by \eqref{Lem:BootstrapRate:Eqn:1} it suffices to show 
\begin{align}\label{Lem:BootstrapRate:Eqn:2}
\left\|\frac{1}{n}\sum_{i=1}^n \omega_i X_i\varepsilon_i\right\|_{\infty} = O_{P_{Z,W}}\left( \sqrt{\frac{\log p}{n}}+\frac{\log^{2} n\log^{2} p}{n}\right). 
\end{align}
Since both $X_i$ and $\varepsilon_i$ are sub-Gaussian by Assumptions \ref{Assump:Moments}(ii) and (iii),  $\omega_i X_i\varepsilon_i$ is sub-Weibull of order $\theta=1/2$ \citep{KuchibhotlaChakrabortty_MovingBeyond_2022}. Thus, \eqref{Lem:BootstrapRate:Eqn:2} is obtained by applying Theorem \ref{Lem:SubWeibull} (by taking $t = M \log p$ for sufficiently large $M$).

For the second result, since $\sum_{i=1}^{n}(W_{ni}-n^{-1}) =0$, we have
\begin{align}\label{Lem:BootstrapRate:Eqn:3}
&\hspace{0.5cm}\left\|\sum_{i=1}^n(W_{ni}-\frac{1}{n})X_iX_i^\intercal\right\|_{\max} \notag\\
&= \left\|\sum_{i=1}^n(W_{ni}-\frac{1}{n})\{X_iX_i^\intercal - E_0[X_iX_i^\intercal]\}\right\|_{\max}\notag\\
&\leq \left\|\sum_{i=1}^nW_{ni}\{X_iX_i^\intercal - \mathbb{E}_0[X_iX_i^\intercal]\}\right\|_{\max} + \left\|\frac{1}{n}\sum_{i=1}^n\{X_iX_i^\intercal - \mathbb{E}_0[X_iX_i^\intercal]\}\right\|_{\max},
\end{align}
where the inequality follows by the triangle inequality. Since $X_i$ is sub-Gaussian by Assumption \ref{Assump:Moments}, $X_iX_i^\intercal-\mathbb{E}_0[X_iX_i^\intercal]$ is sub-exponential and $\omega_i \{X_iX_i^\intercal-\mathbb{E}_0[X_iX_i^\intercal]\}$ is sub-Weibull of order $\theta=1/2$. The second result follows by following the arguments as in \eqref{Lem:BootstrapRate:Eqn:1} and \eqref{Lem:BootstrapRate:Eqn:2}. This completes the proof of the lemma. 
\end{proof}

\begin{lem}\label{Lem:BootApprox}
Let Assumptions \ref{Assump:Moments}(i) and (iv) hold and $S_{n}$ be given as in \eqref{Eqn: Decomposition}. Then for each $j=1,\ldots, p$, there is some $\bar{\mathbb G}_{j}\sim N(0,\sigma_{0,j}^2)$ independent of $Z^{(n)}$ such that
\[e_{j}^{\intercal} \sqrt{n} S_{n}=  \bar{\mathbb G}_{j}+ o_{P_{Z,W}}(1),\]
and thus, for $\mathcal{L}_{\Pi_{W}}(e_{j}^{\intercal} \sqrt{n} S_{n} \mid Z^{(n)})$ the conditional law of $e_{j}^{\intercal} \sqrt{n} S_{n}$ given $Z^{(n)}$,
\[d_{BL}\left(\mathcal{L}_{\Pi_{W}}(e_{j}^{\intercal} \sqrt{n} S_{n} \mid Z^{(n)}), N(0,\sigma^2_{0,j}) \right)\overset{P_{0}}{\to} 0.
\]
\end{lem}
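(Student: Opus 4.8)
The plan is to reduce $e_{j}^{\intercal}\sqrt{n}S_{n}$ to a single sum of i.i.d.\ terms and then invoke the strong approximation for the exchangeable bootstrap. Recalling $S_{n}=\Theta_0\sum_{i=1}^{n}(W_{ni}-1/n)X_i\varepsilon_i$ from \eqref{Eqn: Decomposition}, set $\zeta_i:=e_{j}^{\intercal}\Theta_0X_i\varepsilon_i$, which are i.i.d.\ with $\mathbb{E}_0[\zeta_i]=0$ by \eqref{Eqn:Model}, variance $\sigma^2_{0,j}$, and finite third absolute moment by Assumption \ref{Assump:Moments}(iv). Writing $\bar\omega_n:=n^{-1}\sum_{i=1}^{n}\omega_i$ and using the identity $W_{ni}-1/n=(\omega_i-\bar\omega_n)/(n\bar\omega_n)$, I would express
\[
e_{j}^{\intercal}\sqrt{n}S_{n}=\frac{1}{\bar\omega_n}\cdot\frac{1}{\sqrt{n}}\sum_{i=1}^{n}(\omega_i-\bar\omega_n)\zeta_i.
\]

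First I would peel off the lower-order pieces. Since the $\omega_i$ are i.i.d.\ $\mathrm{Exp}(1)$ independent of $Z^{(n)}$, the law of large numbers gives $\bar\omega_n\overset{P_{Z,W}}{\to}1$, so $\bar\omega_n^{-1}=1+o_{P_{Z,W}}(1)$; and decomposing $n^{-1/2}\sum(\omega_i-\bar\omega_n)\zeta_i=n^{-1/2}\sum(\omega_i-1)\zeta_i-(\bar\omega_n-1)\,n^{-1/2}\sum\zeta_i$, the second term is $O_{P_{Z,W}}(n^{-1/2})\cdot O_{P_{Z,W}}(1)=o_{P_{Z,W}}(1)$. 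This reduces the claim to the strong approximation of the genuine i.i.d.\ sum $n^{-1/2}\sum_{i=1}^{n}(\omega_i-1)\zeta_i$, whose summands are i.i.d., mean zero, with variance $\mathbb{E}_0[(\omega_i-1)^2]\,\mathbb{E}_0[\zeta_i^2]=\sigma^2_{0,j}$ and finite third moment. For the coupling I would work conditionally on $Z^{(n)}$: given the data, this is a sum of independent mean-zero variables with conditional variance $\hat\sigma_n^2:=n^{-1}\sum\zeta_i^2$, and Assumption \ref{Assump:Moments}(iv) supplies the moment bound needed to apply the exchangeable-bootstrap strong approximation (Theorem \ref{Thm:strong approx} in its one-dimensional specialization; cf.\ Lemma \ref{Lem:StrongApprox}). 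This yields, from an auxiliary standard normal $U$ independent of the data, the approximant $\hat\sigma_n U$ up to $o_{P_{Z,W}}(1)$. Since $\hat\sigma_n^2\overset{P_0}{\to}\sigma^2_{0,j}$ by the LLN, setting $\bar{\mathbb G}_{j}:=\sigma_{0,j}U$ gives exactly $\bar{\mathbb G}_{j}\sim N(0,\sigma^2_{0,j})$ independent of $Z^{(n)}$, and changes the approximation by only $(\sigma_{0,j}/\hat\sigma_n-1)\cdot\hat\sigma_n U=o_{P_{Z,W}}(1)$. Collecting the pieces gives $e_{j}^{\intercal}\sqrt{n}S_{n}=\bar{\mathbb G}_{j}+o_{P_{Z,W}}(1)$.

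Finally, the $d_{BL}$ statement follows from the unconditional coupling. Writing $R_n:=e_{j}^{\intercal}\sqrt{n}S_{n}-\bar{\mathbb G}_{j}=o_{P_{Z,W}}(1)$, for any $f$ with $\|f\|_{BL}\le1$ and using that $\bar{\mathbb G}_{j}$ is independent of $Z^{(n)}$ (so $\mathbb{E}[f(\bar{\mathbb G}_{j})]=\mathbb{E}[f(\bar{\mathbb G}_{j})\mid Z^{(n)}]$), the bound $|f(x)-f(y)|\le\min(2,|x-y|)$ yields
\[
\bigl|\mathbb{E}[f(e_{j}^{\intercal}\sqrt{n}S_{n})\mid Z^{(n)}]-\mathbb{E}[f(\bar{\mathbb G}_{j})]\bigr|\le \mathbb{E}[\min(2,|R_n|)\mid Z^{(n)}]=:g_n(Z^{(n)}),
\]
so that $d_{BL}(\mathcal{L}_{\Pi_W}(e_{j}^{\intercal}\sqrt{n}S_{n}\mid Z^{(n)}),N(0,\sigma^2_{0,j}))\le g_n(Z^{(n)})$. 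Since $\mathbb{E}_0[g_n(Z^{(n)})]=\mathbb{E}_{Z,W}[\min(2,|R_n|)]\to0$ by dominated convergence, Markov's inequality gives $g_n(Z^{(n)})\overset{P_0}{\to}0$, completing the argument.

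The main obstacle I anticipate is the coupling step: producing a limiting Gaussian that is simultaneously centered at the \emph{population} variance $\sigma^2_{0,j}$ rather than the conditional bootstrap variance $\hat\sigma_n^2$ and independent of $Z^{(n)}$, while verifying the hypotheses of the strong approximation under only the third-moment condition of Assumption \ref{Assump:Moments}(iv) and correctly absorbing the random normalization $\bar\omega_n$ in the bootstrap weights. The remaining reductions and the passage from the unconditional coupling to the conditional $d_{BL}$ bound are routine.
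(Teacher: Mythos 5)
Your proposal is correct and follows essentially the same route as the paper: both reduce $e_{j}^{\intercal}\sqrt{n}S_{n}$ to an exchangeably weighted, centered sum of the i.i.d.\ variables $e_{j}^{\intercal}\Theta_0X_i\varepsilon_i$, invoke the strong approximation of Theorem \ref{Thm:strong approx} to produce a Gaussian coupling that is independent of $Z^{(n)}$ (the crucial point you correctly flag), and then pass to the conditional $d_{BL}$ statement via the bounded-Lipschitz bound together with Fubini and Markov. The only substantive difference is cosmetic: you factor out $\bar\omega_n^{-1}=1+o_{P_{Z,W}}(1)$ and work with the raw $\mathrm{Exp}(1)$ weights (with an optional detour through the conditional variance $\hat\sigma_n^2$ before replacing it by $\sigma_{0,j}^2$), whereas the paper verifies Assumption \ref{Ass:Boot} directly for the normalized weights $nW_{ni}$ using their $\mathrm{Beta}(1,n-1)$ distribution; both verifications are routine.
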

\begin{proof}
We prove the first result by applying a strong approximation result for the exchangeable bootstrap (see Theorem \ref{Thm:strong approx}). It is observed that 
\begin{align}\label{Eqn:BootApprox aux1}
e_{j}^{\intercal}\sqrt{n}S_{n} & =  \frac{1}{\sqrt n}\sum_{i=1}^{n}n(W_{ni}-\frac{1}{n}) e_j^\intercal\Theta_0 X_i\varepsilon_i\notag\\
& = \frac{1}{\sqrt n}\sum_{i=1}^{n}nW_{ni}  \left( e_j^\intercal\Theta_0X_i\varepsilon_i-\frac{1}{n}\sum_{i=1}^{n}  e_j^\intercal\Theta_0X_i\varepsilon_i\right).
\end{align}
Consider random variables $\{ e_j^\intercal\Theta_0X_i\varepsilon_i\}_{i=1}^{n}$ and exchangeable weights $\{nW_{ni}\}_{i=1}^{n}$. First, Assumption \ref{Ass:IID} is satisfied by Assumption \ref{Assump:Moments}(i) and (iv) and Lyapunov’s inequality.  Since $W_{ni}$ follows the distribution $\mathrm{Beta}(1,n-1)$ by Corollary G.4 in \citet{GhosalVaart2017Bayesian}, Theorem 1 in \citet{Skorski2023Beta} implies that, for all $t>0$ and $n>2$,
\begin{align}\label{Eqn:BootApprox aux2}
P_{Z,W}(|nW_{ni}-\mathbb E_{Z,W}[nW_{ni}]|>t) &= P_{Z,W}\left(W_{ni}-\mathbb E_{Z,W}[W_{ni}]|>\frac{t}{n}\right)\notag\\
&\le 2\exp\left\{-\frac{t^2}{\frac{2(n-1)}{n+1}+\frac{4(n-2)}{3(n+2)}t}\right\}\notag\\
&\le 2\exp\left\{-\frac{t^2}{2+2t}\right\}.
\end{align}
By simple algebra, we have 
\begin{align}\label{Eqn:BootApprox aux3}
\frac{1}{n}\sum_{i=1}^{n}(nW_{ni}-n\bar W_n)^2-1= \frac{1}{\bar \omega_n^2 }\frac{1}{n}\sum_{i=1}^{n}\omega_i^2-2 = O_{P_{Z,W}}(n^{-1/2}),
\end{align}
where the last equality follows by the central limit theorem together with $\mathbb E_{Z,W}[\omega_i]=1$ and $\mathbb E_{Z,W}[\omega_i^2]=2$. By the virtue of $W_{ni}\sim\mathrm{Beta}(1,n-1)$, we may obtain that
\begin{align}\label{Eqn:BootApprox aux4}
\mathbb E_{Z,W}[|nW_{ni}|^3] = n^3 \frac{1}{n}\frac{2}{n+1}\frac{3}{n+2}
\end{align}
which is uniformly bounded in $n$. Thus, together results \eqref{Eqn:BootApprox aux2}, \eqref{Eqn:BootApprox aux3}, and \eqref{Eqn:BootApprox aux4} verify Assumptions \ref{Ass:Boot}(ii) and (iii), while Assumptions \ref{Ass:Boot}(i) and (iv) are trivially satisfied with $q=3/2$. By Theorem \ref{Thm:strong approx}, we may conclude that there exists some $\bar{\mathbb G}_{j}\sim N(0,\sigma_{0,j}^2)$ that is independent of $Z^{(n)}$ and
\begin{align}\label{Eqn:BootApprox aux5}
	e_{j}^{\intercal}\sqrt{n}S_{n} = \bar{\mathbb G}_{j}+ O_{P_{Z,W}}(n^{-1/12}).
\end{align}
The first result of the lemma follows from \eqref{Eqn:BootApprox aux5}.

For the second result, since $\bar{\mathbb G}_{j}$ is independent of $Z^{(n)}$, it follows that 
\begin{align}\label{Eqn:BootApprox aux6}
&\hspace{0.5cm}d_{BL}\left(\mathcal{L}_{\Pi_{W}}(e_{j}^{\intercal} \sqrt{n} S_{n} \mid Z^{(n)}), N(0,\sigma^2_{0,j}) \right) \notag\\
&= \sup_{\|f\|_{BL}\leq 1} |\mathbb E_{Z,W}[f(e_{j}^{\intercal} \sqrt{n} S_{n})\mid Z^{(n)}]-\mathbb E_{Z,W}[f(\bar{\mathbb G}_{j})\mid Z^{(n)}]|\notag\\
&\le \sup_{\|f\|_{BL}\leq 1} \mathbb E_{Z,W}[|f(e_{j}^{\intercal} \sqrt{n} S_{n})-f(\bar{\mathbb G}_{j})|\mid Z^{(n)}],
\end{align}
where $\mathbb E_{Z,W}[\cdot|Z^{(n)}]$ denotes the conditional expectation given $Z^{(n)}$, and the inequality follows by Jensen's inequality. Fix $\epsilon>0$. By the triangle inequality and \eqref{Eqn:BootApprox aux6}, we have
\begin{align}\label{Eqn:BootApprox aux7}
&\hspace{0.5cm}d_{BL}\left(\mathcal{L}_{\Pi_{W}}(e_{j}^{\intercal} \sqrt{n} S_{n} \mid Z^{(n)}), N(0,\sigma^2_{0,j}) \right) \notag\\
	& \le \epsilon + \sup_{\|f\|_{BL}\leq 1} \mathbb E_{Z,W}[|f(e_{j}^{\intercal} \sqrt{n} S_{n})-f(\bar{\mathbb G}_{j})|1\{|e_{j}^{\intercal} \sqrt{n} S_{n}-  \bar{\mathbb G}_{j}|>\epsilon\}\mid Z^{(n)}]\notag\\
	&\le \epsilon + 2 P_{Z,W}(|e_{j}^{\intercal} \sqrt{n} S_{n}-  \bar{\mathbb G}_{j}| >\epsilon \mid Z^{(n)}),
\end{align}
where $P_{Z,W}(\cdot|Z^{(n)})$ denotes the conditional probability given $Z^{n}$, and we have used the fact that $\|f\|_{BL}\leq 1$ implies $\sup_{x}|f(x)|\leq 1$ and $|f(x) - f(y)|\leq |x-y|$ for any $x,y$. By Markov's inequality, Fubini's theorem, and \eqref{Eqn:BootApprox aux5}, we have that $P_{Z,W}(|e_{j}^{\intercal} \sqrt{n} S_{n}-  \bar{\mathbb G}_{j}| >\epsilon \mid Z^{(n)})$ is convergent to zero in probability for each $\epsilon>0$. Since $\epsilon>0$ is arbitrary, the second result of the lemma follows from \eqref{Eqn:BootApprox aux7}.
\end{proof}

\begin{lem}\label{Lem:StrongApprox}
Suppose Assumptions \ref{Assump:Frequentist} and \ref{Assump:Moments} (i) and (iv) hold. Then for each $j=1,\ldots, p$, there exists some $\mathbb G_{j}\sim N(0,\sigma_{0,j}^2)$ satisfying
\[e_{j}^{\intercal}\sqrt n(\hat\beta_n-\beta_0) = \mathbb G_{j} +o_{P_0}(1).\]
\end{lem}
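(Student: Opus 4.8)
The plan is to reduce the statement to a central-limit-theorem-plus-coupling argument for a sum of i.i.d.\ scalars. First I would invoke the expansion in Assumption \ref{Assump:Frequentist} to write
\[
e_j^\intercal\sqrt n(\hat\beta_n-\beta_0)
= \frac{1}{\sqrt n}\sum_{i=1}^n \zeta_{i} + \sqrt n\, e_j^\intercal\Delta_n,
\qquad \zeta_i := e_j^\intercal\Theta_0 X_i\varepsilon_i ,
\]
and dispose of the remainder immediately through $|\sqrt n\, e_j^\intercal\Delta_n|\le \sqrt n\,\|\Delta_n\|_\infty = o_{P_0}(1)$, again by Assumption \ref{Assump:Frequentist}. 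It then suffices to produce a Gaussian coupling for $T_n := n^{-1/2}\sum_{i=1}^n\zeta_i$.

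By Assumption \ref{Assump:Moments}(i) the $\zeta_i$ are i.i.d.; the model's moment restriction $\mathbb E_0[X_i\varepsilon_i]=0$ gives $\mathbb E_0[\zeta_i]=0$, the variance equals $\mathbb E_0[\zeta_i^2]=\sigma_{0,j}^2$ by the definition of $\sigma_{0,j}^2$, and Assumption \ref{Assump:Moments}(iv) supplies $\mathbb E_0|\zeta_i|^3<\infty$. Plain Lindeberg--L\'evy already yields $T_n\rightsquigarrow N(0,\sigma_{0,j}^2)$, but I would use the third-moment bound to get a quantitative version: the Berry--Esseen theorem gives $\sup_x|F_n(x)-H_j(x)|\le \delta_n$ with $\delta_n := C\,\mathbb E_0|\zeta_i|^3/(\sigma_{0,j}^3\sqrt n)\to 0$, where $F_n$ is the law of $T_n$ and $H_j$ the $N(0,\sigma_{0,j}^2)$ distribution function. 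The explicit rate $\delta_n$ is what feeds the sequence $\eta_n\downarrow 0$ used downstream in Corollary \ref{Cor:BvM1}.

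The key step is a quantile (Skorokhod-type) coupling with the Gaussian limit. I would take a uniform randomizer $V$ with $T_n=F_n^{-1}(V)$ (randomizing at atoms of $F_n$ if necessary) and set $\mathbb G_j := H_j^{-1}(V)$, so that $\mathbb G_j\sim N(0,\sigma_{0,j}^2)$ exactly while remaining a function of $Z^{(n)}$. Uniform closeness of the distribution functions transfers to their generalized inverses: the bound $\sup_x|F_n-H_j|\le\delta_n$ forces $H_j^{-1}(v-\delta_n)\le F_n^{-1}(v)\le H_j^{-1}(v+\delta_n)$, hence $|F_n^{-1}(v)-H_j^{-1}(v)|\le H_j^{-1}(v+\delta_n)-H_j^{-1}(v-\delta_n)$. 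For $v$ in a compact set $[\delta,1-\delta]$ the Gaussian density is bounded away from zero on the relevant range, so $H_j^{-1}$ is Lipschitz there and this gap is at most $C_\delta\,\delta_n$. Since $P_0(V\notin[\delta,1-\delta])=2\delta$, I obtain $P_0(|T_n-\mathbb G_j|>C_\delta\,\delta_n)\le 2\delta$; letting $\delta\downarrow 0$ slowly along $n$ gives $T_n-\mathbb G_j=o_{P_0}(1)$, which combined with the remainder bound completes the argument.

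The main obstacle is exactly the tail behavior in this coupling step: the Gaussian quantile function $H_j^{-1}$ is not globally Lipschitz, blowing up near the endpoints $0$ and $1$, so the inverse-to-inverse comparison cannot be made uniform over all of $(0,1)$. Confining $V$ to the high-probability region $[\delta,1-\delta]$ and sending $\delta\to0$ circumvents this, at the price of establishing only an $o_{P_0}(1)$ (rather than a sharp almost-sure) coupling—precisely the strength the quantile comparison in Corollary \ref{Cor:BvM1} requires.
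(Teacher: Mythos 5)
Your proposal is correct, and the reduction step is identical to the paper's: both invoke Assumption \ref{Assump:Frequentist} to write $e_j^\intercal\sqrt n(\hat\beta_n-\beta_0)$ as $n^{-1/2}\sum_i e_j^\intercal\Theta_0X_i\varepsilon_i$ plus a remainder controlled by $\sqrt n\|\Delta_n\|_\infty=o_{P_0}(1)$. Where you diverge is the coupling device: the paper applies Yurinskii's coupling (Theorem 10.10 of Pollard) directly to the i.i.d.\ sum, choosing $\delta\asymp(\mathbb E_0|e_j^\intercal\Theta_0X_i\varepsilon_i|^3 n^{-1/2})^{1/3}$ to obtain $T_n-\mathbb G_j=O_{P_0}(n^{-1/6})$, whereas you go through Berry--Esseen for the Kolmogorov distance and then convert to a coupling via the quantile transform, handling the non-Lipschitz tails of $H_j^{-1}$ by restricting $V$ to $[\delta,1-\delta]$ and sending $\delta\downarrow0$ slowly. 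Both yield the $o_{P_0}(1)$ coupling the lemma asserts and an explicit $\eta_n\downarrow0$ for use in Corollary \ref{Cor:BvM1}. What Yurinskii buys is (i) a clean rate without dividing by $\sigma_{0,j}$ (your Berry--Esseen bound degenerates as $\sigma_{0,j}\to0$; you should note that the case $\sigma_{0,j}=0$ is trivial and that, since $p$ grows with $n$, one implicitly treats $\mathbb E_0|e_j^\intercal\Theta_0X_i\varepsilon_i|^3/\sigma_{0,j}^3$ as not exploding --- a caveat the paper's proof shares in milder form), and (ii) a method that extends to the vector-valued coupling needed for the simultaneous result (Lemma \ref{Lem:BootApproxSimultaneous} and Theorem \ref{Thm:BvM2}), where the quantile-transform argument has no direct analogue. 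Your route is more elementary for the scalar case and entirely adequate here; one small presentational point is that if $F_n$ has atoms the randomizer $V$ lives on an enlarged probability space, so $\mathbb G_j$ need not be a function of $Z^{(n)}$ alone --- which is harmless, since the lemma and its use in Corollary \ref{Cor:BvM1} require only existence of such a $\mathbb G_j$, not measurability with respect to the data.
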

\begin{proof}
By Yurinskii's coupling (see, e.g., Theorem 10.10 in \citet{Pollard2002User}), we may obtain that, for each $\delta>0$ and some $\mathbb G_{j}\sim N(0,\sigma_{0,j}^2)$,
\begin{align}\label{Lem:StrongApprox aux1}
P_0\left(\left|\frac{1}{\sqrt n} \sum_{i=1}^n e_j^\intercal \Theta_0 X_i\varepsilon_i - \mathbb G_{j}\right|>3\delta\right)\lesssim b_n(1+|\log b_n|),
\end{align}
where $b_n\:= \mathbb E_0[|e_j^\intercal\Theta_0 X_i\varepsilon_i|^3] n^{-1/2}/\delta^3$. Setting $\delta=M(\mathbb E_0[|e_j^\intercal\Theta_0 X_i \varepsilon_i|^3] n^{-1/2})^{1/3}$ in \eqref{Lem:StrongApprox aux1} for any fixed constant $M>0$ yields
\begin{multline}\label{Lem:StrongApprox aux2}
P_0\left(\left|\frac{1}{\sqrt n} \sum_{i=1}^n e_j^\intercal \Theta_0 X_i\varepsilon_i - \mathbb G_{j}\right|>3M\left(\frac{\mathbb E_0[|e_j^\intercal\Theta_0 X_i\varepsilon_i|^3]}{\sqrt n}\right)^{1/3}\right) \lesssim \frac{1}{M^3}(1+3|\log M|).
\end{multline}
By Assumption \ref{Assump:Moments}(iv), it follows from \eqref{Lem:StrongApprox aux2} that
\begin{align}\label{Lem:StrongApprox aux3}
\frac{1}{\sqrt n} \sum_{i=1}^n e_j^\intercal \Theta_0 X_i\varepsilon_i - \mathbb G_{j}=O_{P_0}(n^{-1/6}).
\end{align}
The conclusion of the lemma then follows from \eqref{Lem:StrongApprox aux3} and Assumption \ref{Assump:Frequentist}.
\end{proof}

\begin{lem}\label{Lem:BootApproxSimultaneous}
Let Assumptions \ref{Assump:Moments}(i) and \ref{Assump:MomentsSimutaneous} hold and $S_{n}$ be given in \eqref{Eqn: Decomposition}. If $\nu_{J}^{6}J^{3}\log^{9}(1+J)/n\to 0$, there is some $\tilde{\mathbb G}_{J}\sim N(0,\Sigma_{0,J}^2)$ independent of $Z^{(n)}$ such that
\[\|E_{J}^{\intercal} \sqrt{n} S_{n} - \tilde{\mathbb G}_{J}\|_{\infty} =  o_{P_{Z,W}}(1),\]
and, for $\mathcal{L}_{\Pi_{W}}(E_{J}^{\intercal} \sqrt{n} S_{n} \mid Z^{(n)})$ the conditional law of $E_{J}^{\intercal} \sqrt{n} S_{n}$ given $Z^{(n)}$,
\[d_{BL}\left(\mathcal{L}_{\Pi_{W}}(E_{J}^{\intercal} \sqrt{n} S_{n} \mid Z^{(n)}), N(0,\Sigma^2_{0,J}) \right)\overset{P_{0}}{\to} 0.\]

\end{lem}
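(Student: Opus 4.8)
The plan is to treat $E_J^\intercal \sqrt{n} S_n$ as a growing-dimensional exchangeable bootstrap sum and invoke the high-dimensional strong approximation of Theorem \ref{Thm:strong approx}, paralleling the scalar argument in Lemma \ref{Lem:BootApprox} but now tracking the dependence on the subset size $J$. First I would rewrite, using $S_n = \Theta_0[\sum_{i=1}^n (W_{ni} - 1/n) X_i \varepsilon_i]$ from \eqref{Eqn: Decomposition},
\[E_J^\intercal \sqrt{n} S_n = \frac{1}{\sqrt n}\sum_{i=1}^n n W_{ni}\left(V_i - \frac{1}{n}\sum_{k=1}^n V_k\right), \qquad V_i := E_J^\intercal \Theta_0 X_i \varepsilon_i \in \mathbf{R}^J,\]
so that the object is a centered sum of the i.i.d. $J$-vectors $\{V_i\}$ reweighted by the exchangeable weights $\{nW_{ni}\}$, with population covariance $\Sigma^2_{0,J} = \mathbb{E}_0[V_i V_i^\intercal]$. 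The centering uses $\sum_{i=1}^n n(W_{ni}-1/n)=0$, exactly as in \eqref{Eqn:BootApprox aux1}.

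Second, I would verify the hypotheses of Theorem \ref{Thm:strong approx} for the vector summands $V_i$ and the Beta-distributed weights. The weight-side conditions (Assumption \ref{Ass:Boot}) do not depend on the dimension and are already checked in the proof of Lemma \ref{Lem:BootApprox}: the Bernstein-type concentration \eqref{Eqn:BootApprox aux2}, the normalized-variance expansion \eqref{Eqn:BootApprox aux3}, and the uniformly bounded third moment \eqref{Eqn:BootApprox aux4} carry over verbatim since $W_{ni} \sim \mathrm{Beta}(1, n-1)$ regardless of $J$. The summand-side conditions (Assumption \ref{Ass:IID}) are supplied by Assumption \ref{Assump:MomentsSimutaneous}: part (i) gives bounded eigenvalues of $\Sigma^2_{0,J}$, while part (ii) controls the key moment statistic through $\{\mathbb{E}_0[\|V_i\|^3_\infty]\}^{1/3} = O(\nu_J)$, which is precisely the quantity governing the coupling rate in the high-dimensional regime.

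Third, feeding these into Theorem \ref{Thm:strong approx} produces a Gaussian vector $\tilde{\mathbb G}_J \sim N(0, \Sigma^2_{0,J})$, independent of $Z^{(n)}$, with an explicit coupling bound whose leading term is of the order dictated by $\nu_J$, $J$, and $\log(1+J)$ relative to $n$. The rate assumption $\nu_J^6 J^3 \log^9(1+J)/n \to 0$ is exactly what renders this bound $o_{P_{Z,W}}(1)$ in the $\|\cdot\|_\infty$ metric over the $J$ coordinates, yielding the first claim. The second claim then follows by the same truncation argument as in \eqref{Eqn:BootApprox aux6}--\eqref{Eqn:BootApprox aux7}: since $\|f\|_{BL} \le 1$ implies $\sup_x |f(x)| \le 1$ and $|f(x) - f(y)| \le \|x - y\|_\infty$, for any fixed $\epsilon > 0$ one bounds
\[d_{BL}\left(\mathcal{L}_{\Pi_W}(E_J^\intercal \sqrt{n} S_n \mid Z^{(n)}), N(0,\Sigma^2_{0,J})\right) \le \epsilon + 2\, P_{Z,W}\big(\|E_J^\intercal \sqrt{n} S_n - \tilde{\mathbb G}_J\|_\infty > \epsilon \mid Z^{(n)}\big),\]
using independence of $\tilde{\mathbb G}_J$ from $Z^{(n)}$; Markov's inequality and Fubini's theorem turn the unconditional coupling into convergence to zero in $P_0$-probability of the conditional probability, and letting $\epsilon \downarrow 0$ completes the proof.

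The hard part will be the correct and sharp invocation of Theorem \ref{Thm:strong approx} in the growing-$J$ regime, ensuring the coupling error genuinely collapses to $o_{P_{Z,W}}(1)$ in the sup-norm rather than merely coordinatewise. Unlike the scalar Lemma \ref{Lem:BootApprox}, where a one-dimensional bound of order $n^{-1/12}$ suffices, here the anti-concentration and Gaussian-comparison machinery must hold uniformly over $J$ coordinates, and the polynomial-in-$J$ and polylogarithmic factors must be matched precisely against the moment growth $\nu_J$. Verifying that Assumption \ref{Assump:MomentsSimutaneous}(ii) delivers exactly the moment input required by Theorem \ref{Thm:strong approx}, and that the stated exponents $(\nu_J^6, J^3, \log^9(1+J))$ are indeed the ones emerging from that theorem's bound, is the delicate bookkeeping step on which the whole argument rests.
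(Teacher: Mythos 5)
Your proposal is correct and follows essentially the same route as the paper's proof: the paper likewise reduces $E_J^\intercal\sqrt{n}S_n$ to an exchangeable-bootstrap sum, reuses the weight-side verifications from the scalar case, feeds Assumption \ref{Assump:MomentsSimutaneous} into Theorem \ref{Thm:strong approx} to obtain a coupling of order $O_{P_{Z,W}}\bigl(\bigl(\nu_J^2 J\log^3(1+J)/n^{1/3}\bigr)^{1/4}\bigr)$ (which the stated rate condition makes $o(1)$), and concludes via the same truncation argument as in \eqref{Eqn:BootApprox aux6}--\eqref{Eqn:BootApprox aux7}.
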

\begin{proof} Following the arguments in \eqref{Eqn:BootApprox aux1}-\eqref{Eqn:BootApprox aux5}, there exists some $\tilde{\mathbb G}_{J}\sim N(0,\Sigma_{0,J}^2)$ that is independent of $Z^{(n)}$ and
\begin{align}\label{Lem:BootApproxSimultaneous:Eqn:1}
\|E_{J}^{\intercal} \sqrt{n} S_{n}-\tilde{\mathbb G}_{J}\|_{\infty} =  O_{P_{Z,W}}\left(\left(\frac{\nu_{J}^{2}J\log^{3}(1+J)}{n^{1/3}}\right)^{1/4}\right).
\end{align}
The second result of the lemma follows by the arguments in \eqref{Eqn:BootApprox aux6} and \eqref{Eqn:BootApprox aux7}. 
\end{proof}

\subsection{Technical Tools}
We record a maximal inequality for sub-Weibull random variables. Sub-Weibull random variables form a broad class of light-tailed random variables that generalize both sub-Gaussian and sub-exponential distributions. In particular, the sub-Weibull family nests these two as special cases: sub-Gaussian random variables correspond to the sub-Weibull case of order $\theta = 2$, while sub-exponential random variables correspond to $\theta = 1$. 

\begin{defn}[Sub-Weibull Random Variable]
A real-valued random variable $Z$ is said to be \emph{sub-Weibull of order} $\theta>0$, denoted by $Z\sim\mathrm{subW}(\theta)$, if 
\[
  \|Z\|_{\psi_\theta}
  := \inf\left\{\eta>0:\; \mathbb{E}\!\left[\exp\!\left({|X|^{\theta}}/{\eta^{\theta}}\right)\right]\le 2\right\}
  <\infty.
\]
\end{defn}

An important property of sub-Weibull random variables is that their products remain sub-Weibull. Specifically, let $Z_1,\ldots,Z_L$ be (possibly dependent) random variables such that 
$\|Z_\ell\|_{\psi_{\theta_\ell}}<\infty$ for each $\ell=1,\ldots,L$. 
Then the product $\prod_{\ell=1}^{L} Z_\ell$ is sub-Weibull with order $\bar{\theta}=(\sum_{\ell=1}^L \theta_\ell^{-1})^{-1}$ and satisfies $\|\prod_{\ell=1}^{L} Z_\ell\|_{\psi_{\bar{\theta}}}\le \prod_{\ell=1}^{L} \|Z_\ell\|_{\psi_{\theta_\ell}}$. The following result is taken from Theorem 3.4 of \citet{KuchibhotlaChakrabortty_MovingBeyond_2022}.

\begin{thm}\label{Lem:SubWeibull}
Suppose $Z_1,\ldots,Z_n$ are independent mean zero random vectors in $\mathbf{R}^k$, such that for some $\theta>0$ and $K_{n,k}>0$, $\max_{1\leq j\leq k}\max_{1\leq i\leq n}\| e_{j}^{\intercal}Z_{i}\|_{\psi_{\theta}}\leq K_{n,k}$.
Define $\Gamma_{n,k}:=\max_{1\leq j\leq k}\sum_{i=1}^n\mathbb{E}[(e_{j}^{\intercal}Z_i)^2]/n$. Then for any $t\geq 0$, with probability $1-3\exp(-t)$,  
\[\left\|\frac{1}{n}\sum_{i=1}^n Z_i \right\|_{\infty}\leq 7\sqrt{\frac{\Gamma_{n,k}(t+\log k)}{n}}+\frac{C_{\theta}K_{n,k}(\log(2n))^{1/\theta}(t+\log k)^{1/\theta^*}}{n},\]
where $\theta^\ast:=\min\{\theta,1\}$ and $C_{\theta}>0$ is some constant depending only on $\theta$.
\end{thm}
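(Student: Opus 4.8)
The plan is to reduce the $k$-dimensional maximal bound to a single scalar concentration inequality and then restore the dependence on $k$ by a union bound. Writing $S_j := n^{-1}\sum_{i=1}^n e_j^{\intercal} Z_i$ for each coordinate, I would first establish that, for a single fixed coordinate and every $s\ge 0$,
\[
P\Big(|S_j| > 7\sqrt{\tfrac{\Gamma_{n,k}\, s}{n}} + \tfrac{C_\theta K_{n,k}(\log 2n)^{1/\theta} s^{1/\theta^*}}{n}\Big) \le 3 e^{-s}.
\]
Once this scalar statement is in hand, applying it with $s = t+\log k$ and taking a union bound over $j=1,\ldots,k$ converts $3e^{-(t+\log k)}$ into $3e^{-t}$ (using $k\cdot e^{-\log k}=1$) and replaces $s$ by $t+\log k$ in both terms, which is exactly the claimed inequality. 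Thus the whole content lies in the scalar bound.

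For the scalar bound I would use a truncation argument. Fix a coordinate and abbreviate $W_i := e_j^{\intercal} Z_i$, so the $W_i$ are independent, mean zero, with $\|W_i\|_{\psi_\theta}\le K_{n,k}$ and $n^{-1}\sum_i \mathbb{E}[W_i^2]\le \Gamma_{n,k}$. Choose a truncation level $M \asymp K_{n,k}(\log 2n)^{1/\theta}$ and split $W_i = \bar W_i + R_i$ with $\bar W_i := W_i\, 1\{|W_i|\le M\}$ and $R_i := W_i\, 1\{|W_i|>M\}$, recentering each part. For the bounded part $\{\bar W_i - \mathbb{E}\bar W_i\}$, supported in $[-2M,2M]$ with variance at most $\mathbb{E}[W_i^2]$, I would invoke Bernstein's inequality; this yields the Gaussian term $7\sqrt{\Gamma_{n,k}s/n}$ together with a linear-in-$s$ term of order $Ms/n = K_{n,k}(\log 2n)^{1/\theta}s/n$, which already matches the second term when $\theta\ge 1$ (where $\theta^*=1$). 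The recentering bias $n^{-1}\sum_i \mathbb{E}[W_i\,1\{|W_i|>M\}]$ decays faster than the retained terms by the sub-Weibull moment bound $\mathbb{E}[|W_i|\,1\{|W_i|>M\}]\lesssim K_{n,k}\exp(-(M/K_{n,k})^\theta/2)$ and is absorbed into the constant.

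The residual part $n^{-1}\sum_i R_i$ carries the heavy-tail behavior and is where the exponent $1/\theta^*$ originates. With $M\asymp K_{n,k}(\log 2n)^{1/\theta}$ one has $P(|W_i|>M)\lesssim 1/n$, so only $O(1)$ indices contribute in expectation. When $\theta\ge 1$ the moment generating function of $W_i$ exists near the origin and the residual is controlled by the same Bernstein/Chernoff machinery, recovering the linear deviation ($\theta^*=1$) already obtained above. When $\theta<1$ the moment generating function need not exist, so I would instead use the polynomial moment growth $\|W_i\|_{L^p}\le K_{n,k}\,p^{1/\theta}$ (equivalent to the Orlicz bound) together with a Rosenthal-type inequality for sums of independent variables to bound $\|\sum_i R_i\|_{L^p}$, and then convert moments to tails via Markov's inequality optimized over $p$; this produces the sub-Weibull tail $\exp\!\big(-(u/(C_\theta K_{n,k}(\log 2n)^{1/\theta}))^{\theta}\big)$, i.e.\ a deviation term of order $K_{n,k}(\log 2n)^{1/\theta}s^{1/\theta}$ with $\theta^*=\theta$. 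Summing the three contributions and collecting constants depending only on $\theta$ delivers the scalar bound.

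The main obstacle is precisely the regime $\theta<1$, where the absence of a finite moment generating function rules out a pure Chernoff argument and forces the moment-method route for the residual. The delicate bookkeeping is to obtain, simultaneously and with the correct exponents, the sharp Gaussian constant against the variance proxy $\Gamma_{n,k}$, the coefficient $(\log 2n)^{1/\theta}$ arising from the $n$-dependent truncation level (which keeps the number of exceedances bounded in expectation), and the separate deviation exponent $1/\theta^*$, all while ensuring $C_\theta$ depends on $\theta$ alone. Balancing $M$ so that the linear Bernstein term and the moment-based residual term are of comparable order, and checking that the resulting estimates are uniform in $n$ and $k$, is the crux of the argument.
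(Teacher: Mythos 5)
The paper offers no proof of this statement to compare against: it is imported verbatim as Theorem~3.4 of Kuchibhotla and Chakrabortty (2022) and simply cited. Judged on its own terms, your architecture is the standard one and is sound in outline: the reduction to a single coordinate via $s=t+\log k$ and a union bound is exactly right (and accounts for the factor $3$ and the $t+\log k$ in both terms), and the Bernstein treatment of the part truncated at $M\asymp K_{n,k}(\log 2n)^{1/\theta}$ correctly produces the Gaussian term with variance proxy $n\Gamma_{n,k}$ plus a linear term $Ms/n$, which suffices when $\theta\ge 1$.

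The gap is in the residual for $\theta<1$, precisely where you place ``the crux,'' and the mechanism you propose does not close it. The Rosenthal inequality for independent mean-zero summands carries a factor $p$ on the individual-moment term, $\bigl\|\sum_i Y_i\bigr\|_{L^p}\lesssim \sqrt{p}\,\bigl(\sum_i\mathbb{E}Y_i^2\bigr)^{1/2}+p\,\bigl(\sum_i\mathbb{E}|Y_i|^p\bigr)^{1/p}$. Combined with $\|W_i\|_{L^p}\le K_{n,k}\,p^{1/\theta}$ (and with the truncation removing the $n^{1/p}$ factor, as you note), this yields $\bigl\|\sum_i(R_i-\mathbb{E}R_i)\bigr\|_{L^p}\lesssim K_{n,k}\,p^{1+1/\theta}$, and Markov optimized over $p$ then gives a tail of order $\exp\bigl(-c(u/K_{n,k})^{\theta/(1+\theta)}\bigr)$, i.e.\ a deviation term of order $K_{n,k}\,s^{1+1/\theta}$ rather than the asserted $K_{n,k}(\log 2n)^{1/\theta}s^{1/\theta}$. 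These are not comparable: for $s\gtrsim(\log 2n)^{1/\theta}$ your bound exceeds the claimed one, so the stated inequality is not recovered for large $t$. Obtaining the exponent $1/\theta$ against the scale $\|\max_i|W_i|\|_{\psi_\theta}\lesssim K_{n,k}(\log 2n)^{1/\theta}$ requires a genuinely sharper input --- for instance Adamczak's extension of the Hoffmann--J{\o}rgensen/Talagrand inequalities, which for $\theta\in(0,1]$ gives $P\bigl(|\sum_i Y_i|\ge 2\mathbb{E}|\sum_i Y_i|+u\bigr)\le\exp(-u^2/(3V))+3\exp\bigl(-(u/(C\|\max_i|Y_i|\|_{\psi_\theta}))^{\theta}\bigr)$, or the Lata{\l}a-type sharp moment bounds underlying Kuchibhotla and Chakrabortty's scalar Theorem~3.1. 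Substituting such a result for the Rosenthal step (and tracking the explicit constant $7$ in the Gaussian term, which your sketch leaves open) would complete the argument; as written, the $\theta<1$ case is asserted rather than proved.
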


We gather a strong approximation result for the exchangeable bootstrap. The following result is a version of Lemma A.5 in \citet{FangSantosShaikhTorgovitsky2023LP}.

\begin{ass}\label{Ass:IID}
(i) $\{Z_i\}_{i=1}^n$ are i.i.d. random vectors in $\mathbf R^k$ with common law $P$; (ii) the eigenvalues of $\Sigma:=\mathrm{Var}(Z_i)$ are bounded uniformly in $n$ and $k$ (iii) $\{\mathbb E[\|Z_i\|_\infty^3]\}^{1/3}<M_1$ uniformly in $n$ and $k$ for $M_1\ge 1$.
\end{ass}

\begin{ass}\label{Ass:Boot}
(i) $\{\zeta_{ni}\}_{i=1}^n$ are exchangeable and independent of $\{Z_i\}_{i=1}^n$; (ii) for some $a,b>0$, $P(|\zeta_{in}-\mathbb E[\zeta_{in}]|>t)\le 2\exp\{-{t^2}/{(b+at)}\}$ for all $t\ge 0$ and $n$; (iii) $|\sum_{i=1}^{n}(\zeta_{in}-\bar \zeta_n)^2/n-1|=O_P(n^{-1/2})$ where $\bar{\zeta}_n:=\sum_{i=1}^{n}\zeta_{ni}/n$, and $\sup_n \mathbb E[|\zeta_{in}|^3]<\infty$; (iv) for some $q\in(1,\infty]$, $\{\mathbb E[\|Z_i\|_\infty^{2q}]\}^{1/q}<M_2<\infty$.
\end{ass} 

\begin{thm}\label{Thm:strong approx}
Suppose Assumptions \ref{Ass:IID} and \ref{Ass:Boot} hold. Then there exists $\mathbb G\sim N(0,\Sigma)$ that is independent of $\{Z_i\}_{i=1}^n$ such that
\begin{align*}
\left\|\frac{1}{\sqrt n}\sum_{i=1}^n\zeta_{ni}(Z_i-\bar Z_n)-\mathbb G\right\|_\infty = O_P(b_n)
\end{align*}
provided $b_n: = \frac{M_1\sqrt{k\log(1+n)}}{n^{1/4}}+(\frac{M_1k\log^{{5}/{2}}(1+k)}{\sqrt n})^{{1}/{3}}+(\frac{M_2k\log^3(1+k)n^{{1}/{q}}}{n})^{{1}/{4}}=o(1)$.
\end{thm}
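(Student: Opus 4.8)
The plan is to prove the statement by a two-stage coupling argument, conditioning on the data to isolate the randomization in the weights. First I would remove the dependence on both the population mean and the mean of the weights: because $\sum_{i=1}^n(Z_i-\bar Z_n)=0$, the statistic is invariant under replacing $\zeta_{ni}$ by $\zeta_{ni}-\bar\zeta_n$, so that
\[\frac{1}{\sqrt n}\sum_{i=1}^n \zeta_{ni}(Z_i-\bar Z_n)=\frac{1}{\sqrt n}\sum_{i=1}^n(\zeta_{ni}-\bar\zeta_n)(Z_i-\bar Z_n).\]
Conditionally on $\{Z_i\}_{i=1}^n$, the right-hand side is a linear functional of the exchangeable weights with zero conditional mean (exchangeability makes $\mathbb E[\zeta_{ni}]$ independent of $i$, and the centered $Z_i-\bar Z_n$ sum to zero). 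A direct computation using $\sum_i(\zeta_{ni}-\bar\zeta_n)=0$ and $\sum_i(Z_i-\bar Z_n)=0$ shows the conditional covariance equals $s_n^2\,\hat S_n$ up to a factor $1+O(n^{-1})$, where $s_n^2:=n^{-1}\sum_i(\zeta_{ni}-\bar\zeta_n)^2$ and $\hat S_n:=n^{-1}\sum_i(Z_i-\bar Z_n)(Z_i-\bar Z_n)^\intercal$. Assumption \ref{Ass:Boot}(iii) gives $s_n^2=1+O_P(n^{-1/2})$, so the conditional covariance is asymptotically $\hat S_n$.

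The heart of the argument is a conditional high-dimensional Gaussian coupling. Given the data, I would invoke the strong approximation for exchangeably weighted sums---the content of Lemma A.5 of \citet{FangSantosShaikhTorgovitsky2023LP}---to construct a Gaussian vector $\mathbb G_n\sim N(0,s_n^2\hat S_n)$ on the same probability space with
\[\left\|\frac{1}{\sqrt n}\sum_{i=1}^n(\zeta_{ni}-\bar\zeta_n)(Z_i-\bar Z_n)-\mathbb G_n\right\|_\infty=O_{P}(b_n)\]
in conditional probability. The three ingredients tracked by the rate $b_n$ are supplied by the hypotheses: the sub-exponential tail of the weights (Assumption \ref{Ass:Boot}(ii)) yields the exponential deviation inequalities, the third-moment bound on $\|Z_i\|_\infty$ (Assumption \ref{Ass:IID}(iii)) produces the Lyapunov/Berry--Esseen terms $M_1\sqrt{k\log(1+n)}/n^{1/4}$ and $(M_1 k\log^{5/2}(1+k)/\sqrt n)^{1/3}$, and the higher $2q$-moment bound (Assumption \ref{Ass:Boot}(iv)) controls the truncation remainder, generating the term $(M_2k\log^3(1+k)n^{1/q}/n)^{1/4}$. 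The sup-norm over the $k$ coordinates is what forces the logarithmic-in-$k$ factors throughout.

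It remains to replace the data-dependent covariance by the target $\Sigma$. I would couple $N(0,s_n^2\hat S_n)$ to a fixed $\mathbb G\sim N(0,\Sigma)$ by a Gaussian-to-Gaussian coupling whose sup-norm cost is controlled by $\|s_n^2\hat S_n-\Sigma\|_{\max}$; bounded eigenvalues of $\Sigma$ (Assumption \ref{Ass:IID}(ii)) guarantee nondegeneracy, while $s_n^2=1+O_P(n^{-1/2})$ together with the concentration $\|\hat S_n-\Sigma\|_{\max}=O_P(\sqrt{\log k/n})$ makes this term dominated by $b_n$. Taking $\mathbb G$ with the fixed population covariance lets it be chosen independent of $\{Z_i\}_{i=1}^n$, and the triangle inequality splices the two couplings into the claimed bound. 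The main obstacle is precisely the conditional coupling for the \emph{exchangeable}, hence dependent, weights: unlike the independent-multiplier setting handled by Yurinskii's coupling in Lemma \ref{Lem:StrongApprox}, one cannot appeal here to coupling results for sums of independent vectors directly. The dependence must be disentangled---using the explicit representation of the normalized weights as ratios of i.i.d.\ positive variables via a Poissonization/conditioning device, or by importing the exchangeable-bootstrap coupling of \citet{FangSantosShaikhTorgovitsky2023LP}---and it is the interaction of this step with the growth of $k$ and the moment exponent $q$ that pins down the three-term rate $b_n$.
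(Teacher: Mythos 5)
The paper does not prove this theorem at all: it is recorded in the ``Technical Tools'' subsection as an imported result, stated verbatim as ``a version of Lemma A.5 in \citet{FangSantosShaikhTorgovitsky2023LP},'' and the three-term rate $b_n$, the population covariance $\Sigma$, and the independence of $\mathbb G$ from $\{Z_i\}_{i=1}^n$ are all taken directly from that lemma. Your proposal is therefore answering a different question from the one the paper answers; judged on its own terms it has two problems.

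First, it is circular. The step you identify as ``the heart of the argument'' is an appeal to Lemma A.5 of \citet{FangSantosShaikhTorgovitsky2023LP} itself. Since the theorem you are asked to prove \emph{is} (a version of) that lemma, you cannot invoke it as an ingredient; either you cite it once for the whole statement (which is what the paper does) or you must actually carry out the conditional coupling for exchangeably weighted sums --- truncation of $\|Z_i\|_\infty$ using the $2q$-moment bound, a Yurinskii-type block coupling for the weighted truncated sum conditional on the data, and the tracking of the three error terms. Your sketch correctly attributes each term of $b_n$ to the right assumption, but it does not supply any of this machinery.

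Second, the covariance-replacement step you append is both unnecessary and not justified as written. The cited lemma already delivers a Gaussian with the \emph{population} covariance $\Sigma$ that is independent of the data, so there is no need to first couple to $N(0,s_n^2\hat S_n)$ and then compare. More importantly, if one did take that route, the claim that a sup-norm coupling cost is ``controlled by $\|s_n^2\hat S_n-\Sigma\|_{\max}$'' with $\|\hat S_n-\Sigma\|_{\max}=O_P(\sqrt{\log k/n})$ does not follow: Gaussian comparison inequalities of that max-norm form (Chernozhukov--Chetverikov--Kato type) bound differences of distribution functions or of expectations of smooth functionals, not coupling distances. An actual almost-sure coupling of $N(0,s_n^2\hat S_n)$ to $N(0,\Sigma)$, e.g.\ via $\mathbb G_n=(s_n^2\hat S_n)^{1/2}\Sigma^{-1/2}\mathbb G$, incurs an error of order $\|(s_n^2\hat S_n)^{1/2}\Sigma^{-1/2}-I_k\|$ in operator norm acting on $\mathbb G$, which requires operator-norm (not elementwise) control of $\hat S_n-\Sigma$ and degrades as $k$ grows. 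This step would need a genuine argument, and none is given.
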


\section{Additional Theoretical Results}\label{Sec: AdditionalTheory}
\renewcommand{\theequation}{C.\arabic{equation}}
\setcounter{equation}{0}

In this appendix, we provide alternative verifications of Assumptions \ref{Assump:Beta} and \ref{Assump:Precision}. The spike-and-slab prior distribution can be viewed as a finite mixture model. Beyond finite mixtures, one may also consider continuous mixtures. One popular choice is the horseshoe prior of \citet{CarvalhoPolsonScott2010Horseshoe}, which is a continuous scale mixture of Gaussian distributions. The marginal density of such a prior (after integrating out the random scale) exhibits both a pole at zero and heavy tails; see \cite{Bhadraetal_Lassohorsehoe_2019}. It is often studied under the framework of global-local shrinkage priors, as it achieves adaptive sparsity through a global shrinkage parameter and selective signal identification through local shrinkage parameters. For the regression coefficients $\beta = (\beta_1,\ldots,\beta_p)^{\intercal}$ in \eqref{Eqn:Model}, the prior is specified as follows: independently for all $j$,
\begin{align}\label{Horseshoe1}
	\beta_{j}|\kappa_j,\tau \sim N(0,\kappa_j^2\tau^2),\quad \kappa_j\sim \mathrm{C}^+(0,1), 
\end{align}
where $\mathrm C^+(0,1)$ denotes the standard half Cauchy distribution on $(0,\infty)$. Here, $\tau$ acts as a global shrinkage parameter governing overall sparsity, while $\kappa_j$ serves as a local shrinkage parameter that allows selective shrinkage of individual coefficients. In general, global-local shrikage priors shrink small coefficients strongly toward zero, while their heavy tails prevent over-shrinkage of large coefficients. Unlike the spike-and-slab prior, the horseshoe prior typically does not produce exact sparsity in the posterior. For variable selection, additional post-processing methods, such as thresholding or projection techniques, are often employed; see \citet{CarvalhoPolsonScott2010Horseshoe}.


\citet{SongLiang_NearlyOptimal_2023} consider a general class of shrinkage priors that include the horseshoe prior as a special case. Below we summarize their contraction rate result specialized to the following horseshoe prior: for $\beta = (\beta_1,\ldots,\beta_p)^{\intercal}$, 
\begin{align}\label{Horseshoe2}
	\beta_{j}|\kappa_j,\tau\sim N(0,\kappa_j^2\tau^2 \sigma_0^2), \quad \kappa_j^{2}\sim\pi(\kappa_j^{2}), \quad \text{independently for all }j, 
\end{align}
where $\sigma_0^2>0$ is the variance of the error term in \eqref{Eqn:Model} that is assumed to be centered Gaussian, $\tau>0$ is a global shrinkage parameter, and $\kappa_j$'s are local shrinkage parameters. Note that $\kappa_j^2\tau^2$ and $\tau^2$ correspond to $\lambda_j^2$ and $s_n$ respectively in \citet{SongLiang_NearlyOptimal_2023}. The density of $\kappa_j^2\tau^2$ is $x\mapsto \pi({x}/{\tau^2})/{\tau^2}$, 
which is precisely the structure considered in \citet[p.419]{SongLiang_NearlyOptimal_2023}.

\begin{ass}\label{Assump:Horseshoe}
(i) The support of $X$ is contained in a fixed bounded set; (ii) $n=O(p)$; (iii) $\pi(x)=Cx^{-\varkappa}L(x)$ for some constant $C>0$, some constant $\varkappa>1$, and some function $L$ satisfying $|L(x)-1|=O(x^{-\varsigma})$ as $|x|\to\infty$ with $\varsigma>0$; (iv) $\tau$ satisfies $\tau^2=O(\sqrt{s_{\beta_0}\log p/n}/p^{(\varpi+2\varkappa-1)/(2\varkappa-2)})$ for some $\varpi>0$.
\end{ass}

Assumptions \ref{Assump:Horseshoe}(i) and (ii) are Conditions A1(1) and (2) in \citet{SongLiang_NearlyOptimal_2023}. By the arguments below \eqref{MixofCompNumberstar}, with probability approaching one, the smallest eigenvalue of $\sum_{i=1}^{n}X_{i,S}X_{i,S}^{\intercal}/n$ is bounded below by a fixed constant whenever $|S| = o(n/\log p)$ under Assumptions \ref{Assump:Moments}(i) and  \ref{Assump:Error}(ii). Thus, their Condition A1 (3) holds with probability one if $s_{\beta_0}=o(n/\log p)$.  By their Lemma 3.3, Assumption \ref{Assump:Horseshoe}(iii) implies that the prior of $\beta_j$ is proportional to $\beta_j^{-(2\varkappa-1)}$ as $|\beta_j|\to\infty$ (i.e., polynormial-tailed). The assumption is satisfied by the standard half Cauchy distribution with $\varkappa= 2$. To meet Assumption \ref{Assump:Horseshoe}(iv), \citet{SongLiang_NearlyOptimal_2023} suggest choosing $\tau^2$ deterministically. In particular, it is sufficient to choose $\tau^2$ such that $\log \tau^2=-c\log p$ for some constant $c>0$ satisfying $c\ll (\varpi+2\varkappa-1)/(2\varkappa-2)$. 

The following result follows directly from Corollary 3.2 of \citet{SongLiang_NearlyOptimal_2023}. 

\begin{pro}\label{Pro:Horseshoe}
Consider the horseshoe prior in \eqref{Horseshoe2}. Suppose Assumptions \ref{Assump:Moments}(i), \ref{Assump:Error}(ii), and \ref{Assump:Horseshoe} hold. If in addition $s_{\beta_0}\log p=o(n)$ and $\log\|\beta_0\|_\infty=O(\log p)$,  then
\begin{align*}
	\mathbb E_{0}\left[\Pi_{\beta}\left(\|\beta-\beta_0\|_1\geq  M\sigma_0s_{\beta_0} \sqrt{\frac{\log p}{n}}\Biggm|Z^{(n)}\right)\right]\to 0,
\end{align*}
for all sufficiently large $M>0$. That is, Assumption \ref{Assump:Beta} is satisfied with 
\[\epsilon_n = s_{\beta_0}\sqrt{\frac{\log p}{n}}.\] 
\end{pro}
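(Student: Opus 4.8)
The plan is to obtain the proposition as a direct consequence of Corollary 3.2 in \citet{SongLiang_NearlyOptimal_2023}, so the real work reduces to two tasks: (i) verifying that each of their conditions is implied by the assumptions in force here, and (ii) passing from their (design-conditional) contraction statement to the unconditional expectation form demanded by Assumption \ref{Assump:Beta}. The second task will be handled exactly as in the spike-and-slab case, namely by the law of iterated expectation together with the dominated convergence theorem, exploiting that the posterior probability in question is bounded by one (cf. the argument preceding \eqref{Pro:Primitive:Eqn:1} in the proof of Proposition \ref{Pro:Primitive}).

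First I would match their conditions one by one. Their Condition A1(1) on the bounded support of the covariates is Assumption \ref{Assump:Horseshoe}(i), and A1(2), namely $n=O(p)$, is Assumption \ref{Assump:Horseshoe}(ii). The prior-tail requirement of their corollary follows from Assumption \ref{Assump:Horseshoe}(iii): by their Lemma 3.3 the marginal prior of each $\beta_j$ is polynomial-tailed with exponent $2\varkappa-1$, which is satisfied by the half-Cauchy scale mixture in \eqref{Horseshoe2} with $\varkappa=2$. The calibration of the global shrinkage parameter $\tau$ is precisely Assumption \ref{Assump:Horseshoe}(iv), while the signal-size requirement $\log\|\beta_0\|_\infty=O(\log p)$ and the sparsity condition $s_{\beta_0}\log p=o(n)$ supply the remaining hypotheses of the corollary.

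The one condition that is not immediate is their Condition A1(3), a restricted-eigenvalue (compatibility) requirement on the realized design. Because the covariates here are random rather than fixed, I would verify it on a high-probability event. Using Assumptions \ref{Assump:Moments}(i) and \ref{Assump:Error}(ii) together with the restricted-eigenvalue bounds of \citet{Raskuttietal_RestrictedEigenvalue_2010} and their sub-Gaussian extension \citep{Zhou_RestrictedSubgaussian_2009}---the same facts invoked below \eqref{MixofCompNumberstar}---the smallest eigenvalue of $n^{-1}\sum_{i=1}^n X_{i,S}X_{i,S}^\intercal$ is bounded below by a fixed constant, uniformly over sparse index sets $S$ with $|S|=o(n/\log p)$, on an event $\mathcal{E}_n$ with $P_0(\mathcal{E}_n)\to 1$; since $s_{\beta_0}\log p=o(n)$, Condition A1(3) holds on $\mathcal{E}_n$.

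With the conditions verified, I would apply Corollary 3.2 conditionally on $Z^{(n)}\in\mathcal{E}_n$ to obtain $\Pi_\beta(\|\beta-\beta_0\|_1\geq M\sigma_0 s_{\beta_0}\sqrt{\log p/n}\mid Z^{(n)})\to 0$ in $P_0$-probability for all large $M$. Writing $\mathbb{E}_0[\Pi_\beta(\cdot\mid Z^{(n)})]=\mathbb{E}_0[\Pi_\beta(\cdot\mid Z^{(n)})\mathbf{1}_{\mathcal{E}_n}]+\mathbb{E}_0[\Pi_\beta(\cdot\mid Z^{(n)})\mathbf{1}_{\mathcal{E}_n^c}]$, the second term is at most $P_0(\mathcal{E}_n^c)\to 0$, and the first vanishes by dominated convergence; reading off the rate yields $\epsilon_n=s_{\beta_0}\sqrt{\log p/n}$, which is Assumption \ref{Assump:Beta}. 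I expect the only genuine obstacle to be the random-design handling of Condition A1(3)---that is, making precise that a contraction result formulated for a design meeting a deterministic compatibility condition may be invoked conditionally on $\mathcal{E}_n$ and then integrated out---while the remaining verifications amount to bookkeeping against the hypotheses of \citet{SongLiang_NearlyOptimal_2023}.
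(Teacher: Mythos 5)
Your proposal is correct and mirrors the paper's own treatment: the paper likewise derives the result directly from Corollary 3.2 of Song and Liang (2023), matching Assumptions \ref{Assump:Horseshoe}(i)--(iv) to their Conditions A1(1)--(2), the polynomial-tail requirement (via their Lemma 3.3 with $\varkappa=2$), and the $\tau$ calibration, and verifying the restricted-eigenvalue Condition A1(3) on a high-probability event using the Raskutti et al./Zhou bounds under Assumptions \ref{Assump:Moments}(i) and \ref{Assump:Error}(ii). Your explicit splitting of the expectation over $\mathcal{E}_n$ and $\mathcal{E}_n^c$ is just a slightly more careful rendering of the law-of-iterated-expectation/dominated-convergence step the paper uses in the analogous spike-and-slab argument.
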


The CLIME approach of \citet{Caietal_ConstrainedSparse_2011} provides an alternative construction of $\hat{\Theta}_n$. Specifically, we may estimate the precision matrix by $\hat{\Theta}_n$ that solves the following constrained $\ell_1$ minimization problem:
\begin{align}\label{CLIME1}
\min_{\Theta}\sum_{j=1}^{p}\|\Theta^{\intercal}e_{j}\|_{1}
\quad \text{subject to}\quad 
\|\Theta\hat{\Omega}_n - I_p\|_{\max}\leq \kappa,
\end{align}
where $\kappa>0$ is a tuning parameter. As noted in their Lemma 1, the problem in \eqref{CLIME1} is equivalent to solving $p$ rowwise linear programs:
\begin{align}\label{CLIME2}
\min_{\theta}\|\theta\|_{1}\quad \text{subject to}\quad 
\|\hat{\Omega}_n\theta - e_j\|_{\max}\leq \kappa, 
\quad j=1,\ldots, p. 
\end{align}
Denoting the solutions to \eqref{CLIME2} by $\hat{\theta}_1,\ldots,\hat\theta_p$, we have 
$\hat{\Theta}_n = (\hat{\theta}_1,\ldots,\hat\theta_p)^{\intercal}$.
Each subproblem in \eqref{CLIME2} can be solved efficiently by linear programming, making $\hat{\Theta}_n$ computationally attractive. A symmetrized version can be obtained following (2) of \citet{Caietal_ConstrainedSparse_2011}. 

While \citet{Caietal_ConstrainedSparse_2011} focus on the inverse covariance matrix, 
we instead consider the inverse of the second moment matrix. 
Using the same arguments as in Theorem 6 of their paper, we can establish a rate for $\|\hat{\Theta}_n-\Theta_0\|_{\max}$.
Specifically, if $\kappa \geq \|\Theta_0\|_{\infty}\|\hat{\Omega}_n - \Omega_0\|_{\max}$, then
$\|\hat{\Theta}_n-\Theta_0\|_{\max}\leq 4\kappa \|\Theta_0\|_{\infty}$.
Under Assumptions \ref{Assump:Moments}(i) and (ii), 
$\|\hat{\Omega}_n-\Omega_0\|_{\max} = O_{P_{0}}(\sqrt{\log p /n})$ 
(also see \eqref{Pro:Primitive:Eqn:2}).
Hence, by choosing 
\begin{align}\label{CLIME3}
\kappa \asymp \|\Theta_0\|_{\infty}\sqrt{\frac{\log p}{n}},
\end{align}
we obtain $\|\hat{\Theta}_n-\Theta_0\|_{\max} = 
O_{P_{0}}(\|\Theta_0\|_{\infty}^{2}\sqrt{\log p /n})$.
To establish a bound on $\|\hat{\Theta}_n-\Theta_0\|_{\infty}$, 
we note that $\|\Theta_0\hat{\Omega}_n - I_p\|_{\max}\leq \kappa$ for 
$\kappa \geq \|\Theta_0\|_{\infty}\|\hat{\Omega}_n - \Omega_0\|_{\max}$.
Writing ${\Theta}_0 = (\theta_{0,1},\ldots,\theta_{0,p})^{\intercal}$, 
$\Theta_0$ satisfies the constraint in \eqref{CLIME1}, 
so $\|\hat{\theta}_j\|_1\leq \|\theta_{0,j}\|_1$ for all $j$. 
For any $x,y\in\mathbf{R}^{p}$ with $\|x\|_1\leq \|y\|_1$, we have 
\begin{align} \label{CLIME4}
\|x-y\|_{1} 
= \|x_{S_y^c}\|_{1} + \|x_{S_y}-y_{S_y}\|_{1}
\leq 2\|x_{S_y}-y_{S_y}\|_{1},
\end{align}
where the inequality holds since $\|x_{S_y^c}\|_{1}= \|x\|_1-\|x_{S_y}\|_{1}\leq \|y\|_1-\|x_{S_y}\|_{1} = \|y_{S_y}\|_1-\|x_{S_y}\|_{1}\leq \|x_{S_y}-y_{S_y}\|_{1}$. Hence, $\|\hat{\theta}_j-\theta_{0,j}\|_{1} \leq 2s_{\Theta_0,j}\|\hat{\Theta}_n-\Theta_0\|_{\max}$ for all $j=1,\ldots, p$. 
Therefore, the second condition of Assumption \ref{Assump:Precision} holds with 
\begin{align} \label{CLIME5}
\delta_n = \max_{1\leq j\leq p}s_{\Theta_0,j}\|\Theta_0\|^{2}_{\infty}\sqrt{\frac{\log p}{n}},
\end{align}
while the first condition holds with 
\begin{align} \label{CLIME6}
\gamma_n = \|\Theta_0\|_{\infty}\sqrt{\frac{\log p}{n}}.
\end{align}

\begin{pro}\label{Pro:CLIME}
Consider the precision matrix estimator in \eqref{CLIME1}. Suppose Assumptions \ref{Assump:Moments}(i) and (ii) hold. If in addition $\kappa$ satisfies \eqref{CLIME3},  Assumption \ref{Assump:Precision} is satisfied with $\delta_n$ and $\gamma_n$ given in \eqref{CLIME5} and \eqref{CLIME6}. 
\end{pro}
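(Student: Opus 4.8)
The plan is to verify the two components of Assumption~\ref{Assump:Precision} separately, exploiting the two defining features of the CLIME estimator in \eqref{CLIME1}--\eqref{CLIME2}: the feasibility constraint controls $\|\hat{\Theta}_n\hat{\Omega}_n-I_p\|_{\max}$ directly and hence delivers $\gamma_n$, whereas the minimum-$\ell_1$ objective, combined with the feasibility of the truth, controls the operator $\ell_\infty$ error and hence $\delta_n$. Throughout, the randomness enters only through the max-norm concentration of $\hat{\Omega}_n$, after which the estimates become deterministic consequences of convex optimality.

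First I would record the concentration $\|\hat{\Omega}_n-\Omega_0\|_{\max}=O_{P_0}(\sqrt{\log p/n})$. Under Assumptions~\ref{Assump:Moments}(i)--(ii) the centered entries $X_{j,i}X_{k,i}-\mathbb{E}_0[X_{j,i}X_{k,i}]$ are sub-exponential, being products of sub-Gaussian coordinates, so a union bound over the $p^2$ entries together with a Bernstein-type maximal inequality yields the stated rate; this is exactly the bound already invoked in \eqref{Lem:BootstrapRate:Eqn:3} and \eqref{Pro:Primitive:Eqn:2}. Consequently, on the event $\mathcal{E}_n:=\{\|\Theta_0\|_{\infty}\|\hat{\Omega}_n-\Omega_0\|_{\max}\le\kappa\}$, which satisfies $P_0(\mathcal{E}_n)\to1$ under the choice \eqref{CLIME3}, the truth is CLIME-feasible because $\|\Theta_0\hat{\Omega}_n-I_p\|_{\max}=\|\Theta_0(\hat{\Omega}_n-\Omega_0)\|_{\max}\le\|\Theta_0\|_{\infty}\|\hat{\Omega}_n-\Omega_0\|_{\max}\le\kappa$. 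The $\gamma_n$ claim then follows at once: by construction $\hat{\Theta}_n$ meets the constraint, so $\|\hat{\Theta}_n\hat{\Omega}_n-I_p\|_{\max}\le\kappa\asymp\|\Theta_0\|_{\infty}\sqrt{\log p/n}$, which is \eqref{CLIME6}.

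Next I would establish the deterministic max-norm error bound $\|\hat{\Theta}_n-\Theta_0\|_{\max}\le 4\kappa\|\Theta_0\|_{\infty}$ on $\mathcal{E}_n$, which is where the substance of the argument lies. Since both $\hat{\Theta}_n$ and $\Theta_0$ are feasible on $\mathcal{E}_n$, the triangle inequality gives $\|(\hat{\Theta}_n-\Theta_0)\hat{\Omega}_n\|_{\max}\le 2\kappa$; feeding this into the identity-based argument of Theorem~6 in \citet{Caietal_ConstrainedSparse_2011}, which manipulates $\hat{\Theta}_n-\Theta_0$ through $\Theta_0\Omega_0=I_p$ and the submultiplicativity $\|AB\|_{\max}\le\|A\|_{\infty}\|B\|_{\max}$ already used in \eqref{Thm:BvM1:Eqn:7}, produces the asserted estimate. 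Combined with \eqref{CLIME3} this yields $\|\hat{\Theta}_n-\Theta_0\|_{\max}=O_{P_0}(\|\Theta_0\|_{\infty}^2\sqrt{\log p/n})$. The only departure from the original CLIME analysis is that $\Omega_0$ is here the second-moment matrix rather than the covariance, which is immaterial since the argument uses only its invertibility and the bound $\|\Theta_0\|_{\infty}$.

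Finally I would pass from the max norm to the operator $\ell_\infty$ norm, i.e. the maximal rowwise $\ell_1$ distance, row by row. Because $\Theta_0$ is feasible, minimality of the objective in \eqref{CLIME2} gives $\|\hat{\theta}_j\|_1\le\|\theta_{0,j}\|_1$ for each $j$, so the cone inequality \eqref{CLIME4} applied with $x=\hat{\theta}_j$, $y=\theta_{0,j}$ gives $\|\hat{\theta}_j-\theta_{0,j}\|_1\le 2\|(\hat{\theta}_j-\theta_{0,j})_{S_{\theta_{0,j}}}\|_1\le 2 s_{\Theta_0,j}\|\hat{\Theta}_n-\Theta_0\|_{\max}$. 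Taking the maximum over $j$ and substituting the max-norm rate delivers $\delta_n=\max_{1\le j\le p}s_{\Theta_0,j}\|\Theta_0\|_{\infty}^2\sqrt{\log p/n}$, i.e. \eqref{CLIME5}. The step demanding the most care is not any single inequality but the bookkeeping of the factor $\|\Theta_0\|_{\infty}$, which enters \emph{quadratically} in $\delta_n$: one must track it faithfully so that the resulting pair $(\gamma_n,\delta_n)$ still satisfies the rate conditions of Theorem~\ref{Thm:BvM1} when $\|\Theta_0\|_{\infty}$ is allowed to grow. The concentration and convex-optimality ingredients themselves are standard.
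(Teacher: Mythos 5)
Your proposal is correct and follows essentially the same route as the paper: establish $\|\hat{\Omega}_n-\Omega_0\|_{\max}=O_{P_0}(\sqrt{\log p/n})$ so that $\Theta_0$ is CLIME-feasible with probability approaching one, read off $\gamma_n$ directly from the feasibility constraint, invoke the Theorem~6 argument of \citet{Caietal_ConstrainedSparse_2011} for the max-norm bound $\|\hat{\Theta}_n-\Theta_0\|_{\max}\leq 4\kappa\|\Theta_0\|_{\infty}$, and convert to the rowwise $\ell_1$ bound via minimality of the objective and the cone inequality \eqref{CLIME4}. No substantive differences from the paper's argument.
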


\section{Additional Simulation Results}
\renewcommand{\theequation}{D.\arabic{equation}}
\setcounter{equation}{0}
\pgfplotstableread{
	beta  alpha  Horseshoe  DebiasHorseshoe  DebiasLasso
	0     0.95   0.9987111   0.9632222        0.9522
	1     0.95   0.7760000   0.9210000        0.8590
	2     0.95   0.7910000   0.9340000        0.9380
	3     0.95   0.7760000   0.9380000        0.9500
	4     0.95   0.7690000   0.9310000        0.9410
	5     0.95   0.8910000   0.9460000        0.9680
}\HCovpfnoheterSigmaone  

\pgfplotstableread{
	beta  alpha  Horseshoe  DebiasHorseshoe  DebiasLasso
	0     0.95   0.9997778   0.9599333        0.9498444
	1     0.95   0.8720000   0.9190000        0.9370000
	2     0.95   0.7690000   0.9470000        0.9390000
	3     0.95   0.7400000   0.9430000        0.9550000
	4     0.95   0.7140000   0.9450000        0.9550000
	5     0.95   0.8120000   0.9400000        0.9560000
}\HCovpfnohomochiSigmaone  

\pgfplotstableread{
	beta  alpha  Horseshoe  DebiasHorseshoe  DebiasLasso
	0     0.95   0.9992      0.9680889        0.9544222
	1     0.95   0.8730      0.9410000        0.9470000
	2     0.95   0.8670      0.9280000        0.9240000
	3     0.95   0.8890      0.9240000        0.9320000
	4     0.95   0.9030      0.9330000        0.9360000
	5     0.95   0.9500      0.9520000        0.9510000
}\HCovpfnohomonSigmaone  

\pgfplotstableread{
	beta  alpha  Horseshoe   DebiasHorseshoe  DebiasLasso
	0     0.95   0.9998737   0.9707263        0.9519053
	1     0.95   0.7680000   0.8970000        0.8520000
	2     0.95   0.7450000   0.9370000        0.9500000
	3     0.95   0.7480000   0.9310000        0.9500000
	4     0.95   0.7180000   0.9250000        0.9470000
	5     0.95   0.8900000   0.9460000        0.9590000
}\HCovponoheterSigmaone  

\pgfplotstableread{
	beta  alpha  Horseshoe   DebiasHorseshoe  DebiasLasso
	0     0.95   0.9999053   0.9653579        0.9499684
	1     0.95   0.8540000   0.9430000        0.9580000
	2     0.95   0.7240000   0.9530000        0.9560000
	3     0.95   0.6640000   0.9350000        0.9370000
	4     0.95   0.6900000   0.9470000        0.9530000
	5     0.95   0.7870000   0.9460000        0.9610000
}\HCovponohomochiSigmaone  

\pgfplotstableread{
	beta  alpha  Horseshoe   DebiasHorseshoe  DebiasLasso
	0     0.95   0.9997579   0.9763789        0.9560211
	1     0.95   0.8630000   0.9220000        0.9440000
	2     0.95   0.8390000   0.9170000        0.9310000
	3     0.95   0.8750000   0.9370000        0.9390000
	4     0.95   0.8970000   0.9300000        0.9380000
	5     0.95   0.9420000   0.9490000        0.9480000
}\HCovponohomonSigmaone  

\pgfplotstableread{
	beta  alpha  Horseshoe   DebiasHorseshoe  DebiasLasso
	0     0.95   0.9999385   0.9778           0.9519128
	1     0.95   0.7040000   0.8970           0.8770000
	2     0.95   0.7060000   0.9320           0.9400000
	3     0.95   0.6870000   0.9420           0.9410000
	4     0.95   0.7010000   0.9400           0.9520000
	5     0.95   0.8630000   0.9440           0.9550000
}\HCovptnoheterSigmaone  

\pgfplotstableread{
	beta  alpha  Horseshoe   DebiasHorseshoe  DebiasLasso
	0     0.95   0.9999692   0.9719538        0.9508462
	1     0.95   0.7770000   0.9340000        0.9450000
	2     0.95   0.6410000   0.9320000        0.9460000
	3     0.95   0.6070000   0.9490000        0.9520000
	4     0.95   0.5930000   0.9340000        0.9580000
	5     0.95   0.7600000   0.9480000        0.9650000
}\HCovptnohomochiSigmaone  

\pgfplotstableread{
	beta  alpha  Horseshoe   DebiasHorseshoe  DebiasLasso
	0     0.95   0.9998974   0.9854974        0.9564872
	1     0.95   0.8250000   0.9090000        0.9550000
	2     0.95   0.8420000   0.9210000        0.9380000
	3     0.95   0.8610000   0.9300000        0.9350000
	4     0.95   0.8960000   0.9330000        0.9420000
	5     0.95   0.9410000   0.9580000        0.9520000
}\HCovptnohomonSigmaone  

\pgfplotstableread{
	beta  alpha  Horseshoe   DebiasHorseshoe  DebiasLasso
	0     0.95   0.9990667   0.9547556        0.9522667
	1     0.95   0.2620000   0.9210000        0.8870000
	2     0.95   0.1650000   0.9120000        0.9100000
	3     0.95   0.2330000   0.9030000        0.8820000
	4     0.95   0.4630000   0.9180000        0.6290000
	5     0.95   0.7100000   0.9200000        0.8400000
}\HCovpfnoheterSigmatwo  

\pgfplotstableread{
	beta  alpha  Horseshoe  DebiasHorseshoe  DebiasLasso
	0     0.95   1.0000000  0.9544           0.9522222
	1     0.95   0.2120000  0.9400           0.9480000
	2     0.95   0.1470000  0.9230           0.9270000
	3     0.95   0.1900000  0.9060           0.9110000
	4     0.95   0.3900000  0.9080           0.6450000
	5     0.95   0.6460000  0.9220           0.8710000
}\HCovpfnohomochiSigmatwo  

\pgfplotstableread{
	beta  alpha  Horseshoe  DebiasHorseshoe  DebiasLasso
	0     0.95   1.0000000  0.965            0.9519111
	1     0.95   0.5060000  0.911            0.9250000
	2     0.95   0.6300000  0.889            0.8470000
	3     0.95   0.8010000  0.892            0.7230000
	4     0.95   0.9150000  0.921            0.6220000
	5     0.95   0.9490000  0.914            0.7560000
}\HCovpfnohomonSigmatwo  

\pgfplotstableread{
	beta  alpha  Horseshoe  DebiasHorseshoe  DebiasLasso
	0     0.95   1.0000000  0.9547895        0.9528421
	1     0.95   0.0910000  0.9430000        0.9080000
	2     0.95   0.0740000  0.9350000        0.9440000
	3     0.95   0.1190000  0.8870000        0.8970000
	4     0.95   0.3210000  0.9050000        0.6760000
	5     0.95   0.5620000  0.9240000        0.8530000
}\HCovponoheterSigmatwo  

\pgfplotstableread{
	beta  alpha  Horseshoe  DebiasHorseshoe  DebiasLasso
	0     0.95   1.0000000  0.9527368        0.9528105
	1     0.95   0.0570000  0.9480000        0.9590000
	2     0.95   0.0540000  0.9260000        0.9290000
	3     0.95   0.0890000  0.9100000        0.9310000
	4     0.95   0.2870000  0.8960000        0.7100000
	5     0.95   0.5270000  0.9170000        0.8700000
}\HCovponohomochiSigmatwo  

\pgfplotstableread{
	beta  alpha  Horseshoe  DebiasHorseshoe  DebiasLasso
	0     0.95   0.9977474  0.9692632        0.9601368
	1     0.95   0.2880000  0.9350000        0.9330000
	2     0.95   0.5190000  0.8790000        0.8700000
	3     0.95   0.7510000  0.8970000        0.7330000
	4     0.95   0.8900000  0.9090000        0.5900000
	5     0.95   0.9530000  0.9310000        0.7280000
}\HCovponohomonSigmatwo  

\pgfplotstableread{
	beta  alpha  Horseshoe  DebiasHorseshoe  DebiasLasso
	0     0.95   1.0000000  0.9535077        0.9531590
	1     0.95   0.0370000  0.9420000        0.9210000
	2     0.95   0.0380000  0.9220000        0.9440000
	3     0.95   0.0520000  0.8800000        0.9180000
	4     0.95   0.2260000  0.9020000        0.7280000
	5     0.95   0.4420000  0.9260000        0.8850000
}\HCovptnoheterSigmatwo  

\pgfplotstableread{
	beta  alpha  Horseshoe  DebiasHorseshoe  DebiasLasso
	0     0.95   1.0000000  0.9525128        0.9530615
	1     0.95   0.0180000  0.9390000        0.9540000
	2     0.95   0.0160000  0.9090000        0.9410000
	3     0.95   0.0420000  0.8820000        0.9340000
	4     0.95   0.1840000  0.9090000        0.7540000
	5     0.95   0.3850000  0.9230000        0.8780000
}\HCovptnohomochiSigmatwo  

\pgfplotstableread{
	beta  alpha  Horseshoe  DebiasHorseshoe  DebiasLasso
	0     0.95   0.9990462  0.9700564        0.9647282
	1     0.95   0.1720000  0.9430000        0.9440000
	2     0.95   0.4290000  0.8690000        0.8640000
	3     0.95   0.6520000  0.8700000        0.7250000
	4     0.95   0.8700000  0.9170000        0.5640000
	5     0.95   0.9490000  0.9560000        0.7400000
}\HCovptnohomonSigmatwo  

\pgfplotstableread{
	beta  Horseshoe    DebiasHorseshoe  DebiasLasso
	0     0.0015659    0.0029568         0.0015855
	1     0.1056429    0.0200812         0.0061546
	2     0.2293909    0.0175956         0.0061479
	3     0.3375094    0.0320988         0.0057910
	4     0.4393100    0.0634626         0.0294288
	5     0.3223037    0.0313444         0.0083655
	
}\HBiaspfnoheterSigmaone  

\pgfplotstableread{
	beta  Horseshoe    DebiasHorseshoe  DebiasLasso
	0     0.0035822    0.0205809         0.0228993
	1     0.1392791    0.0245394         0.0051747
	2     0.2842965    0.0318579         0.0025413
	3     0.4289701    0.0435347         0.0052156
	4     0.5794847    0.0808911         0.0242093
	5     0.6045194    0.0471718         0.0012220
	
}\HBiaspfnohomochiSigmaone  

\pgfplotstableread{
	beta  Horseshoe    DebiasHorseshoe  DebiasLasso
	0     0.0028244    0.0046566         0.0031779
	1     0.0752019    0.0155349         0.0117140
	2     0.1020635    0.0115795         0.0089705
	3     0.1195556    0.0187743         0.0198663
	4     0.1274907    0.0259005         0.0304509
	5     0.0604226    0.0073524         0.0222719
}\HBiaspfnohomonSigmaone  

\pgfplotstableread{
	beta  Horseshoe    DebiasHorseshoe  DebiasLasso
	0     0.0027054    0.0016010         0.0004310
	1     0.1194426    0.0303918         0.0058490
	2     0.2817094    0.0610015         0.0268647
	3     0.3788836    0.0506596         0.0079332
	4     0.4782947    0.0769025         0.0239652
	5     0.3300119    0.0275547         0.0058530
}\HBiasponoheterSigmaone  

\pgfplotstableread{
	beta  Horseshoe    DebiasHorseshoe  DebiasLasso
	0     0.0061760    0.0143315         0.0140827
	1     0.1481874    0.0190226         0.0116534
	2     0.3126230    0.0437019         0.0043870
	3     0.4703053    0.0772264         0.0261680
	4     0.5738914    0.0452879         0.0243693
	5     0.6707859    0.0719500         0.0026534
	
}\HBiasponohomochiSigmaone  

\pgfplotstableread{
	beta  Horseshoe    DebiasHorseshoe  DebiasLasso
	0     0.0010356    0.0029000         0.0023041
	1     0.0872685    0.0258190         0.0136999
	2     0.1324778    0.0349940         0.0285193
	3     0.1314595    0.0253750         0.0241935
	4     0.1383359    0.0311334         0.0348457
	5     0.0506238    0.0006050         0.0187188
}\HBiasponohomonSigmaone  

\pgfplotstableread{
	beta  Horseshoe    DebiasHorseshoe  DebiasLasso
	0     0.0027712    0.0026460         0.0066543
	1     0.1326482    0.0441098         0.0116349
	2     0.2961015    0.0657982         0.0108203
	3     0.4235386    0.0880147         0.0186378
	4     0.5155163    0.0909964         0.0085993
	5     0.4049892    0.0790871         0.0293302
}\HBiasptnoheterSigmaone  

\pgfplotstableread{
	beta  Horseshoe    DebiasHorseshoe  DebiasLasso
	0     0.0035517    0.0271306         0.0323457
	1     0.1641325    0.0363870         0.0055557
	2     0.3414131    0.0730233         0.0123862
	3     0.5198052    0.0887905         0.0224435
	4     0.6743130    0.1301401         0.0368437
	5     0.7963287    0.1362562         0.0389249
	
}\HBiasptnohomochiSigmaone  

\pgfplotstableread{
	beta  Horseshoe    DebiasHorseshoe  DebiasLasso
	0     0.0023330    0.0026537         0.0037572
	1     0.0980262    0.0364055         0.0164813
	2     0.1407563    0.0395384         0.0229221
	3     0.1540943    0.0423698         0.0297046
	4     0.1432345    0.0369463         0.0317285
	5     0.0751611    0.0253543         0.0385599
}\HBiasptnohomonSigmaone  

\pgfplotstableread{
	beta  Horseshoe    DebiasHorseshoe  DebiasLasso
	0     0.0033110    0.0102027         0.0924768
	1     0.2293780    0.0597873         0.0402821
	2     0.4579267    0.1141242         0.1241847
	3     0.6436791    0.1516728         0.1652059
	4     0.6294855    0.1274823         0.3259147
	5     0.2739574    0.0658269         0.1817431

}\HBiaspfnoheterSigmatwo  

\pgfplotstableread{
	beta  Horseshoe    DebiasHorseshoe  DebiasLasso
	0     0.0061714    0.0167749         0.1099846
	1     0.2394778    0.0592841         0.0383719
	2     0.4651057    0.1028757         0.1197552
	3     0.6569376    0.1433931         0.1465508
	4     0.6775266    0.1284734         0.3340718
	5     0.3058536    0.0728868         0.1811170
	
}\HBiaspfnohomochiSigmatwo  

\pgfplotstableread{
	beta  Horseshoe    DebiasHorseshoe  DebiasLasso
	0     0.0010984    0.0022196         0.0183525
	1     0.1789254    0.0414132         0.0302075
	2     0.2094115    0.0552688         0.0690753
	3     0.1344985    0.0392792         0.1099556
	4     0.0675630    0.0211029         0.1362857
	5     0.0251219    0.0079500         0.0898970
	
}\HBiaspfnohomonSigmatwo  

\pgfplotstableread{
	beta  Horseshoe    DebiasHorseshoe  DebiasLasso
	0     0.0019544    0.0093817         0.0933573
	1     0.2416303    0.0675885         0.0393635
	2     0.4817852    0.1145709         0.1099146
	3     0.6986325    0.1768218         0.1569183
	4     0.7427058    0.1552949         0.3197262
	5     0.3374740    0.0868445         0.1801497
}\HBiasponoheterSigmatwo  

\pgfplotstableread{
	beta  Horseshoe    DebiasHorseshoe  DebiasLasso
	0     0.0027888    0.0167194         0.1125766
	1     0.2464392    0.0602397         0.0336093
	2     0.4868385    0.1043093         0.1056026
	3     0.7109288    0.1693280         0.1415056
	4     0.7819951    0.1488197         0.3232465
	5     0.3659671    0.0932085         0.1794251

}\HBiasponohomochiSigmatwo  

\pgfplotstableread{
	beta  Horseshoe    DebiasHorseshoe  DebiasLasso
	0     0.0003341    0.0002990         0.0181754
	1     0.2158279    0.0577137         0.0351035
	2     0.2679974    0.0733194         0.0723636
	3     0.1730905    0.0501391         0.1203933
	4     0.0786047    0.0242458         0.1534730
	5     0.0275258    0.0111723         0.1047097
}\HBiasponohomonSigmatwo  

\pgfplotstableread{
	beta  Horseshoe    DebiasHorseshoe  DebiasLasso
	0     0.0021544    0.0089392         0.0932504
	1     0.2461889    0.0589050         0.0273293
	2     0.4900134    0.1115116         0.0945867
	3     0.7320815    0.1896711         0.1422347
	4     0.8345876    0.1734593         0.3048099
	5     0.3835798    0.0913690         0.1588771
}\HBiasptnoheterSigmatwo  

\pgfplotstableread{
	beta  Horseshoe    DebiasHorseshoe  DebiasLasso
	0     0.0011054    0.0169743         0.1096036
	1     0.2487929    0.0577169         0.0271311
	2     0.4955838    0.1209678         0.1101223
	3     0.7340638    0.1812434         0.1260607
	4     0.8651861    0.1628056         0.2983622
	5     0.4150527    0.1050468         0.1683140

}\HBiasptnohomochiSigmatwo  

\pgfplotstableread{
	beta  Horseshoe    DebiasHorseshoe  DebiasLasso
	0     0.0003050    0.0015297         0.0190292
	1     0.2318257    0.0589779         0.0331154
	2     0.3230196    0.0882988         0.0724974
	3     0.2361063    0.0671364         0.1264725
	4     0.1086897    0.0343801         0.1644692
	5     0.0281382    0.0076782         0.1059272
}\HBiasptnohomonSigmatwo  

\pgfplotstableread{
	beta  Horseshoe    DebiasHorseshoe  DebiasLasso
	0     0.1621833    0.4606703         0.5441416
	1     0.2233599    0.2567861         0.2786553
	2     0.3368450    0.2885356         0.3153846
	3     0.4734924    0.3460208         0.3726307
	4     0.6102357    0.4086266         0.4267199
	5     0.6817552    0.4624781         0.4594298
	
}\HRMSEpfnoheterSigmaone  

\pgfplotstableread{
	beta  Horseshoe    DebiasHorseshoe  DebiasLasso
	0     0.1893556    0.5997569         0.6778838
	1     0.2111189    0.2574541         0.2839598
	2     0.3736063    0.3550463         0.3906908
	3     0.5493390    0.4356661         0.4665322
	4     0.7113304    0.4912554         0.5193839
	5     0.9902699    0.5880395         0.5860584
}\HRMSEpfnohomochiSigmaone  

\pgfplotstableread{
	beta  Horseshoe    DebiasHorseshoe  DebiasLasso
	0     0.1078735    0.2421218         0.2804317
	1     0.1298004    0.1049769         0.1133537
	2     0.2077481    0.1606335         0.1713026
	3     0.2562824    0.1945380         0.2049842
	4     0.2841239    0.2204666         0.2322905
	5     0.2553760    0.2408694         0.2590445
	
}\HRMSEpfnohomonSigmaone  

\pgfplotstableread{
	beta  Horseshoe    DebiasHorseshoe  DebiasLasso
	0     0.1296107    0.4373507         0.5396947
	1     0.2230031    0.2568597         0.2876093
	2     0.3572556    0.2786310         0.3088945
	3     0.4948150    0.3407979         0.3620187
	4     0.6493928    0.4158020         0.4370849
	5     0.6983642    0.4597650         0.4526235
	
}\HRMSEponoheterSigmaone  

\pgfplotstableread{
	beta  Horseshoe    DebiasHorseshoe  DebiasLasso
	0     0.1584602    0.5576725         0.6581995
	1     0.2017275    0.2375354         0.2779783
	2     0.3850636    0.3410620         0.3790894
	3     0.5773419    0.4579921         0.4917899
	4     0.7259529    0.5141449         0.5549353
	5     1.0506520    0.5900034         0.5695781
}\HRMSEponohomochiSigmaone  

\pgfplotstableread{
	beta  Horseshoe    DebiasHorseshoe  DebiasLasso
	0     0.08763656   0.2251160         0.2740181
	1     0.13571776   0.1083582         0.1137970
	2     0.22631979   0.1645208         0.1697419
	3     0.26849292   0.1974217         0.2035512
	4     0.30611645   0.2305285         0.2403927
	5     0.25493349   0.2404578         0.2550367
}\HRMSEponohomonSigmaone  

\pgfplotstableread{
	beta  Horseshoe    DebiasHorseshoe  DebiasLasso
	0     0.1238089    0.4281696         0.5368550
	1     0.2178885    0.2443793         0.2783154
	2     0.3638890    0.2769387         0.3109570
	3     0.5348957    0.3526196         0.3799376
	4     0.6746917    0.4136281         0.4322975
	5     0.7648796    0.4801275         0.4638324
}\HRMSEptnoheterSigmaone  

\pgfplotstableread{
	beta  Horseshoe    DebiasHorseshoe  DebiasLasso
	0     0.1277925    0.5782927         0.6929169
	1     0.2126636    0.2390558         0.2777084
	2     0.4031279    0.3541598         0.3996979
	3     0.5961679    0.4330627         0.4660220
	4     0.7809999    0.5069256         0.5404113
	5     1.1367422    0.6013373         0.5716976
}\HRMSEptnohomochiSigmaone  

\pgfplotstableread{
	beta  Horseshoe    DebiasHorseshoe  DebiasLasso
	0     0.08652936   0.2152969         0.2820404
	1     0.14347312   0.1116047         0.1163172
	2     0.23457421   0.1658087         0.1718338
	3     0.29346200   0.2074454         0.2156093
	4     0.31270024   0.2331667         0.2427989
	5     0.26800511   0.2468332         0.2719630
	
}\HRMSEptnohomonSigmaone  

\pgfplotstableread{
	beta  Horseshoe    DebiasHorseshoe  DebiasLasso
	0     0.02756204   0.2054152         0.2046474
	1     0.24557164   0.2907642         0.3104504
	2     0.46577041   0.2424960         0.2600198
	3     0.66495286   0.2620390         0.2602138
	4     0.70808509   0.2652669         0.3858627
	5     0.36529653   0.2458757         0.2675998

}\HRMSEpfnoheterSigmatwo  

\pgfplotstableread{
	beta  Horseshoe    DebiasHorseshoe  DebiasLasso
	0     0.03602385   0.2225492        0.2245792
	1     0.24407252   0.2491741        0.2720877
	2     0.47140171   0.2506053        0.2657801
	3     0.67933346   0.2742856        0.2627743
	4     0.75475541   0.2875376        0.4032482
	5     0.39203549   0.2501474        0.2700973
}\HRMSEpfnohomochiSigmatwo  

\pgfplotstableread{
	beta  Horseshoe    DebiasHorseshoe  DebiasLasso
	0     0.02030690   0.08760473       0.08294056
	1     0.20100591   0.11415534       0.11920002
	2     0.26335089   0.12763446       0.13439824
	3     0.20727158   0.12203519       0.15986872
	4     0.14166943   0.11234729       0.17797528
	5     0.09469945   0.10520609       0.14070302

}\HRMSEpfnohomonSigmatwo  

\pgfplotstableread{
	beta  Horseshoe    DebiasHorseshoe  DebiasLasso
	0     0.01426283   0.2148858        0.2142471
	1     0.24706069   0.2811261        0.3013392
	2     0.48544953   0.2333684        0.2482455
	3     0.71122282   0.2731912        0.2609650
	4     0.80520433   0.2794152        0.3805860
	5     0.40902182   0.2425602        0.2650259

}\HRMSEponoheterSigmatwo  

\pgfplotstableread{
	beta  Horseshoe    DebiasHorseshoe  DebiasLasso
	0     0.01237328   0.2249134        0.2315057
	1     0.24733349   0.2457967        0.2692475
	2     0.48845642   0.2520387        0.2669972
	3     0.71958238   0.2742335        0.2557865
	4     0.83481015   0.2933071        0.3924041
	5     0.43024969   0.2549835        0.2717588
}\HRMSEponohomochiSigmatwo  

\pgfplotstableread{
	beta  Horseshoe    DebiasHorseshoe  DebiasLasso
	0     0.01032684   0.0899872        0.08989901
	1     0.22563225   0.1122279        0.11997834
	2     0.31473020   0.1320989        0.13516457
	3     0.25199782   0.1287008        0.16585410
	4     0.15908920   0.1160940        0.19242741
	5     0.09192330   0.1025238        0.15130191
}\HRMSEponohomonSigmatwo  

\pgfplotstableread{
	beta  Horseshoe    DebiasHorseshoe  DebiasLasso
	0     0.03261598   0.2102031        0.2146040
	1     0.24857250   0.2748019        0.2976811
	2     0.49254533   0.2365470        0.2488787
	3     0.73566556   0.2805174        0.2581164
	4     0.87243939   0.2786525        0.3629847
	5     0.43750101   0.2435472        0.2545549
}\HRMSEptnoheterSigmatwo  

\pgfplotstableread{
	beta  Horseshoe    DebiasHorseshoe  DebiasLasso
	0     0.00713519   0.2226069        0.2322819
	1     0.24890194   0.2529971        0.2803563
	2     0.49596073   0.2584996        0.2762522
	3     0.73744005   0.2832415        0.2575012
	4     0.89760162   0.2860883        0.3690936
	5     0.46101129   0.2655783        0.2759601
}\HRMSEptnohomochiSigmatwo  

\pgfplotstableread{
	beta  Horseshoe    DebiasHorseshoe  DebiasLasso
	0     0.004441697  0.0865064        0.09127577
	1     0.237044296  0.1102464        0.11992921
	2     0.364674159  0.1434153        0.13812807
	3     0.323157188  0.1436920        0.17382285
	4     0.201454367  0.1231794        0.20269544
	5     0.097450030  0.1025398        0.15347705

}\HRMSEptnohomonSigmatwo  
In this appendix, we conduct additional Monte Carlo simulations for regression models in Section \ref{Sec: Monte Carlo} where the coefficient prior is specified as a horseshoe prior. The horseshoe prior is given by the hierarchical structure: independently for all $j=1,\ldots, p$,
\begin{align*}
	\beta_j \mid \lambda_j &\sim \text{Normal}\left(0, \lambda_j^2\right), \\
	\lambda_j \mid \tau &\sim \text{Cauchy}^+(0, \tau), \\
	\tau \mid \sigma &\sim \text{Cauchy}^+(0, \sigma), \\
	\sigma &\sim \text{Cauchy}^+(0, 10).
\end{align*}
Here, the global shrinkage parameter $\tau$ governs the overall degree of regularization by pulling all coefficients toward zero, whereas the local parameters $\lambda_j$ allow individual coefficients to escape this global shrinkage, thereby inducing heavy tails and adaptive sparsity. Posterior draws of $\beta$ are obtained using the MCMC algorithm of \citet{KimLeeGupta2020BayesianSC}. Specifically, we generate 8,000 posterior samples after discarding the initial 8,000 burn-in iterations.

For each replication, we collect posterior samples and implement our proposed debiased Bayesian inference procedure as described in Algorithm \ref{Algorithm}. As in the spike-and-slab simulation study, we compare our approach against two benchmarks: (i) standard Bayesian inference under the horseshoe prior, and (ii) debiased inference based on the LASSO estimator. We use the same abbreviations for the methods under comparison as those in Table \ref{tab:methods}. The simulation design and data-generating processes are identical to those used in the spike-and-slab experiments. 

Figures \ref{Fig:hone} and \ref{Fig:htwo} present the empirical coverage rates for different values of $\beta_0$, where the columns correspond to model specifications and the rows to varying dimensionalities $p$. The black, red, and blue curves represent, respectively, the standard Bayesian inference, the debaised LASSO, and our proposed debiased horseshoe method. Estimation accuracy---measured by bias and MSE---is summarized in Figures \ref{Fig:hthree}–\ref{Fig:hsix}.

The main findings are as follows. Overall, the results closely mirror those reported in Section \ref{Sec: Monte Carlo}. The standard Bayesian approach with the horseshoe prior improves estimation for nonzero coefficients relative to the spike-and-slab prior; however, estimation for zero coefficients still exhibits non-negligible RMSE (see Figure \ref{Fig:hfive}). Incorporating our debiasing adjustment markedly enhances performance, yielding improved coverage and substantial reductions in both bias and RMSE across all coefficients and simulation scenarios. Compared with the frequentist benchmark, our debiased Bayesian procedure under the horseshoe prior achieves competitive---and, in several correlated-design settings (S4–S6), superior---performance relative to the debaised LASSO.

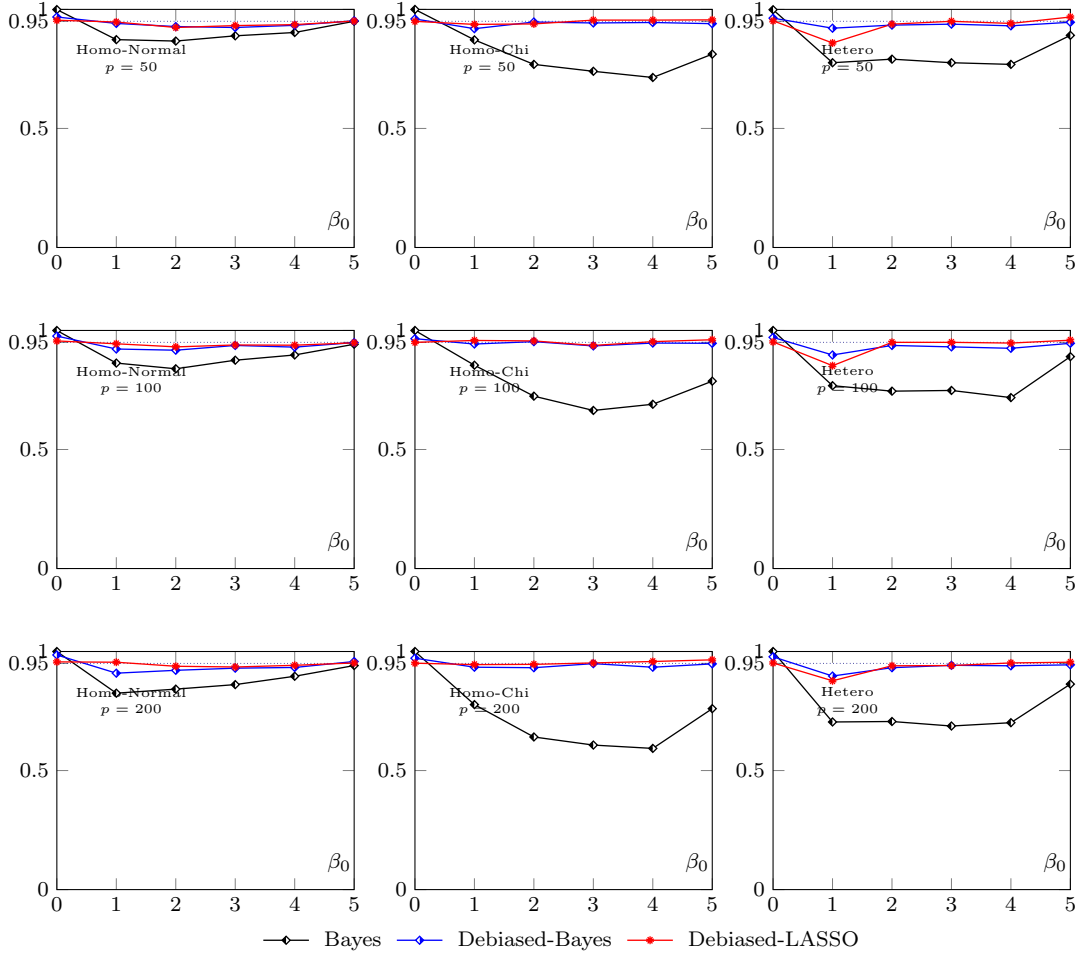
\begin{figure}[!h]
	\centering\scriptsize
	
	\begin{tikzpicture} 
		\begin{groupplot}[group style={group name=myplots,group size=3 by 3,horizontal sep= 0.8cm,vertical sep=1.1cm},
			grid = minor,
			width = 0.375\textwidth,
			xmax=5,xmin=0,
			ymax=1,ymin=0,
			every axis title/.style={below,at={(0.2,0.8)}},
			xlabel=$\beta_0$,
			x label style={at={(axis description cs:0.95,0.04)},anchor=south},
			xtick={0,1,2,3,4,5},
			ytick={0,0.5,0.95,1},
			tick label style={/pgf/number format/fixed},
			legend style={text=black,cells={align=center},row sep = 3pt,legend columns = -1, draw=none,fill=none},
			cycle list={%
				{smooth,tension=0,color=black, mark=halfsquare*,every mark/.append style={rotate=270},mark size=1.5pt,line width=0.5pt},
				{smooth,tension=0,color=blue, mark=halfsquare*,every mark/.append style={rotate=90},mark size=1.5pt,line width=0.5pt}, 
				{smooth,tension=0,color=red, mark=10-pointed star,mark size=1.5pt,line width=0.5pt},
				{smooth,tension=0,color=RoyalBlue1, mark=halfcircle*,every mark/.append style={rotate=90},mark size=1.5pt,line width=0.5pt}, 
				{smooth,tension=0,color=RoyalBlue2, mark=halfcircle*,every mark/.append style={rotate=180},mark size=1.5pt,line width=0.5pt}, 
				{smooth,tension=0,color=RoyalBlue3, mark=halfcircle*,every mark/.append style={rotate=270},mark size=1.5pt,line width=0.5pt},
				{smooth,tension=0,color=RoyalBlue4, mark=halfcircle*,every mark/.append style={rotate=360},mark size=1.5pt,line width=0.5pt},
			}
			]
			\nextgroupplot
			\node[anchor=north] at (axis description cs: 0.25,  0.95) {\fontsize{5}{4}\selectfont \shortstack{
					\\  \\
					Homo-Normal\\
					$p=50$
			}};
			\addplot[smooth,tension=0.5,color=NavyBlue, no markers,line width=0.25pt, densely dotted,forget plot] table[x = beta,y=alpha] from \HCovpfnohomonSigmaone;
			\addplot table[x = beta,y=Horseshoe] from \HCovpfnohomonSigmaone;
			\addplot table[x = beta,y=DebiasHorseshoe] from \HCovpfnohomonSigmaone;
			\addplot table[x = beta,y=DebiasLasso] from \HCovpfnohomonSigmaone;
			
			\nextgroupplot

			\node[anchor=north] at (axis description cs: 0.25,  0.95) {\fontsize{5}{4}\selectfont \shortstack{
					\\  \\
					Homo-Chi\\
					$p=50$
			}};
			\addplot[smooth,tension=0.5,color=NavyBlue, no markers,line width=0.25pt, densely dotted,forget plot] table[x = beta,y=alpha] from \HCovpfnohomochiSigmaone;
			\addplot table[x = beta,y=Horseshoe] from \HCovpfnohomochiSigmaone;
			\addplot table[x = beta,y=DebiasHorseshoe] from \HCovpfnohomochiSigmaone;
			\addplot table[x = beta,y=DebiasLasso] from \HCovpfnohomochiSigmaone;

			\nextgroupplot[legend style = {column sep = 7pt, legend to name = LegendMon17}]
			\addplot[smooth,tension=0.5,color=NavyBlue, no markers,line width=0.25pt, densely dotted,forget plot] table[x = beta,y=alpha] from \HCovpfnoheterSigmaone;
			\addplot table[x = beta,y=Horseshoe] from \HCovpfnoheterSigmaone;
			\addplot table[x = beta,y=DebiasHorseshoe] from \HCovpfnoheterSigmaone;
			\addplot table[x = beta,y=DebiasLasso] from \HCovpfnoheterSigmaone;
			\node[anchor=north] at (axis description cs: 0.25,  0.95) {\fontsize{5}{4}\selectfont \shortstack{
					\\  \\
					Hetero\\
					$p=50$
			}};
			
			
			\nextgroupplot
			\node[anchor=north] at (axis description cs: 0.25,  0.95) {\fontsize{5}{4}\selectfont \shortstack{
					\\  \\
					Homo-Normal\\
					$p=100$
			}};
			\addplot[smooth,tension=0.5,color=NavyBlue, no markers,line width=0.25pt, densely dotted,forget plot] table[x = beta,y=alpha] from \HCovponohomonSigmaone;
			\addplot table[x = beta,y=Horseshoe] from \HCovponohomonSigmaone;
			\addplot table[x = beta,y=DebiasHorseshoe] from \HCovponohomonSigmaone;
			\addplot table[x = beta,y=DebiasLasso] from \HCovponohomonSigmaone;

			\nextgroupplot
			\node[anchor=north] at (axis description cs: 0.25,  0.95) {\fontsize{5}{4}\selectfont \shortstack{
					\\  \\
					Homo-Chi\\
					$p=100$
			}};
			\addplot[smooth,tension=0.5,color=NavyBlue, no markers,line width=0.25pt, densely dotted,forget plot] table[x = beta,y=alpha] from \HCovponohomochiSigmaone;
			\addplot table[x = beta,y=Horseshoe] from \HCovponohomochiSigmaone;
			\addplot table[x = beta,y=DebiasHorseshoe] from \HCovponohomochiSigmaone;
			\addplot table[x = beta,y=DebiasLasso] from \HCovponohomochiSigmaone;
			
			\nextgroupplot[legend style = {column sep = 3.5pt, legend to name = LegendMon27}]
			
			\node[anchor=north] at (axis description cs: 0.25,  0.95) {\fontsize{5}{4}\selectfont \shortstack{
					\\  \\
					Hetero\\
					$p=100$
			}};
			\addplot[smooth,tension=0.5,color=NavyBlue, no markers,line width=0.25pt, densely dotted,forget plot] table[x = beta,y=alpha] from \HCovponoheterSigmaone;
			\addplot table[x = beta,y=Horseshoe] from \HCovponoheterSigmaone;
			\addplot table[x = beta,y=DebiasHorseshoe] from \HCovponoheterSigmaone;
			\addplot table[x = beta,y=DebiasLasso] from \HCovponoheterSigmaone;

			\nextgroupplot
			\node[anchor=north] at (axis description cs: 0.25,  0.95) {\fontsize{5}{4}\selectfont \shortstack{
					\\  \\
					Homo-Normal\\
					$p=200$
			}};
			\addplot[smooth,tension=0.5,color=NavyBlue, no markers,line width=0.25pt, densely dotted,forget plot] table[x = beta,y=alpha] from \ptonohomon;
			\addplot table[x = beta,y=Horseshoe] from \HCovptnohomonSigmaone;
			\addplot table[x = beta,y=DebiasHorseshoe] from \HCovptnohomonSigmaone;
			\addplot table[x = beta,y=DebiasLasso] from \HCovptnohomonSigmaone;

			\nextgroupplot
			\node[anchor=north] at (axis description cs: 0.25,  0.95) {\fontsize{5}{4}\selectfont \shortstack{
					\\  \\
					Homo-Chi\\
					$p=200$
			}};
			\addplot[smooth,tension=0.5,color=NavyBlue, no markers,line width=0.25pt, densely dotted,forget plot] table[x = beta,y=alpha] from \HCovptnohomochiSigmaone;
			\addplot table[x = beta,y=Horseshoe] from \HCovptnohomochiSigmaone;
			\addplot table[x = beta,y=DebiasHorseshoe] from \HCovptnohomochiSigmaone;
			\addplot table[x = beta,y=DebiasLasso] from \HCovptnohomochiSigmaone;
			
			\nextgroupplot[legend style = {column sep = 3.5pt, legend to name = LegendMon37}]
			
			\node[anchor=north] at (axis description cs: 0.25,  0.95) {\fontsize{5}{4}\selectfont \shortstack{
					\\  \\
					Hetero\\
					$p=200$
			}};
			\addplot[smooth,tension=0.5,color=NavyBlue, no markers,line width=0.25pt, densely dotted,forget plot] table[x = beta,y=alpha] from \HCovptnoheterSigmaone;
			\addplot table[x = beta,y=Horseshoe] from \HCovptnoheterSigmaone;
			\addplot table[x = beta,y=DebiasHorseshoe] from \HCovptnoheterSigmaone;
			\addplot table[x = beta,y=DebiasLasso] from \HCovptnoheterSigmaone;
			\addlegendentry{Bayes};
			\addlegendentry{Debiased-Bayes};
			\addlegendentry{Debiased-LASSO};

		\end{groupplot}
		\node at ($(myplots c2r1) + (0,-2.25cm)$) {\ref{LegendMon17}};
		\node at ($(myplots c2r2) + (0,-2.25cm)$) {\ref{LegendMon27}};
		\node at ($(myplots c2r3) + (0,-2.25cm)$) {\ref{LegendMon37}};
	\end{tikzpicture}
	\caption{Coverage rates corresponding to different values of $\beta_0$ under settings S1 (first column), S2 (second column), and S3 (third column). The dashed line represents the 95\% benchmark.} \label{Fig:hone}
\end{figure}

\begin{figure}[!h]
	\centering\scriptsize
	
	\begin{tikzpicture} 
		\begin{groupplot}[group style={group name=myplots,group size=3 by 3,horizontal sep= 0.8cm,vertical sep=1.1cm},
			grid = minor,
			width = 0.375\textwidth,
			xmax=5,xmin=0,
			ymax=1,ymin=0,
			every axis title/.style={below,at={(0.2,0.8)}},
			xlabel=$\beta_0$,
			x label style={at={(axis description cs:0.95,0.04)},anchor=south},
			xtick={0,1,2,3,4,5},
			ytick={0,0.5,0.95,1},
			tick label style={/pgf/number format/fixed},
			legend style={text=black,cells={align=center},row sep = 3pt,legend columns = -1, draw=none,fill=none},
			cycle list={%
				{smooth,tension=0,color=black, mark=halfsquare*,every mark/.append style={rotate=270},mark size=1.5pt,line width=0.5pt},
				{smooth,tension=0,color=blue, mark=halfsquare*,every mark/.append style={rotate=90},mark size=1.5pt,line width=0.5pt}, 
				{smooth,tension=0,color=red, mark=10-pointed star,mark size=1.5pt,line width=0.5pt},
				{smooth,tension=0,color=RoyalBlue1, mark=halfcircle*,every mark/.append style={rotate=90},mark size=1.5pt,line width=0.5pt}, 
				{smooth,tension=0,color=RoyalBlue2, mark=halfcircle*,every mark/.append style={rotate=180},mark size=1.5pt,line width=0.5pt}, 
				{smooth,tension=0,color=RoyalBlue3, mark=halfcircle*,every mark/.append style={rotate=270},mark size=1.5pt,line width=0.5pt},
				{smooth,tension=0,color=RoyalBlue4, mark=halfcircle*,every mark/.append style={rotate=360},mark size=1.5pt,line width=0.5pt},
			}
			]
			\nextgroupplot
			\node[anchor=north] at (axis description cs: 0.25,  0.95) {\fontsize{5}{4}\selectfont \shortstack{
					\\  \\
					Homo-Normal\\
					$p=50$
			}};
			\addplot[smooth,tension=0.5,color=NavyBlue, no markers,line width=0.25pt, densely dotted,forget plot] table[x = beta,y=alpha] from \HCovpfnohomonSigmatwo;
			\addplot table[x = beta,y=Horseshoe] from \HCovpfnohomonSigmatwo;
			\addplot table[x = beta,y=DebiasHorseshoe] from \HCovpfnohomonSigmatwo;
			\addplot table[x = beta,y=DebiasLasso] from \HCovpfnohomonSigmatwo;
			
			\nextgroupplot

			\node[anchor=north] at (axis description cs: 0.25,  0.95) {\fontsize{5}{4}\selectfont \shortstack{
					\\  \\
					Homo-Chi\\
					$p=50$
			}};
			\addplot[smooth,tension=0.5,color=NavyBlue, no markers,line width=0.25pt, densely dotted,forget plot] table[x = beta,y=alpha] from \HCovpfnohomochiSigmatwo;
			\addplot table[x = beta,y=Horseshoe] from \HCovpfnohomochiSigmatwo;
			\addplot table[x = beta,y=DebiasHorseshoe] from \HCovpfnohomochiSigmatwo;
			\addplot table[x = beta,y=DebiasLasso] from \HCovpfnohomochiSigmatwo;

			\nextgroupplot[legend style = {column sep = 7pt, legend to name = LegendMon18}]
			\addplot[smooth,tension=0.5,color=NavyBlue, no markers,line width=0.25pt, densely dotted,forget plot] table[x = beta,y=alpha] from \HCovpfnoheterSigmatwo;
			\addplot table[x = beta,y=Horseshoe] from \HCovpfnoheterSigmatwo;
			\addplot table[x = beta,y=DebiasHorseshoe] from \HCovpfnoheterSigmatwo;
			\addplot table[x = beta,y=DebiasLasso] from \HCovpfnoheterSigmatwo;
			\node[anchor=north] at (axis description cs: 0.25,  0.95) {\fontsize{5}{4}\selectfont \shortstack{
					\\  \\
					Hetero\\
					$p=50$
			}};
			
			
			\nextgroupplot
			\node[anchor=north] at (axis description cs: 0.25,  0.95) {\fontsize{5}{4}\selectfont \shortstack{
					\\  \\
					Homo-Normal\\
					$p=100$
			}};
			\addplot[smooth,tension=0.5,color=NavyBlue, no markers,line width=0.25pt, densely dotted,forget plot] table[x = beta,y=alpha] from \HCovponohomonSigmatwo;
			\addplot table[x = beta,y=Horseshoe] from \HCovponohomonSigmatwo;
			\addplot table[x = beta,y=DebiasHorseshoe] from \HCovponohomonSigmatwo;
			\addplot table[x = beta,y=DebiasLasso] from \HCovponohomonSigmatwo;

			\nextgroupplot
			\node[anchor=north] at (axis description cs: 0.25,  0.95) {\fontsize{5}{4}\selectfont \shortstack{
					\\  \\
					Homo-Chi\\
					$p=100$
			}};
			\addplot[smooth,tension=0.5,color=NavyBlue, no markers,line width=0.25pt, densely dotted,forget plot] table[x = beta,y=alpha] from \HCovponohomochiSigmatwo;
			\addplot table[x = beta,y=Horseshoe] from \HCovponohomochiSigmatwo;
			\addplot table[x = beta,y=DebiasHorseshoe] from \HCovponohomochiSigmatwo;
			\addplot table[x = beta,y=DebiasLasso] from \HCovponohomochiSigmatwo;
			
			\nextgroupplot[legend style = {column sep = 3.5pt, legend to name = LegendMon28}]
			
			\node[anchor=north] at (axis description cs: 0.25,  0.95) {\fontsize{5}{4}\selectfont \shortstack{
					\\  \\
					Hetero\\
					$p=100$
			}};
			\addplot[smooth,tension=0.5,color=NavyBlue, no markers,line width=0.25pt, densely dotted,forget plot] table[x = beta,y=alpha] from \HCovponoheterSigmatwo;
			\addplot table[x = beta,y=Horseshoe] from \HCovponoheterSigmatwo;
			\addplot table[x = beta,y=DebiasHorseshoe] from \HCovponoheterSigmatwo;
			\addplot table[x = beta,y=DebiasLasso] from \HCovponoheterSigmatwo;

			\nextgroupplot
			\node[anchor=north] at (axis description cs: 0.25,  0.95) {\fontsize{5}{4}\selectfont \shortstack{
					\\  \\
					Homo-Normal\\
					$p=200$
			}};
			\addplot[smooth,tension=0.5,color=NavyBlue, no markers,line width=0.25pt, densely dotted,forget plot] table[x = beta,y=alpha] from \ptonohomon;
			\addplot table[x = beta,y=Horseshoe] from \HCovptnohomonSigmatwo;
			\addplot table[x = beta,y=DebiasHorseshoe] from \HCovptnohomonSigmatwo;
			\addplot table[x = beta,y=DebiasLasso] from \HCovptnohomonSigmatwo;

			\nextgroupplot
			\node[anchor=north] at (axis description cs: 0.25,  0.95) {\fontsize{5}{4}\selectfont \shortstack{
					\\  \\
					Homo-Chi\\
					$p=200$
			}};
			\addplot[smooth,tension=0.5,color=NavyBlue, no markers,line width=0.25pt, densely dotted,forget plot] table[x = beta,y=alpha] from \HCovptnohomochiSigmatwo;
			\addplot table[x = beta,y=Horseshoe] from \HCovptnohomochiSigmatwo;
			\addplot table[x = beta,y=DebiasHorseshoe] from \HCovptnohomochiSigmatwo;
			\addplot table[x = beta,y=DebiasLasso] from \HCovptnohomochiSigmatwo;
			
			\nextgroupplot[legend style = {column sep = 3.5pt, legend to name = LegendMon38}]
			
			\node[anchor=north] at (axis description cs: 0.25,  0.95) {\fontsize{5}{4}\selectfont \shortstack{
					\\  \\
					Hetero\\
					$p=200$
			}};
			\addplot[smooth,tension=0.5,color=NavyBlue, no markers,line width=0.25pt, densely dotted,forget plot] table[x = beta,y=alpha] from \HCovptnoheterSigmatwo;
			\addplot table[x = beta,y=Horseshoe] from \HCovptnoheterSigmatwo;
			\addplot table[x = beta,y=DebiasHorseshoe] from \HCovptnoheterSigmatwo;
			\addplot table[x = beta,y=DebiasLasso] from \HCovptnoheterSigmatwo;
			\addlegendentry{Bayes};
			\addlegendentry{Debiased-Bayes};
			\addlegendentry{Debiased-LASSO};

		\end{groupplot}
		\node at ($(myplots c2r1) + (0,-2.25cm)$) {\ref{LegendMon18}};
		\node at ($(myplots c2r2) + (0,-2.25cm)$) {\ref{LegendMon28}};
		\node at ($(myplots c2r3) + (0,-2.25cm)$) {\ref{LegendMon38}};
	\end{tikzpicture}
	\caption{Coverage rates corresponding to different values of $\beta_0$ under settings S1 (first column), S2 (second column), and S3 (third column). The dashed line represents the 95\% benchmark.}\label{Fig:htwo}
\end{figure}

\begin{figure}[!h]
	\centering\scriptsize
	\begin{tikzpicture} 
		\begin{groupplot}[group style={group name=myplots,group size=3 by 3,horizontal sep= 0.8cm,vertical sep=1.1cm},
			grid = minor,
			width = 0.375\textwidth,
			xmax=5,xmin=0,
			ymax=1,ymin=0,
			every axis title/.style={below,at={(0.2,0.8)}},
			xlabel=$\beta_0$,
			x label style={at={(axis description cs:0.95,0.04)},anchor=south},
			xtick={0,1,2,3,4,5},
			ytick={0,0.5,1,1.5},
			tick label style={/pgf/number format/fixed},
			legend style={text=black,cells={align=center},row sep = 3pt,legend columns = -1, draw=none,fill=none},
			cycle list={%
				{smooth,tension=0,color=black, mark=halfsquare*,every mark/.append style={rotate=270},mark size=1.5pt,line width=0.5pt},
				{smooth,tension=0,color=blue, mark=halfsquare*,every mark/.append style={rotate=90},mark size=1.5pt,line width=0.5pt}, 
				{smooth,tension=0,color=red, mark=10-pointed star,mark size=1.5pt,line width=0.5pt},
				{smooth,tension=0,color=RoyalBlue1, mark=halfcircle*,every mark/.append style={rotate=90},mark size=1.5pt,line width=0.5pt}, 
				{smooth,tension=0,color=RoyalBlue2, mark=halfcircle*,every mark/.append style={rotate=180},mark size=1.5pt,line width=0.5pt}, 
				{smooth,tension=0,color=RoyalBlue3, mark=halfcircle*,every mark/.append style={rotate=270},mark size=1.5pt,line width=0.5pt},
				{smooth,tension=0,color=RoyalBlue4, mark=halfcircle*,every mark/.append style={rotate=360},mark size=1.5pt,line width=0.5pt},
			}
			]
			\nextgroupplot
			\node[anchor=north] at (axis description cs: 0.25,  0.95) {\fontsize{5}{4}\selectfont \shortstack{
					\\  \\
					Homo-Normal\\
					$p=50$
			}};
			\addplot table[x = beta,y=Horseshoe] from \HBiaspfnohomonSigmaone;
			\addplot table[x = beta,y=DebiasHorseshoe] from \HBiaspfnohomonSigmaone;
			\addplot table[x = beta,y=DebiasLasso] from \HBiaspfnohomonSigmaone;
			
			\nextgroupplot

			\node[anchor=north] at (axis description cs: 0.25,  0.95) {\fontsize{5}{4}\selectfont \shortstack{
					\\  \\
					Homo-Chi\\
					$p=50$
			}};
			\addplot table[x = beta,y=Horseshoe] from \HBiaspfnohomochiSigmaone;
			\addplot table[x = beta,y=DebiasHorseshoe] from \HBiaspfnohomochiSigmaone;
			\addplot table[x = beta,y=DebiasLasso] from \HBiaspfnohomochiSigmaone;

			\nextgroupplot[legend style = {column sep = 7pt, legend to name = LegendMon19}]
			\addplot table[x = beta,y=Horseshoe] from \HBiaspfnoheterSigmaone;
			\addplot table[x = beta,y=DebiasHorseshoe] from \HBiaspfnoheterSigmaone;
			\addplot table[x = beta,y=DebiasLasso] from \HBiaspfnoheterSigmaone;
			\node[anchor=north] at (axis description cs: 0.25,  0.95) {\fontsize{5}{4}\selectfont \shortstack{
					\\  \\
					Hetero\\
					$p=50$
			}};
			
			\nextgroupplot
			\node[anchor=north] at (axis description cs: 0.25,  0.95) {\fontsize{5}{4}\selectfont \shortstack{
					\\  \\
					Homo-Normal\\
					$p=100$
			}};
			\addplot table[x = beta,y=Horseshoe] from \HBiasponohomonSigmaone;
			\addplot table[x = beta,y=DebiasHorseshoe] from \HBiasponohomonSigmaone;
			\addplot table[x = beta,y=DebiasLasso] from \HBiasponohomonSigmaone;

			\nextgroupplot
			\node[anchor=north] at (axis description cs: 0.25,  0.95) {\fontsize{5}{4}\selectfont \shortstack{
					\\  \\
					Homo-Chi\\
					$p=100$
			}};
			\addplot table[x = beta,y=Horseshoe] from \HBiasponohomochiSigmaone;
			\addplot table[x = beta,y=DebiasHorseshoe] from \HBiasponohomochiSigmaone;
			\addplot table[x = beta,y=DebiasLasso] from \HBiasponohomochiSigmaone;
			
			\nextgroupplot[legend style = {column sep = 3.5pt, legend to name = LegendMon29}]
			
			\node[anchor=north] at (axis description cs: 0.25,  0.95) {\fontsize{5}{4}\selectfont \shortstack{
					\\  \\
					Hetero\\
					$p=100$
			}};
			\addplot table[x = beta,y=Horseshoe] from \HBiasponoheterSigmaone;
			\addplot table[x = beta,y=DebiasHorseshoe] from \HBiasponoheterSigmaone;
			\addplot table[x = beta,y=DebiasLasso] from \HBiasponoheterSigmaone;
			
			\nextgroupplot
			\node[anchor=north] at (axis description cs: 0.25,  0.95) {\fontsize{5}{4}\selectfont \shortstack{
					\\  \\
					Homo-Normal\\
					$p=200$
			}};
			
			\addplot table[x = beta,y=Horseshoe] from \HBiasptnohomonSigmaone;
			\addplot table[x = beta,y=DebiasHorseshoe] from \HBiasptnohomonSigmaone;
			\addplot table[x = beta,y=DebiasLasso] from \HBiasptnohomonSigmaone;
			
			\nextgroupplot
			\node[anchor=north] at (axis description cs: 0.25,  0.95) {\fontsize{5}{4}\selectfont \shortstack{
					\\  \\
					Homo-Chi\\
					$p=200$
			}};
			\addplot table[x = beta,y=Horseshoe] from \HBiasptnohomochiSigmaone;
			\addplot table[x = beta,y=DebiasHorseshoe] from \HBiasptnohomochiSigmaone;
			\addplot table[x = beta,y=DebiasLasso] from \HBiasptnohomochiSigmaone;
			
			\nextgroupplot[legend style = {column sep = 3.5pt, legend to name = LegendMon39}]
			
			\node[anchor=north] at (axis description cs: 0.25,  0.95) {\fontsize{5}{4}\selectfont \shortstack{
					\\  \\
					Heter\\
					$p=200$
			}};
			
			\addplot table[x = beta,y=Horseshoe] from \HBiasptnoheterSigmaone;
			\addplot table[x = beta,y=DebiasHorseshoe] from \HBiasptnoheterSigmaone;
			\addplot table[x = beta,y=DebiasLasso] from \HBiasptnoheterSigmaone;
			\addlegendentry{Bayes};
			\addlegendentry{Debiased-Bayes};
			\addlegendentry{Debiased-LASSO};

		\end{groupplot}
		\node at ($(myplots c2r1) + (0,-2.25cm)$) {\ref{LegendMon19}};
		\node at ($(myplots c2r2) + (0,-2.25cm)$) {\ref{LegendMon29}};
		\node at ($(myplots c2r3) + (0,-2.25cm)$) {\ref{LegendMon39}};
	\end{tikzpicture}
	\caption{Bias corresponding to different values of $\beta_0$ under settings S1 (first column), S2 (second column), and S3 (third column).}\label{Fig:hthree}
\end{figure}

\begin{figure}[!h]
	\centering\scriptsize
	\begin{tikzpicture} 
		\begin{groupplot}[group style={group name=myplots,group size=3 by 3,horizontal sep= 0.8cm,vertical sep=1.1cm},
			grid = minor,
			width = 0.375\textwidth,
			xmax=5,xmin=0,
			ymax=1,ymin=0,
			every axis title/.style={below,at={(0.2,0.8)}},
			xlabel=$\beta_0$,
			x label style={at={(axis description cs:0.95,0.04)},anchor=south},
			xtick={0,1,2,3,4,5},
			ytick={0,0.5,1,1.5},
			tick label style={/pgf/number format/fixed},
			legend style={text=black,cells={align=center},row sep = 3pt,legend columns = -1, draw=none,fill=none},
			cycle list={%
				{smooth,tension=0,color=black, mark=halfsquare*,every mark/.append style={rotate=270},mark size=1.5pt,line width=0.5pt},
				{smooth,tension=0,color=blue, mark=halfsquare*,every mark/.append style={rotate=90},mark size=1.5pt,line width=0.5pt}, 
				{smooth,tension=0,color=red, mark=10-pointed star,mark size=1.5pt,line width=0.5pt},
				{smooth,tension=0,color=RoyalBlue1, mark=halfcircle*,every mark/.append style={rotate=90},mark size=1.5pt,line width=0.5pt}, 
				{smooth,tension=0,color=RoyalBlue2, mark=halfcircle*,every mark/.append style={rotate=180},mark size=1.5pt,line width=0.5pt}, 
				{smooth,tension=0,color=RoyalBlue3, mark=halfcircle*,every mark/.append style={rotate=270},mark size=1.5pt,line width=0.5pt},
				{smooth,tension=0,color=RoyalBlue4, mark=halfcircle*,every mark/.append style={rotate=360},mark size=1.5pt,line width=0.5pt},
			}
			]
			\nextgroupplot
			\node[anchor=north] at (axis description cs: 0.25,  0.95) {\fontsize{5}{4}\selectfont \shortstack{
					\\  \\
					Homo-Normal\\
					$p=50$
			}};
			\addplot table[x = beta,y=Horseshoe] from \HBiaspfnohomonSigmatwo;
			\addplot table[x = beta,y=DebiasHorseshoe] from \HBiaspfnohomonSigmatwo;
			\addplot table[x = beta,y=DebiasLasso] from \HBiaspfnohomonSigmatwo;
			
			\nextgroupplot

			\node[anchor=north] at (axis description cs: 0.25,  0.95) {\fontsize{5}{4}\selectfont \shortstack{
					\\  \\
					Homo-Chi\\
					$p=50$
			}};
			\addplot table[x = beta,y=Horseshoe] from \HBiaspfnohomochiSigmatwo;
			\addplot table[x = beta,y=DebiasHorseshoe] from \HBiaspfnohomochiSigmatwo;
			\addplot table[x = beta,y=DebiasLasso] from \HBiaspfnohomochiSigmatwo;

			\nextgroupplot[legend style = {column sep = 7pt, legend to name = LegendMon110}]
			\addplot table[x = beta,y=Horseshoe] from \HBiaspfnoheterSigmatwo;
			\addplot table[x = beta,y=DebiasHorseshoe] from \HBiaspfnoheterSigmatwo;
			\addplot table[x = beta,y=DebiasLasso] from \HBiaspfnoheterSigmatwo;
			\node[anchor=north] at (axis description cs: 0.25,  0.95) {\fontsize{5}{4}\selectfont \shortstack{
					\\  \\
					Hetero\\
					$p=50$
			}};
			
			\nextgroupplot
			\node[anchor=north] at (axis description cs: 0.25,  0.95) {\fontsize{5}{4}\selectfont \shortstack{
					\\  \\
					Homo-Normal\\
					$p=100$
			}};
			\addplot table[x = beta,y=Horseshoe] from \HBiasponohomonSigmatwo;
			\addplot table[x = beta,y=DebiasHorseshoe] from \HBiasponohomonSigmatwo;
			\addplot table[x = beta,y=DebiasLasso] from \HBiasponohomonSigmatwo;

			\nextgroupplot
			\node[anchor=north] at (axis description cs: 0.25,  0.95) {\fontsize{5}{4}\selectfont \shortstack{
					\\  \\
					Homo-Chi\\
					$p=100$
			}};
			\addplot table[x = beta,y=Horseshoe] from \HBiasponohomochiSigmatwo;
			\addplot table[x = beta,y=DebiasHorseshoe] from \HBiasponohomochiSigmatwo;
			\addplot table[x = beta,y=DebiasLasso] from \HBiasponohomochiSigmatwo;
			
			\nextgroupplot[legend style = {column sep = 3.5pt, legend to name = LegendMon210}]
			
			\node[anchor=north] at (axis description cs: 0.25,  0.95) {\fontsize{5}{4}\selectfont \shortstack{
					\\  \\
					Hetero\\
					$p=100$
			}};
			\addplot table[x = beta,y=Horseshoe] from \HBiasponoheterSigmatwo;
			\addplot table[x = beta,y=DebiasHorseshoe] from \HBiasponoheterSigmatwo;
			\addplot table[x = beta,y=DebiasLasso] from \HBiasponoheterSigmatwo;
			
			\nextgroupplot
			\node[anchor=north] at (axis description cs: 0.25,  0.95) {\fontsize{5}{4}\selectfont \shortstack{
					\\  \\
					Homo-Normal\\
					$p=200$
			}};
			
			\addplot table[x = beta,y=Horseshoe] from \HBiasptnohomonSigmatwo;
			\addplot table[x = beta,y=DebiasHorseshoe] from \HBiasptnohomonSigmatwo;
			\addplot table[x = beta,y=DebiasLasso] from \HBiasptnohomonSigmatwo;
			
			\nextgroupplot
			\node[anchor=north] at (axis description cs: 0.25,  0.95) {\fontsize{5}{4}\selectfont \shortstack{
					\\  \\
					Homo-Chi\\
					$p=200$
			}};
			\addplot table[x = beta,y=Horseshoe] from \HBiasptnohomochiSigmatwo;
			\addplot table[x = beta,y=DebiasHorseshoe] from \HBiasptnohomochiSigmatwo;
			\addplot table[x = beta,y=DebiasLasso] from \HBiasptnohomochiSigmatwo;
			
			\nextgroupplot[legend style = {column sep = 3.5pt, legend to name = LegendMon310}]
			
			\node[anchor=north] at (axis description cs: 0.25,  0.95) {\fontsize{5}{4}\selectfont \shortstack{
					\\  \\
					Heter\\
					$p=200$
			}};
			
			\addplot table[x = beta,y=Horseshoe] from \HBiasptnoheterSigmatwo;
			\addplot table[x = beta,y=DebiasHorseshoe] from \HBiasptnoheterSigmatwo;
			\addplot table[x = beta,y=DebiasLasso] from \HBiasptnoheterSigmatwo;
			\addlegendentry{Bayes};
			\addlegendentry{Debiased-Bayes};
			\addlegendentry{Debiased-LASSO};

		\end{groupplot}
		\node at ($(myplots c2r1) + (0,-2.25cm)$) {\ref{LegendMon110}};
		\node at ($(myplots c2r2) + (0,-2.25cm)$) {\ref{LegendMon210}};
		\node at ($(myplots c2r3) + (0,-2.25cm)$) {\ref{LegendMon310}};
	\end{tikzpicture}
	\caption{Bias corresponding to different values of $\beta_0$ under settings S1 (first column), S2 (second column), and S3 (third column).}\label{Fig:hfour}
\end{figure}

\begin{figure}[!h]
	\centering\scriptsize
	\begin{tikzpicture} 
		\begin{groupplot}[group style={group name=myplots,group size=3 by 3,horizontal sep= 0.8cm,vertical sep=1.1cm},
			grid = minor,
			width = 0.375\textwidth,
			xmax=5,xmin=0,
			ymax=1.5,ymin=0,
			every axis title/.style={below,at={(0.2,0.8)}},
			xlabel=$\beta_0$,
			x label style={at={(axis description cs:0.95,0.04)},anchor=south},
			xtick={0,1,2,3,4,5},
			ytick={0,0.5,1,1.5},
			tick label style={/pgf/number format/fixed},
			legend style={text=black,cells={align=center},row sep = 3pt,legend columns = -1, draw=none,fill=none},
			cycle list={%
				{smooth,tension=0,color=black, mark=halfsquare*,every mark/.append style={rotate=270},mark size=1.5pt,line width=0.5pt},
				{smooth,tension=0,color=blue, mark=halfsquare*,every mark/.append style={rotate=90},mark size=1.5pt,line width=0.5pt}, 
				{smooth,tension=0,color=red, mark=10-pointed star,mark size=1.5pt,line width=0.5pt},
				{smooth,tension=0,color=RoyalBlue1, mark=halfcircle*,every mark/.append style={rotate=90},mark size=1.5pt,line width=0.5pt}, 
				{smooth,tension=0,color=RoyalBlue2, mark=halfcircle*,every mark/.append style={rotate=180},mark size=1.5pt,line width=0.5pt}, 
				{smooth,tension=0,color=RoyalBlue3, mark=halfcircle*,every mark/.append style={rotate=270},mark size=1.5pt,line width=0.5pt},
				{smooth,tension=0,color=RoyalBlue4, mark=halfcircle*,every mark/.append style={rotate=360},mark size=1.5pt,line width=0.5pt},
			}
			]
			\nextgroupplot
			\node[anchor=north] at (axis description cs: 0.25,  0.95) {\fontsize{5}{4}\selectfont \shortstack{
					\\  \\
					Homo-Normal\\
					$p=50$
			}};
			\addplot table[x = beta,y=Horseshoe] from \HRMSEpfnohomonSigmaone;
			\addplot table[x = beta,y=DebiasHorseshoe] from \HRMSEpfnohomonSigmaone;
			\addplot table[x = beta,y=DebiasLasso] from \HRMSEpfnohomonSigmaone;
			
			\nextgroupplot

			\node[anchor=north] at (axis description cs: 0.25,  0.95) {\fontsize{5}{4}\selectfont \shortstack{
					\\  \\
					Homo-Chi\\
					$p=50$
			}};
			\addplot table[x = beta,y=Horseshoe] from \HRMSEpfnohomochiSigmaone;
			\addplot table[x = beta,y=DebiasHorseshoe] from \HRMSEpfnohomochiSigmaone;
			\addplot table[x = beta,y=DebiasLasso] from \HRMSEpfnohomochiSigmaone;
			
			\nextgroupplot[legend style = {column sep = 7pt, legend to name = LegendMon111}]
			\addplot table[x = beta,y=Horseshoe] from \HRMSEpfnoheterSigmaone;
			\addplot table[x = beta,y=DebiasHorseshoe] from \HRMSEpfnoheterSigmaone;
			\addplot table[x = beta,y=DebiasLasso] from \HRMSEpfnoheterSigmaone;
			\node[anchor=north] at (axis description cs: 0.25,  0.95) {\fontsize{5}{4}\selectfont \shortstack{
					\\  \\
					Hetero\\
					$p=50$
			}};

			\nextgroupplot
			\node[anchor=north] at (axis description cs: 0.25,  0.95) {\fontsize{5}{4}\selectfont \shortstack{
					\\  \\
					Homo-Normal\\
					$p=100$
			}};
			\addplot table[x = beta,y=Horseshoe] from \HRMSEponohomonSigmaone;
			\addplot table[x = beta,y=DebiasHorseshoe] from \HRMSEponohomonSigmaone;
			\addplot table[x = beta,y=DebiasLasso] from \HRMSEponohomonSigmaone;

			\nextgroupplot
			\node[anchor=north] at (axis description cs: 0.25,  0.95) {\fontsize{5}{4}\selectfont \shortstack{
					\\  \\
					Homo-Chi\\
					$p=100$
			}};
			\addplot table[x = beta,y=Horseshoe] from \HRMSEponohomochiSigmaone;
			\addplot table[x = beta,y=DebiasHorseshoe] from \HRMSEponohomochiSigmaone;
			\addplot table[x = beta,y=DebiasLasso] from \HRMSEponohomochiSigmaone;
			
			\nextgroupplot[legend style = {column sep = 3.5pt, legend to name = LegendMon211}]
			
			\node[anchor=north] at (axis description cs: 0.25,  0.95) {\fontsize{5}{4}\selectfont \shortstack{
					\\  \\
					Hetero\\
					$p=100$
			}};
			\addplot table[x = beta,y=Horseshoe] from \HRMSEponoheterSigmaone;
			\addplot table[x = beta,y=DebiasHorseshoe] from \HRMSEponoheterSigmaone;
			\addplot table[x = beta,y=DebiasLasso] from \HRMSEponoheterSigmaone;
			
			\nextgroupplot
			\node[anchor=north] at (axis description cs: 0.25,  0.95) {\fontsize{5}{4}\selectfont \shortstack{
					\\  \\
					Homo-Normal\\
					$p=200$
			}};
			
			\addplot table[x = beta,y=Horseshoe] from \HRMSEptnohomonSigmaone;
			\addplot table[x = beta,y=DebiasHorseshoe] from \HRMSEptnohomonSigmaone;
			\addplot table[x = beta,y=DebiasLasso] from \HRMSEptnohomonSigmaone;

			\nextgroupplot
			\node[anchor=north] at (axis description cs: 0.25,  0.95) {\fontsize{5}{4}\selectfont \shortstack{
					\\  \\
					Homo-Chi\\
					$p=200$
			}};
			\addplot table[x = beta,y=Horseshoe] from \HRMSEptnohomochiSigmaone;
			\addplot table[x = beta,y=DebiasHorseshoe] from \HRMSEptnohomochiSigmaone;
			\addplot table[x = beta,y=DebiasLasso] from \HRMSEptnohomochiSigmaone;
			
			\nextgroupplot[legend style = {column sep = 3.5pt, legend to name = LegendMon311}]
			
			\node[anchor=north] at (axis description cs: 0.25,  0.95) {\fontsize{5}{4}\selectfont \shortstack{
					\\  \\
					Hetero\\
					$p=200$
			}};
			
			\addplot table[x = beta,y=Horseshoe] from \HRMSEptnoheterSigmaone;
			\addplot table[x = beta,y=DebiasHorseshoe] from \HRMSEptnoheterSigmaone;
			\addplot table[x = beta,y=DebiasLasso] from \HRMSEptnoheterSigmaone;
			\addlegendentry{Bayes};
			\addlegendentry{Debiased-Bayes};
			\addlegendentry{Debiased-LASSO};

		\end{groupplot}
		\node at ($(myplots c2r1) + (0,-2.25cm)$) {\ref{LegendMon111}};
		\node at ($(myplots c2r2) + (0,-2.25cm)$) {\ref{LegendMon211}};
		\node at ($(myplots c2r3) + (0,-2.25cm)$) {\ref{LegendMon311}};
	\end{tikzpicture}
	\caption{RMSE corresponding to different values of $\beta_0$ under settings S1 (first column), S2 (second column), and S3 (third column).}\label{Fig:hfive}
\end{figure}

\begin{figure}[!h]
	\centering\scriptsize
	\begin{tikzpicture} 
		\begin{groupplot}[group style={group name=myplots,group size=3 by 3,horizontal sep= 0.8cm,vertical sep=1.1cm},
			grid = minor,
			width = 0.375\textwidth,
			xmax=5,xmin=0,
			ymax=1,ymin=0,
			every axis title/.style={below,at={(0.2,0.8)}},
			xlabel=$\beta_0$,
			x label style={at={(axis description cs:0.95,0.04)},anchor=south},
			xtick={0,1,2,3,4,5},
			ytick={0,0.5,1,1.5},
			tick label style={/pgf/number format/fixed},
			legend style={text=black,cells={align=center},row sep = 3pt,legend columns = -1, draw=none,fill=none},
			cycle list={%
				{smooth,tension=0,color=black, mark=halfsquare*,every mark/.append style={rotate=270},mark size=1.5pt,line width=0.5pt},
				{smooth,tension=0,color=blue, mark=halfsquare*,every mark/.append style={rotate=90},mark size=1.5pt,line width=0.5pt}, 
				{smooth,tension=0,color=red, mark=10-pointed star,mark size=1.5pt,line width=0.5pt},
				{smooth,tension=0,color=RoyalBlue1, mark=halfcircle*,every mark/.append style={rotate=90},mark size=1.5pt,line width=0.5pt}, 
				{smooth,tension=0,color=RoyalBlue2, mark=halfcircle*,every mark/.append style={rotate=180},mark size=1.5pt,line width=0.5pt}, 
				{smooth,tension=0,color=RoyalBlue3, mark=halfcircle*,every mark/.append style={rotate=270},mark size=1.5pt,line width=0.5pt},
				{smooth,tension=0,color=RoyalBlue4, mark=halfcircle*,every mark/.append style={rotate=360},mark size=1.5pt,line width=0.5pt},
			}
			]
			\nextgroupplot
			\node[anchor=north] at (axis description cs: 0.25,  0.95) {\fontsize{5}{4}\selectfont \shortstack{
					\\  \\
					Homo-Normal\\
					$p=50$
			}};
			\addplot table[x = beta,y=Horseshoe] from \HRMSEpfnohomonSigmatwo;
			\addplot table[x = beta,y=DebiasHorseshoe] from \HRMSEpfnohomonSigmatwo;
			\addplot table[x = beta,y=DebiasLasso] from \HRMSEpfnohomonSigmatwo;
			
			\nextgroupplot

			\node[anchor=north] at (axis description cs: 0.25,  0.95) {\fontsize{5}{4}\selectfont \shortstack{
					\\  \\
					Homo-Chi\\
					$p=50$
			}};
			\addplot table[x = beta,y=Horseshoe] from \HRMSEpfnohomochiSigmatwo;
			\addplot table[x = beta,y=DebiasHorseshoe] from \HRMSEpfnohomochiSigmatwo;
			\addplot table[x = beta,y=DebiasLasso] from \HRMSEpfnohomochiSigmatwo;
			
			\nextgroupplot[legend style = {column sep = 7pt, legend to name = LegendMon112}]
			\addplot table[x = beta,y=Horseshoe] from \HRMSEpfnoheterSigmatwo;
			\addplot table[x = beta,y=DebiasHorseshoe] from \HRMSEpfnoheterSigmatwo;
			\addplot table[x = beta,y=DebiasLasso] from \HRMSEpfnoheterSigmatwo;
			\node[anchor=north] at (axis description cs: 0.25,  0.95) {\fontsize{5}{4}\selectfont \shortstack{
					\\  \\
					Hetero\\
					$p=50$
			}};

			\nextgroupplot
			\node[anchor=north] at (axis description cs: 0.25,  0.95) {\fontsize{5}{4}\selectfont \shortstack{
					\\  \\
					Homo-Normal\\
					$p=100$
			}};
			\addplot table[x = beta,y=Horseshoe] from \HRMSEponohomonSigmatwo;
			\addplot table[x = beta,y=DebiasHorseshoe] from \HRMSEponohomonSigmatwo;
			\addplot table[x = beta,y=DebiasLasso] from \HRMSEponohomonSigmatwo;

			\nextgroupplot
			\node[anchor=north] at (axis description cs: 0.25,  0.95) {\fontsize{5}{4}\selectfont \shortstack{
					\\  \\
					Homo-Chi\\
					$p=100$
			}};
			\addplot table[x = beta,y=Horseshoe] from \HRMSEponohomochiSigmatwo;
			\addplot table[x = beta,y=DebiasHorseshoe] from \HRMSEponohomochiSigmatwo;
			\addplot table[x = beta,y=DebiasLasso] from \HRMSEponohomochiSigmatwo;
			
			\nextgroupplot[legend style = {column sep = 3.5pt, legend to name = LegendMon212}]
			
			\node[anchor=north] at (axis description cs: 0.25,  0.95) {\fontsize{5}{4}\selectfont \shortstack{
					\\  \\
					Hetero\\
					$p=100$
			}};
			\addplot table[x = beta,y=Horseshoe] from \HRMSEponoheterSigmatwo;
			\addplot table[x = beta,y=DebiasHorseshoe] from \HRMSEponoheterSigmatwo;
			\addplot table[x = beta,y=DebiasLasso] from \HRMSEponoheterSigmatwo;
			
			\nextgroupplot
			\node[anchor=north] at (axis description cs: 0.25,  0.95) {\fontsize{5}{4}\selectfont \shortstack{
					\\  \\
					Homo-Normal\\
					$p=200$
			}};
			
			\addplot table[x = beta,y=Horseshoe] from \HRMSEptnohomonSigmatwo;
			\addplot table[x = beta,y=DebiasHorseshoe] from \HRMSEptnohomonSigmatwo;
			\addplot table[x = beta,y=DebiasLasso] from \HRMSEptnohomonSigmatwo;

			\nextgroupplot
			\node[anchor=north] at (axis description cs: 0.25,  0.95) {\fontsize{5}{4}\selectfont \shortstack{
					\\  \\
					Homo-Chi\\
					$p=200$
			}};
			\addplot table[x = beta,y=Horseshoe] from \HRMSEptnohomochiSigmatwo;
			\addplot table[x = beta,y=DebiasHorseshoe] from \HRMSEptnohomochiSigmatwo;
			\addplot table[x = beta,y=DebiasLasso] from \HRMSEptnohomochiSigmatwo;
			
			\nextgroupplot[legend style = {column sep = 3.5pt, legend to name = LegendMon312}]
			
			\node[anchor=north] at (axis description cs: 0.25,  0.95) {\fontsize{5}{4}\selectfont \shortstack{
					\\  \\
					Hetero\\
					$p=200$
			}};
			
			\addplot table[x = beta,y=Horseshoe] from \HRMSEptnoheterSigmatwo;
			\addplot table[x = beta,y=DebiasHorseshoe] from \HRMSEptnoheterSigmatwo;
			\addplot table[x = beta,y=DebiasLasso] from \HRMSEptnoheterSigmatwo;
			\addlegendentry{Bayes};
			\addlegendentry{Debiased-Bayes};
			\addlegendentry{Debiased-LASSO};

		\end{groupplot}
		\node at ($(myplots c2r1) + (0,-2.25cm)$) {\ref{LegendMon112}};
		\node at ($(myplots c2r2) + (0,-2.25cm)$) {\ref{LegendMon212}};
		\node at ($(myplots c2r3) + (0,-2.25cm)$) {\ref{LegendMon312}};
	\end{tikzpicture}
	\caption{RMSE corresponding to different values of $\beta_0$ under settings S1 (first column), S2 (second column), and S3 (third column).}\label{Fig:hsix}
\end{figure}

\end{appendices}
\clearpage

\bibliographystyle{ecta}
\phantomsection
\addcontentsline{toc}{section}{References}
\bibliography{bibliography}

\end{document}